\begin{document}
\newcommand{\commentout}[1]{}

\newcommand{\nwc}{\newcommand}
\newcommand{\bz}{{\mathbf z}}
\newcommand{\sqk}{\sqrt{\ks}}
\newcommand{\sqkone}{\sqrt{|\k\alpha|}}
\newcommand{\sqktwo}{\sqrt{|\k\beta|}}
\newcommand{\invsqkone}{|\k\alpha|^{-1/2}}
\newcommand{\invsqktwo}{|\k\beta|^{-1/2}}
\newcommand{\partz}{\frac{\partial}{\partial z}}
\newcommand{\grady}{\nabla_{\by}}
\newcommand{\gradp}{\nabla_{\bp}}
\newcommand{\gradx}{\nabla_{\bx}}
\newcommand{\invf}{\cF^{-1}_2}
\newcommand{\myphi}{\tilde\Theta_{(\eta,\rho)}}
\newcommand{\minrg}{|\min{(\rho,\phi^{-1})}|}
\newcommand{\al}{\alpha}
\newcommand{\xvec}{\vec{\mathbf x}}
\newcommand{\kvec}{{\vec{\mathbf k}}}
\newcommand{\lt}{\left}
\newcommand{\ksq}{\sqrt{\ks}}
\newcommand{\rt}{\right}
\newcommand{\ga}{\phi}
\newcommand{\vas}{\varepsilon}
\newcommand{\lan}{\left\langle}
\newcommand{\ran}{\right\rangle}
\newcommand{\tvas}{{W_z^\vas}}
\newcommand{\psiep}{{W_z^\vas}}
\newcommand{\wep}{{W^\vas}}
\newcommand{\weptil}{{\tilde{W}^\vas}}
\newcommand{\wepz}{{W_z^\vas}}
\newcommand{\weps}{{W_s^\ep}}
\newcommand{\wepsp}{{W_s^{\ep'}}}
\newcommand{\wepzp}{{W_z^{\vas'}}}
\newcommand{\wepztil}{{\tilde{W}_z^\vas}}
\newcommand{\vvas}{{\tilde{\ml L}_z^\vas}}
\newcommand{\veptil}{{\tilde{\ml L}_z^\vas}}
\newcommand{\vep}{{{ V}_z^\vas}}
\newcommand{\cvc}{{{\ml L}^{\ep*}_z}}
\newcommand{\cvcp}{{{\ml L}^{\ep*'}_z}}
\newcommand{\cvp}{{{\ml L}^{\ep*'}_z}}
\newcommand{\cvtil}{{\tilde{\ml L}^{\ep*}_z}}
\newcommand{\cvtilp}{{\tilde{\ml L}^{\ep*'}_z}}
\newcommand{\vtil}{{\tilde{V}^\ep_z}}
\newcommand{\ktil}{\tilde{K}}
\newcommand{\n}{\nabla}
\newcommand{\tkappa}{\tilde\kappa}
\newcommand{\Om}{{\Omega}}
\newcommand{\bx}{\mb x}
\nwc{\bv}{\mb v}
\newcommand{\br}{\mb r}
\nwc{\bH}{{\mb H}}
\newcommand{\bu}{\mathbf u}
\nwc{\bxp}{{{\mathbf x}}}
\nwc{\byp}{{{\mathbf y}}}
\newcommand{\bD}{\mathbf D}
\nwc{\bS}{\mathbf S}
\newcommand{\bA}{\mathbf \Phi}
\nwc{\cO}{\mathcal{O}}
\nwc{\co}{\mathcal{o}}
\nwc{\bG}{{\mathbf G}}
\nwc{\bF}{{\mathbf F}}
\nwc{\bd}{\hat {\mathbf d}}
\nwc{\bR}{{\mathbf R}}
\nwc{\bh}{\mathbf h}
\nwc{\bj}{{\mb j}}
\newcommand{\bB}{\mathbf B}
\newcommand{\bC}{\mathbf C}
\newcommand{\bp}{\mathbf p}
\newcommand{\bq}{\mathbf q}
\newcommand{\by}{\mathbf y}
\nwc{\bI}{\mathbf I}
\nwc{\bP}{\mathbf P}
\nwc{\bs}{\mathbf s}
\nwc{\bX}{\mathbf X}
\nwc{\bE}{\mathbf E}
\nwc{\bZ}{\mathbf Z}
\newcommand{\pdg}{\bp\cdot\nabla}
\newcommand{\pdgx}{\bp\cdot\nabla_\bx}
\newcommand{\one}{1\hspace{-4.4pt}1}
\newcommand{\corr}{r_{\eta,\rho}}
\newcommand{\rinf}{r_{\eta,\infty}}
\newcommand{\rzero}{r_{0,\rho}}
\newcommand{\rzeroinf}{r_{0,\infty}}
\nwc{\om}{\omega}
\nwc{\thetatil}{{\tilde\theta}}

\nwc{\nwt}{\newtheorem}
\nwc{\xp}{{x^{\perp}}}
\nwc{\yp}{{y^{\perp}}}
\nwt{remark}{Remark}
\nwt{corollary} {Corollary}
\nwt{definition}{Definition} 
\nwc{\ba}{{\mb a}}
\nwc{\bal}{\begin{align}}
\nwc{\be}{{\mathbf e}}
\nwc{\ben}{\begin{equation*}}
\nwc{\bea}{\begin{eqnarray}}
\nwc{\beq}{\begin{eqnarray}}
\nwc{\bean}{\begin{eqnarray*}}
\nwc{\beqn}{\begin{eqnarray*}}
\nwc{\beqast}{\begin{eqnarray*}}
\nwc{\bPsi}{{\mathbf \Psi}}
\nwc{\eal}{\end{align}}
\nwc{\ee}{\end{equation}}
\nwc{\een}{\end{equation*}}
\nwc{\eea}{\end{eqnarray}}
\nwc{\eeq}{\end{eqnarray}}
\nwc{\eean}{\end{eqnarray*}}
\nwc{\eeqn}{\end{eqnarray*}}
\nwc{\eeqast}{\end{eqnarray*}}

\nwc{\bQ}{\mathbf Q}
\nwc{\ep}{\varepsilon}
\nwc{\eps}{\varepsilon}
\nwc{\ept}{\ep }
\nwc{\vrho}{\varrho}
\nwc{\orho}{\bar\varrho}
\nwc{\ou}{\bar u}
\nwc{\vpsi}{\varpsi}
\nwc{\lamb}{\lambda}
\nwc{\Var}{{\rm Var}}

\nwt{proposition}{Proposition}
\nwt{theorem}{Theorem}
\nwt{lemma}{Lemma}
\nwc{\nn}{\nonumber}
\nwc{\mf}{\mathbf}
\nwc{\mb}{\mathbf}
\nwc{\ml}{\mathcal}

\nwc{\IA}{\mathbb{A}} 
\nwc{\bi}{\mathbf i}
\nwc{\bo}{\mathbf o}
\nwc{\IB}{\mathbb{B}}
\nwc{\IC}{\mathbb{C}} 
\nwc{\ID}{\mathbb{D}} 
\nwc{\IM}{\mathbb{M}} 
\nwc{\IP}{\mathbb{P}} 
\nwc{\II}{\mathbb{I}} 
\nwc{\IE}{\mathbb{E}} 
\nwc{\IF}{\mathbb{F}} 
\nwc{\IG}{\mathbb{G}} 
\nwc{\IN}{\mathbb{N}} 
\nwc{\IQ}{\mathbb{Q}} 
\nwc{\IR}{\mathbb{R}} 
\nwc{\IT}{\mathbb{T}} 
\nwc{\IZ}{\mathbb{Z}} 
\nwc{\pdfi}{{f^{\rm i}}}
\nwc{\pdfs}{{f^{\rm s}}}
\nwc{\pdfii}{{f_1^{\rm i}}}
\nwc{\pdfsi}{{f_1^{\rm s}}}
\nwc{\chis}{{\chi^{\rm s}}}
\nwc{\chii}{{\chi^{\rm i}}}
\nwc{\cE}{{\ml E}}
\nwc{\cP}{{\ml P}}
\nwc{\cQ}{{\ml Q}}
\nwc{\cL}{{\ml L}}
\nwc{\cX}{{\ml X}}
\nwc{\cY}{{\ml Y}}
\nwc{\cW}{{\ml W}}
\nwc{\cZ}{{\ml Z}}
\nwc{\cR}{{\ml R}}
\nwc{\cV}{{\ml V}}
\nwc{\cT}{{\ml T}}
\nwc{\crV}{{\ml L}_{(\delta,\rho)}}
\nwc{\cC}{{\ml C}}
\nwc{\cA}{{\ml A}}
\nwc{\cK}{{\ml K}}
\nwc{\cB}{{\ml B}}
\nwc{\cD}{{\ml D}}
\nwc{\cF}{{\ml F}}
\nwc{\cS}{{\ml S}}
\nwc{\cI}{{\ml I}}
\nwc{\cM}{{\ml M}}
\nwc{\cG}{{\ml G}}
\nwc{\cH}{{\ml H}}
\nwc{\bk}{{\mb k}}
\nwc{\cbz}{\overline{\cB}_z}
\nwc{\supp}{{\hbox{supp}(\theta)}}
\nwc{\fR}{\mathfrak{R}}
\nwc{\bY}{\mathbf Y}
\newcommand{\mbr}{\mb r}
\nwc{\pft}{\cF^{-1}_2}
\nwc{\bU}{{\mb U}}
\nwc{\bPhi}{{\mathbf \Phi}}

\title{
The {\em MUSIC} Algorithm for Sparse Objects:  A Compressed Sensing Analysis}
\author{Albert C.  Fannjiang}
\thanks{The research is partially supported by
the NSF grant DMS - 0908535}. 
\email{
fannjiang@math.ucdavis.edu}

       \address{
   Department of Mathematics,
    University of California, Davis, CA 95616-8633}
   
       \begin{abstract}
 The MUSIC algorithm, and  its extension for imaging sparse {\em extended}  objects, with noisy data is analyzed
by compressed sensing  (CS) techniques. 
A thresholding rule is developed to augment the
standard MUSIC algorithm. 
The notion of restricted isometry property (RIP) and
an upper bound on the restricted isometry constant  (RIC)
are employed to establish sufficient conditions for the  exact localization  by MUSIC with or without noise.

In the noiseless case, the sufficient condition gives an upper bound on the numbers of random sampling and incident directions 
necessary for exact localization. In the noisy case,
the sufficient condition assumes additionally an upper
bound for the noise-to-object ratio
in terms of the RIC and the dynamic range of objects. 
This bound points to the superresolution capability
of the MUSIC algorithm. 
Rigorous comparison of performance between MUSIC and
 the CS minimization principle, Basis Pursuit Denoising (BPDN), is given.

In general,   the MUSIC algorithm guarantees  to recover,  with high probability, $s$ scatterers with  $n=\cO(s^2)$ random sampling and incident directions and sufficiently high frequency. 
 
For the  favorable imaging geometry
 where the scatterers are distributed on a transverse plane 
MUSIC guarantees to recover, with high probability,  $s$ scatterers with a median frequency and $n=\cO(s)$ random sampling/incident directions. 
 
Moreover,  for the problems of 
spectral estimation and source localizations 
both BPDN and 
MUSIC guarantee, with high probability, to identify exactly the frequencies
of random signals with the number  $n=\cO(s)$ of sampling  times.
However,   in the absence of
  abundant realizations of signals, BPDN is the preferred
  method for spectral estimation. Indeed, BPDN can
  identify the frequencies  approximately with
  just {\em one}  realization of signals with the recovery error at worst linearly proportional  to the noise level. 

 Numerical results confirm that  BPDN outperforms
 MUSIC in the well-resolved case while the opposite 
 is true for the under-resolved case, giving
 abundant evidences for 
 the superresolution capability of the MUSIC algorithm.
 
 Another advantage of MUSIC over BPDN is the former's flexibility with grid spacing
 and guarantee of {\em approximate} localization
of sufficiently separated objects in an arbitrarily refined 
grid. The localization error is bounded from above by
$\cO(\lambda s)$ for  general configurations and
by $\cO(\lambda)$ for objects distributed in a transverse
plane. 
 

       \end{abstract}
       
       \maketitle
       
     \section{Introduction}
     
     The MUSIC (standing for {\bf MU}ltiple-{\bf Si}gnal-{\bf C}lassification) algorithm is a well-known method in signal processing for estimating
the individual frequencies of multiple time-harmonic signals \cite{Che, The}. Mathematically, 
MUSIC is essentially a method of characterizing the range of the covariance matrix of the signals  (see Section \ref{sec:spec} for details).

MUSIC was originally developed  to estimate the
direction of arrival for 
source localization \cite{Sch}. 
Later,  
the MUSIC algorithm is extended to imaging of  point scatterers
\cite{Dev}.  
A proof of a sufficient condition for
 the exact recovery of the object support in the noiseless case
is given in  \cite{ KG} (see also
\cite{Kir}) which is reproduced in Proposition \ref{prop1}
below.  The performance guarantee  
 is general but qualitative in nature. Neither does it 
 take into account the effect of noise which
 is important for assessing the superresolution effect.

 The main 
purpose of this paper is to give a quantitative performance
evaluation for the MUSIC algorithm in terms of 
 how many data are needed and how they may  be 
collected in order to exactly recover the locations of
given (large) number of objects, be they sources, scatterers
or frequencies as well as how much noise the MUSIC algorithm can tolerate.
Our approach is based on recent advances in
compressed  sensing  theory (\cite{BDE, Can, Rau} and references therein)  and
applications to imaging (\cite{cis-simo, cis-siso, cs-par} and
references therein). 

A main result for localizing scatterers obtained in the present paper has the following flavor (more details later): Let $\xi_{\rm max}$ and 
 $\xi_{\rm min}$ be, respectively, 
the strengths of the strongest and weakest (nonzero) scatterers, $\delta^\pm_s$
the (generalized) restricted isometry constants (RIC) of order $s$
and $\ep$ the level of noise in the data. 
If
the noise-to-scatterer ratio ({\rm NSR}) obeys the upper bound
\beq
 \label{251}
 {\ep\over \xi_{\rm min}}  < 
 \sqrt{{(1+\delta^+_s)^2}{\xi^2_{\rm max}\over \xi^2_{\rm min}} +(1-\delta^-_s)^2\Delta}
 -{(1+\delta^+_s)}{\xi_{\rm max}\over \xi_{\rm min}}
 \eeq
 where 
 \beqn
 \Delta&=&{1\over 2}-{1\over 2}{1\over \sqrt{\sqrt{2}\Gamma_\cS+1}}
 \eeqn
 and $\Gamma_\cS$ (defined in (\ref{90})) is a measure
 of the independence of the column vectors {\em outside
 the object support}  from the range
 of the data matrix
  then the MUSIC imaging function $J^\ep$  with 
 the thresholding rule
 \beq
 \label{a251}
 \lt\{ \br\in \cK: J^\ep(\br) \geq  2 \lt(1-{\delta^-_{s+1}(1+\delta^+_s)\over 2+\delta_s^+-\delta^-_{s+1}}\rt)^{-2}\rt\}
  \eeq
  recovers  exactly
 the locations of 
the $s$ scatterers (cf. Theorem \ref{cor2.1}, Section 3). 
Compressed sensing theory comes into play in addressing 
the dependence of RIC on the frequency, the number and distribution of random sampling directions (or sensors), the number of
scatterers
and the  inter-scatterer distances.

In the super-resolution regime, the
$\delta_s^-$ tends to $1$ and $\Gamma_\cS$ tends to zero,
rendering 
the right hand side of (\ref{251}) approximately 
\beq
\label{super'}
{(1-\delta_s^-)^2\Delta\over 
2(1+\delta_s^+)\xi_{\rm max}/\xi_{\rm min}}
\eeq
where $\xi_{\rm max}/\xi_{\rm min}$ is the dynamic range
of the scatterers. 
For a NSR  smaller than (\ref{super'})
the $s$ scatterers can still be perfectly localized by
the MUSIC algorithm with the thresholding rule (\ref{a251})
where the threshold is approximately
\[
{2\over (1-\delta_{s+1}^-)^2} \lt({1+\delta_s^+\over 2+\delta_s^+}\rt)^2\gg 1. 
\]
Previous observation \cite{PT} and our numerical results (Section \ref{sec:num}) lend support to this superresolution effect of
the MUSIC algorithm.

First let us review the inverse scattering problem
and the MUSIC imaging method. 
\subsection{Inverse scattering}
     
     \commentout{
  \begin{figure}[t]
\begin{center}
\includegraphics[width=0.5\textwidth]{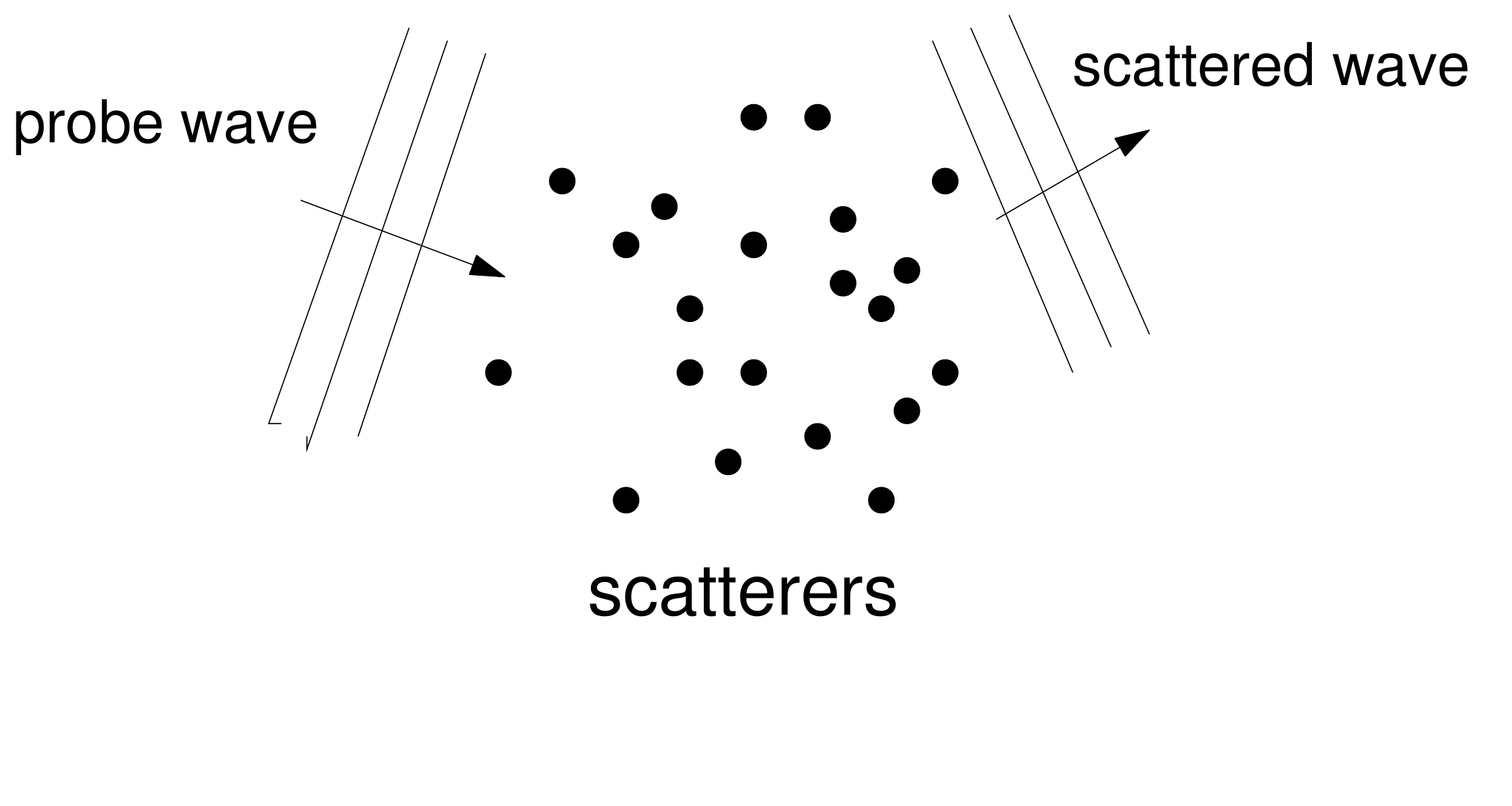}
\end{center}
\caption{Far-field  imaging of discrete  scatterers}
\label{fig-dt}
\end{figure}
}
 
 Consider the scattering of the incident plane wave
 \beq
 u^{\rm i}(\br)=e^{i\om \br\cdot\bd}
 \eeq
 by the  variable refractive index $n^2(\br)=1+\xi(\br)$
 where $\bd$ is the incident direction. 
The scattered field  satisfies 
the Lippmann-Schwinger equation 
\beq
\label{exact'}
u^{\rm s}(\br)&=&\om^2\int_{\IR^d} \xi(\br') 
\lt(u^{\rm i}(\br')+u^{\rm s}(\br')\rt) G(\br, \br')d\br', \quad d=2,3
\eeq
where $G(\br,\br')$ is the Green function
of the operator $-(\Delta+\om^2)$ \cite{KG}. 
We assume that the wave speed is unity and hence 
the frequency equals the wavenumber $\om$.

The scattered field has
the far-field asymptotic
\beq
u^{\rm s}(\br)={e^{i\om |\br|}\over |\br|^{(d-1)/2}}\lt(A
(\hat\br,\bd)+\cO(|\br|^{-1})\rt),\quad \hat\br=\br/|\br|,
\eeq
where the scattering amplitude $A$ is determined by the formula  
\beq
\label{sa}
A(\hat\br,\bd)&=&{\om^2\over 4\pi}
\int_{\IR^d} d\br' \xi(\br') u(\br') e^{-i\om \br'\cdot\hat\br}. 
\eeq
In the Born regime, the total field $u$ on
the right hand side of (\ref{sa}) can be  replaced by the
incident field $u^{\rm i}$. 

The main objective of inverse scattering then is to reconstruct
the medium  inhomogeneities  $\xi$ from the   knowledge
about the scattering amplitude $A(\hat\br,\bd)$.

  \begin{figure}[t]
\begin{center}
\includegraphics[width=0.5\textwidth]{far-IS-discrete.pdf}
\end{center}
\caption{Far-field  imaging geometry}
\end{figure}

Next we recall the MUSIC algorithm as applied to localization of
 point scatterers.
 \subsection{MUSIC for point scatterers}
\label{sec:mf}
\commentout{
We consider the medium with point scatterers located in 
 a square lattice 
\[
\cL=\lt\{\br_i=(x_i,z_i): i=1,...,m\rt\}
\]
of spacing  $\ell$. The total number $m$ of grid points
in $\cL$  is  a perfect square. 
Without loss of generality, assume $x_j=j_1\ell, z_j=j_2\ell $
where $j=(j_1-1)\sqrt{m}+j_2$ and $j_1, j_2=1,...,\sqrt{m}$.
Let  $\xi_{j}, j=1,...,m$ be the strength of
the scatterers.  Let  $
\cS=\lt\{\br_{i_j}=(x_{i_j}, z_{i_j}): j=1,...,s\rt\}$
be the
locations of the scatterers. Hence $\xi_j=0, \forall \br_j\not \in \cS$.  
}

Let  $
\cS=\lt\{\br_{j}: j=1,...,s\rt\}$
be the
locations of the scatterers.
Let  $\xi_{j}\neq 0, j=1,...,s$ be the strength of
the scatterers.  We will make the Born approximation  first
and discuss how to lift this restriction at the end of
the section (Remark \ref{rmk200}).
 For the discrete medium the scattering amplitude
becomes the finite sum 
\beq
A(\hat\br,\bd)&=&{\om^2\over 4\pi}
\sum_{j=1}^s \xi_{j} u^{\rm i}(\br_{j})
 e^{-i\om \br_j\cdot\hat\br_j}
\eeq
under the Born approximation.

Let $\bd_l, l=1,...,m$ and $ \hat\bs_k, k=1,...,n$ be, respectively, the incident and sampling directions.
For each incident field $\bd_l, l=1,...,m,$ the scattering amplitude
is measured in all $n$ directions $\hat\bs_k, k=1,...,n$. The whole measurement
data consist of the scattering amplitudes for all
pairs of $(\bd_l,\hat\bs_k)$. 

Define  the data matrix $\bY=(Y_{k,l})\in \IC^{n\times m}$ 
as
\beq
\label{4'}
Y_{k,l}\sim A(\hat\bs_k,\bd_l),\quad k=1,...,n,\quad
l=1,...,m
\eeq
where we keep open  the option of normalizing $\bY$ in order
to simplify the set-up. 
  The data matrix  is related to the object matrix
   \[
  \bX=\hbox{\rm diag} (\xi_j)\in \IC^{s\times s},\quad j=1,...s
   \]
     by the  measurement matrices  $\bA $ and $\bPsi$ as 
  \beq
  \label{31}
  \bY=\bA \bX \bPsi^*  
  \eeq 
  where  $\bA$ and $\bPsi$ are, respectively, 
     \beq
   \label{entry}
\Phi_{k, j}&=&{1\over \sqrt{n}}e^{-i\om\hat\bs_k\cdot \br_j}\in \IC^{n\times s}\\
\Psi_{l,j}&=&{1\over \sqrt{n}}e^{-i\om\bd_l\cdot \br_j}\in \IC^{m\times s}\label{entry2}
\eeq
after proper normalization. Both (\ref{entry})  and (\ref{entry2})  are normalized to
have columns of unit 2-norm. We extend the formulation (\ref{31})-(\ref{entry2}) to the case of sparse {\em extended} objects
in Appendix \ref{sec:ext-loc}. 

Note that both $\bA$ and $\bPsi$ are unknown and
(\ref{31}) can be inverted only after the locations of
scatterers are determined. This is what the MUSIC algorithm
is designed to accomplish. 

The standard version of MUSIC algorithm deals with the
case of $n=m$ and $\hat\bs_k=\bd_k, k=1,...,n$ as stated in
the following result.

\begin{proposition} \label{prop1} \cite{Kir, KG} Let $\{\hat\bs_k=\bd_k, k\in\IN\}$ be a countable
set of directions such that any analytic function on the unit sphere that vanishes in $\hat\bs_k, \forall k\in \IN$ vanishes
identically. 
Let $\cK\subset \IR^3$ be a compact subset containing $\cS$. Then there exists $n_0$ such that for any
$n\geq n_0$ the following characterization holds for
every $\br\in \cK$: 
\beq
\label{61}
\br\in \cS \,\,\hbox{ if and only if }\,\,
 \phi_\br\equiv {1\over \sqrt{n}} (e^{-i\om \hat \bs_1\cdot\br}, 
 e^{-i\om \hat \bs_2\cdot\br},\cdots, e^{-i\om \hat \bs_n\cdot\br})^T
\in \hbox{\rm Ran}(\bA). 
\eeq
 Moreover,
the ranges of $\bA$ and $\bY$ coincide. 
\end{proposition}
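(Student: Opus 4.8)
The plan is to reduce the statement to linear-independence properties of the sampled plane-wave vectors, prove those by scattering theory (Rellich's lemma), and then pass to finitely many directions by a compactness argument. Write $\bA$ for the $n\times s$ matrix with columns $\phi_{\br_1},\dots,\phi_{\br_s}$; in the standard regime $n=m$, $\hat\bs_k=\bd_k$ one has $\bPsi=\bA$ and $\bY=\bA\,\bX\,\bA^{*}$ with $\bX=\mathrm{diag}(\xi_j)$ invertible. The trivial half of (\ref{61}) is immediate, since $\br\in\cS$ makes $\phi_\br$ a column of $\bA$; and $\mathrm{Ran}(\bY)=\mathrm{Ran}(\bA)$ follows once $\mathrm{rank}\,\bA=s$, because then $\bA^{*}$ maps onto $\IC^{s}$, hence so does $\bX\bA^{*}$, so $\mathrm{Ran}(\bY)=\bA(\IC^{s})=\mathrm{Ran}(\bA)$. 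Thus it suffices to prove, for all large $n$: (a) $\phi_{\br_1},\dots,\phi_{\br_s}$ are linearly independent; and (b) \emph{uniformly in} $\br\in\cK$, $\phi_\br\in\mathrm{Ran}(\bA)$ forces $\br\in\cS$.

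The conceptual core is the continuous analogue in $L^{2}(S^{2})$: for distinct points $\by_1,\dots,\by_p$ the kernels $g_{\by_l}(\hat\bs)=e^{-i\om\hat\bs\cdot\by_l}$ are linearly independent, and no directional derivative $\partial_{\hat u}g_{\br_j}$ lies in $\mathrm{span}\{g_{\br_1},\dots,g_{\br_s}\}$. I would prove both by Rellich's lemma: $g_{\by_l}$ is, up to a fixed constant, the far-field pattern of the outgoing Green function $G(\cdot,\by_l)$ of $-(\Delta+\om^{2})$, so a relation $\sum_l c_l g_{\by_l}=0$ on $S^{2}$ (or $\partial_{\hat u}g_{\br_j}+\sum_i c_i g_{\br_i}=0$) makes the radiating Helmholtz solution $\sum_l c_l G(\cdot,\by_l)$ (resp. $\partial_{\hat u}G(\cdot,\br_j)+\sum_i c_i G(\cdot,\br_i)$) have vanishing far field; by Rellich's lemma and unique continuation it vanishes on $\IR^{3}\setminus\{\by_l\}$, contradicting its nonzero monopole (resp. dipole) singularity at each $\by_l$. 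Hence all the coefficients vanish.

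Next comes the discretization, where the hypothesis on $\{\hat\bs_k\}$ enters exactly once: every finite combination of the $g_\br$'s, and of the $\partial_{\hat u}g_{\br_j}$'s, is real analytic on $S^{2}$, so if it vanishes at all $\hat\bs_k$ it vanishes identically. Hence a finite family among the infinite vectors $\phi_\br^{\infty}=(e^{-i\om\hat\bs_k\cdot\br})_{k\in\IN}$ (possibly with a $\partial_{\hat u}\phi_{\br_j}^{\infty}$ adjoined) is linearly independent in $\IC^{\IN}$ whenever the corresponding kernels are independent in $L^{2}(S^{2})$. To truncate to level $n$ I would use two elementary facts: the Gram determinant $D_n$ of the first $n$ coordinates of such a family is nondecreasing in $n$ (each increment is a rank-one positive semidefinite update, so the $1/\sqrt n$ normalisation of the $\phi$'s must be dropped here), and $D_n>0$ for some finite $n$ precisely when the infinite family is independent (a finite-intersection argument on the unit sphere of the coefficient space). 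This yields (a) with a threshold $n_1$, and with it $\mathrm{Ran}(\bA)=\mathrm{Ran}(\bY)$ for $n\ge n_1$.

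The main obstacle is the uniformity in (b): a single $n_0$ valid for every $\br\in\cK$ at once. The delicate region is a punctured neighborhood of a true source $\br_j$, where $D_n(\br_j)=0$ for the family $\{\phi_{\br_1}^{\infty},\dots,\phi_{\br_s}^{\infty},\phi_\br^{\infty}\}$ and one must rule out a shrinking zero-set. Here the confluent statement is essential: by compactness in $\hat u\in S^{2}$ and the same monotonicity of Gram determinants there is $N_0\ge n_1$ such that for all $n\ge N_0$ and all unit $\hat u$ the vector $\partial_{\hat u}\phi_{\br_j}$ avoids $\mathrm{Ran}(\bA)$; then, at the fixed level $n=N_0$, if $\phi_\br\in\mathrm{Ran}(\bA)$ held along a sequence $\br\to\br_j$, dividing $\phi_\br-\phi_{\br_j}$ by $|\br-\br_j|$ and passing to the limit would place some $\partial_{\hat u}\phi_{\br_j}$ into the closed subspace $\mathrm{Ran}(\bA)$, a contradiction. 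Hence $D_{N_0}>0$ on a punctured ball $0<|\br-\br_j|<\delta_j$, and by monotonicity $D_n>0$ there for all $n\ge N_0$; set $\delta=\min_j\delta_j$. Away from the sources things are routine: each $\br\in\cK\setminus\bigcup_j B(\br_j,\delta)$ has $D_n(\br)>0$ for some $n$, hence, by monotonicity and continuity in $\br$, on a whole neighborhood for all larger $n$, and a finite subcover of this compact set fixes a threshold $n^{*}$. Taking $n_0=\max(N_0,n^{*})$ and combining the near- and far-source cases gives (b), hence (\ref{61}). I expect this uniformity step, rather than the linear algebra or the Rellich argument, to be where the real work lies.
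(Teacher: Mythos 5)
Your proof is correct, but note that the paper itself does not prove Proposition \ref{prop1} --- it is reproduced verbatim from the cited references [Kir, KG], whose argument yours essentially matches: Rellich's lemma plus unique continuation to show that far-field patterns of monopoles (and the adjoined dipole) at distinct points are linearly independent, the analyticity hypothesis on $\{\hat\bs_k\}$ to pass to the sampled vectors, and monotonicity of the truncated Gram determinants together with compactness of $\cK$ and of $S^2$ (for the derivative directions near each $\br_j$) to extract a single threshold $n_0$. The one point worth keeping explicit, as you do, is that the $1/\sqrt{n}$ normalization must be stripped before invoking monotonicity of the Gram determinants, and that the identity $\hbox{\rm Ran}(\bY)=\hbox{\rm Ran}(\bA)$ uses both the invertibility of $\bX$ and the surjectivity of $\bA^*$ once $\mathrm{rank}\,\bA=s$.
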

\begin{remark}
As a consequence, 
$\br\in \cS $ if and only if $\cP\phi_\br=0$
where $\cP$ is the orthogonal projection 
onto the null space of $\bY^*$ (Fredholm alternative). And the locations
of the scatterers can be identified by
the singularities  of the imaging function
\beq
\label{mus}
J(\br)={1\over |\cP\phi_\br|^2}
\eeq
\cite{Che}. 

Moreover, once the locations are exactly recovered, then both
$\bA$ and $\bPsi$  are known explicitly and the strength $\xi_j, j=1,...,s$ 
of scatterers can be determined by inverting  
the linear equation (\ref{31}) which is an over-determined system.  
\end{remark}

\begin{remark} \label{rmk200}
The assumptions of Proposition \ref{prop1} can be relaxed:
instead of $\hat\bs_k=\bd_k,\forall k$, it suffices to
have  $\bPsi\in \IC^{m\times s}$ which has  rank $s$. 

In light of this observation, it is also straightforward
to extend the performance guarantee for the Born
scattering case to the multiple-scattering case. 
In the latter case, $\bPsi$ consists of entries
 which are the total field
evaluated at $\br_j$ for the incident direction $\bd_l$,  i.e.
\beq
\Psi_{l, j}=u^*(\br_j;\bd_l).\label{non-Born}
\eeq
What is really needed is that $\bPsi\in \IC^{m\times s}$ has
rank $s$ since then $\bZ=\bX\bPsi^*$ and $\bX$ share the
same support (see more on this in Section \ref{sec:sen}). Generically this is true for sufficiently large $m$  as we will
show below.

Define the incident and full field vectors
at the locations of the scatterers:
\beqn
 U^{\rm i}(\bd_l)&=&(u^{\rm i}(\br_{1};\bd_l),...,u^{\rm i}(\br_{s};\bd_l))^T\in \IC^s\\
U(\bd_l)&=&(u(\br_{1};\bd_l),...,u(\br_{s}:\bd_l))^T\in \IC^s.
\eeqn

Denote
\beq
\label{fl}
\bG=[(1-\delta_{ij})G(\br_{j},\br_{i})]\in \IC^{s\times s}. 
\eeq
The discrete version of the Lippmann-Schwinger equation (i.e. the
Foldy-Lax equation) can be written as
\beq
\label{fl2}
U(\bd_l)=U^{\rm i}(\bd_l)+\om^2 \bG\bX U(\bd_l),\quad l=1,...,m.
\eeq
The $\delta_{ij}$ terms in (\ref{fl}) represent the
singular self-energy terms of point scatterers and
should be  removed for self consistency.  

Denote $\bU^{\rm i}=[U^{\rm i}(\bd_1),...,U^{\rm i}(\bd_m)]\in  \IC^{s\times m}$
and $\bU=[U(\bd_1),...,U(\bd_m)]\in \IC^{s\times m}$. 
Suppose that $\om^{-2}$ is not an eigenvalue of
$\bG\bX$. Then we can invert  eq. (\ref{fl2}) to obtain
\[
\bU=(\bI-\om^2\bG\bX)^{-1} \bU^{\rm i}.
\]
Hence $\bU$ has rank $s$ if $\bU^{\rm i}$ does.
Indeed,  for sufficiently high frequency
$\om$ and $m$  randomly  selected
 incident directions with  sufficiently large
ratio $\sqrt{m}/s$,    $\bU^{\rm i}$ has rank $s$ with high probability  (Propositions
\ref{prop2} and \ref{thm1} below).

For some  special imaging geometry  it is possible
to reduce the number of incident and sampling directions
to $\cO(s)$ (Section \ref{sec:opt}). 

\end{remark}

\subsection{Outline}
Proposition \ref{prop1} says that if the number
of sampling directions is sufficiently large then  
 the locations of the $s$ scatterers can be identified  by
the $s$ singularities of $J$.  However, the condition is only qualitative
in the sense that an estimate for the threshold $n_0$ is
not given. It would be of obvious interest to know, e.g. 
how $n_0$ scales with $s$ when $s$ is large
and when the conventional wisdom ($n_0=s+1$) derived  from
counting dimensions is true.  Also, how much noise
can the MUSIC algorithm tolerate?

However,  
unless additional constraints  are imposed on the
measurement scheme (the frequency, the incident
and 
sampling directions etc), it is unlikely to make progress 
toward obtaining an useful estimate which
is the objective of the present study.  In \cite{Dev} a geometric constraint on
the configuration of  sensors and objects  has been pointed out
for exact recovery in the absence of noise. Moreover, it seems  possible that  a non-vanishing portion of $s$ randomly distributed  scatterers may not be exactly recovered
 in the presence of machine error
no matter how large $n$ is (Figure \ref{fig6}, middle panel, and
Figure \ref{fig7}, right panel).  
\commentout{Also, the folklore about MUSIC is  that it should
work well for ``widely separated''  objects. Can one  give a precise
formulation and analysis of this case?
}

Let us briefly sketch our approach and results:{\em 
We shall discretize  the problem by using a finite grid for
the computation domain $\cK$ and put the problem
in a probabilistic setting by using
random sampling directions. Moreover, we consider
noisy data and aim for a result for stable recovery by MUSIC. For the case of
 well-resolved grids, we show  by using the compressed sensing techniques 
that  for the NSR obeying (\ref{251}) and  with high probability,  
$n_0=\cO(s^2)$  for
general configuration of $s$ objects 
and $n_0=\cO(s)$ for objects  distributed on a transverse plane. For the case of under-resolved grids, we seek
sufficient conditions for approximate, instead of exact, 
localization of objects and 
we show that for sufficiently small   {\rm NSR} and with high probability,  
the localization error is $\cO(\lambda s)$ with 
$n=\cO(s^2)$ for a general object configuration and  the localization error is  $\cO(\lambda)$ with $n=\cO(s)$ for
objects distributed in a transverse plane.
}

Our  plan for the rest of the paper  is to first give a sensitivity analysis for MUSIC and derive the condition for exact recovery
with noisy data under which
 the MUSIC algorithm based on the perturbed data matrix
 can still recover exactly the object support (Section \ref{sec:sen}).
 Next,  we review
 the basic notion  of compressed sensing (CS) theory
 and show how it naturally lends itself to a proof of
 exact localization  by MUSIC (Section \ref{sec:cs}). 
 We show that with generic, random sampling and sufficiently
 high frequency  the MUSIC algorithm
 can, with high probability,  recover $s$ scatterers with  $n=\cO(s^2)$ sampling and incident directions (Corollary \ref{cor1}). Then we consider a favorable imaging geometry
 where the scatterers are distributed on a transverse plane 
 (Section \ref{sec:opt}).
 We show that with median frequency the MUSIC algorithm
 can recover, with high probability,  $s$ scatterers with $n=\cO(s)$ sampling and incident directions
(Corollary \ref{cor2}). Next
 we analyze the  performance guarantee 
 of the compressed sensing principle, Basis Pursuit Denoising (BPDN)  (Section \ref{sec:bp})  and show that
 in the generic situation BPDN with sufficiently high frequency can recover $s$ scatterers with  $n=\cO(s^2)$ sampling directions and just one incident wave (Remark \ref{rmk7}) while
 for the favorable geometry of planar objects  BPDN can recover
 $s$ scatterers with $n=\cO(s)$  sampling directions and 
 one incident wave (Remark \ref{rmk6}). 
 In Section \ref{sec:spec} we return to the original applications 
 of MUSIC and
 perform the compressed sensing analysis of the
 performance of MUSIC as applied to
spectral estimation and source localization. We show that
the MUSIC algorithm can, with high probability,  identify exactly the frequencies
of random signals with the number  $n=\cO(s)$ of sampling times (Corollary \ref{cor4} and Remark \ref{rmk9}). 
We discuss MUSIC in the setting with an arbitrarily fine
grid and give error bounds in Section \ref{sec:grid}.  Numerical tests are given
in Section \ref{sec:num} where the superresolution capability
of 
MUSIC and the noise sensitivity are studied. We conclude in
Section \ref{sec:con}. We give an extension of the MUSIC
algorithm to the case of extended objects in Appendix \ref{sec:ext-loc} and a proof of performance guarantee for
BPDN in Appendix \ref{app:B}.

\section{Sensitivity analysis}
\label{sec:sen}

For  quantitative performance analysis of the MUSIC algorithm,  we will work with
the discrete setting and assume  that $\cK$ is a discrete
set of $N$, typically  large, number of points, i.e. 
the computation grid. The discrete setting appears naturally in applying MUSIC to imaging of extended scatterers (see Appendix \ref{sec:ext-loc}). Moreover,  we 
consider the extension $\tilde \bA$ of $\bA$ 
which includes not only the columns $\phi_{\br_j}, j=1,...,s$  representing the
locations of the objects but also the columns representing
all the points in  $\cK$.  Hence
$\tilde\bA\in \IC^{n\times N}$ and as usual 
$\tilde\bA$ is normalized  so that the columns have
unit 2-norm. The ordering of the columns of
 $\tilde\bA$  is not important  for our purpose as long as they
 correspond to the points in $\cK$ in a well-defined manner.  $\tilde\bPsi\in \IC^{n\times N}$ is 
similarly defined. Also the extension $\tilde \bX \in \IC^{N\times N}$ of $X$ is defined by filling in zeros in all
the entries outside the object support. 

In terms of these notations, we can write
\beq
\label{15'}
\bY=\tilde\bA\tilde \bX\tilde\bPsi^*=\sum_{j=1}^N \tilde\Phi_j\otimes\tilde\Psi_j^*\xi_j.
\eeq

By a slight abuse of notation,
we shall use $\cS$ to denote the locations of objects  in the
physical domain as well as 
the corresponding  index set. Likewise $\cS^c$ denotes
the complement set of $\cS$ in the computation grid $\cK$
as well as the total index set $\{1,...,N\}$. In the same vein, 
 $\tilde\bA_\cS$ denotes  the column submatrix of $\bA$ 
restricted to the index set $\cS$. Hence $\tilde\bA_\cS=\bA$
and $\tilde\bPsi_\cS=\bPsi$. 

First, let us reformulate the 
condition  (\ref{61}) for 
exact recovery as follows.

Note that 
\[
\bA\bA^\dagger \phi_\br
\]
 is the orthogonal projection of $\phi_\br$ onto
 the range of $\bA$ where $\bA^\dagger$ is
 the pseudo-inverse of $\bA$. Hence
 \[
 \cP\phi_\br=(\bI-\bA\bA^\dagger )\phi_\br.
 \]
If $\phi_\br$
for $\br \in \cS^c$  is independent
of the columns of $\bA$, then
\[
\phi_\br^*\bA \bA^\dagger \phi_\br<\|\phi_\br\|^2
\]
and vice versa. Therefore (\ref{61}) is equivalent to 
\beq
\label{90}
\Gamma_\cS\equiv \min_{\br\in \cS^c} \|\phi_\br\|_2^{-1} \|\cP\phi_\br\|_2=\sqrt{1-\max_{\br\in \cS^c} \|\phi_\br\|^{-2}\phi_\br^*\bA \bA^\dagger \phi_\br} >0. 
\eeq
The number $\Gamma_\cS$  gives a measure of how ``independent"  $\phi_z$  is from the range of $\bA$
uniformly in $\br\in \cS^c$.  
 
 \commentout{
\section{Planar objects}

We assume that the object is distributed on the transverse
plane $z=0$ and 
 the object function  $\xi(\br)$  admits  
 the Fourier expansion 
 \beqn
 \label{0.1}
 \sigma(\br)=\sum_{\bk\in \IZ^2} \hat \xi_{\bk}
 e^{i2\pi \bk\cdot \br/L}. 
 \eeqn
The scattered field at $z=z_0$  is given by
 \beq
 \label{15'}
 u_{\rm s}(z_0, \br)=\int_{\IR^2} G(z_0, \br-\br') \xi(0,\br') u^{\rm i}(0, \br') d\br'. 
 \eeq
 
Consider the obliquely  incident  plane wave  $u^{\rm i}(z,\br)=e^{i\om(\alpha x+\beta y+\gamma z)}$ where $\gamma=\sqrt{1-\alpha^2-\beta^2}$ with
\[
(\alpha,\beta)={\lambda\over L}\bq,\quad |\bq| < {L\over\lambda}.
\]
  The Green function 
$G$ can be expressed by
  the Weyl integral representation
  \beq
  \label{somm}
 G(x,y,z)={i\over 4\pi}
  \int e^{i\om(|z|\gamma(\alpha,\beta)+x\alpha+y\beta)} {d\alpha d\beta \over \gamma(\alpha,\beta)}  \eeq
  where
$\gamma$  is related to $\alpha,\beta $ by
   \beq
  \label{16'}
  \gamma(\alpha,\beta)=\lt\{\begin{array}{ll}
  \sqrt{1-\alpha^2-\beta^2}, & |\alpha|^2+|\beta|^2<1\\
  i\sqrt{\alpha^2+\beta^2-1},& |\alpha|^2+|\beta|^2> 1. 
  \end{array}
  \rt.
  \eeq
The integrand in (\ref{somm}) with  
  real-valued $\gamma$
corresponds to the homogeneous wave
and that with imaginary-valued $\gamma$ corresponds
to the evanescent (inhomogeneous) wave which has
an exponential-decay factor $e^{-\om |z| \sqrt{\alpha^2+\beta^2-1}}$. 

Using (\ref{16'}) to calculate the integral  in (\ref{15'}) we obtain
\beqn
u_{\rm s}(0,x,y)={i\over 2\om}\sum_{\bk} {\hat\xi_\bk\over \gamma_{\bk+\bq}}  e^{i\om z_0\gamma_{\bk+\bq}} e^{i\om  (x\alpha_{\bk+\bq}+y\beta_{\bk +\bq} )} 
\eeqn
where
\beqn
(\alpha_{\bk+\bq},\beta_{\bk+\bq})= {(\bk+\bq) \lambda\over L}, & & \gamma_{\bk+\bq} =
\lt\{\begin{array}{ll}
  \sqrt{1-\alpha_{\bk+\bq}^2-\beta_{\bk+\bq}^2}, & |(\alpha_{\bk+\bq},\beta_{\bk+\bq})|<1\\
  i\sqrt{\alpha_{\bk+\bq}^2+\beta^2_{\bk+\bq}-1},& |(\alpha_{\bk+\bq},\beta_{\bk+\bq})|> 1
  \end{array}
  \rt.. 
  \eeqn

Let $(z_0, x_j,y_j), (x_j,y_j)=(\xi_j,\eta_j)L, j=1,...,n$ be the coordinates
of the $n$ sensors. 
}

Now we give a sensitivity analysis 
for MUSIC with respect to perturbation in  the
data matrix $\bY$  in terms of $\Gamma_\cS$ and
other parameters. We want to show what else 
is needed, in addition to (\ref{61}), to guarantee
exact recovery of the support of scatterers when
the data matrix is perturbed. 

 The general data matrix considered in this paper has
 the form $\bY^\ep=\bY+\bE$ where
 $\bY=\bA \bX\bPsi^*\in \IC^{n\times m}, m\geq s$, the number  of objects. 
 Set $\bZ=\bX \bPsi^* \in \IC^{s\times m}$ such that  $\bY=\bA \bZ$ where $\bZ$ is assumed to have rank $s$. 
 
We shall treat $\bZ$ as the new object matrix and
consider 
perturbed  data matrices of the form 
\beq
\label{20'}
\bY^\ep=\bA\bZ+\bE. 
\eeq
Note that the locations of objects represented by $\bZ$ are
identical to those represented by $\bX=\hbox{\rm diag}(\xi_j)$.

Set 
 \[
 \cY^\ep=\bY^\ep\bY^{\ep*}=\cY+\cE
 \]
 where
 \beq
 \cY&=&\bA\bZ\bZ^*\bA^* \in \IC^{n\times n}\label{200}\\
 \cE&=& \bE\bZ^*\bA^*+\bA\bZ\bE^*+\bE\bE^*\in \IC^{n\times n}\label{201}
 \eeq
 are both self-adjoint.  
 Note that the range of $\cY$ is the same as the range of
 $\bA$ and under the assumption of (\ref{61})
 equals to the span of $\{\phi_\br:\br \in \cS\}$.

Let  $\{\bv_j: j=1,...,s\}$ and $\{\bv_j: j=s+1,...,n\}$, respectively,  be the set of orthonormal  bases  for the range and null space  of  $\cY$.  Let $\bQ_1\in \IC^{n\times s}$ and $\bQ_2\in \IC^{ n\times (n-s)}$, respectively, be
 the matrices whose columns are exactly  $\{\bv_j: j=1,...,s\}$ and $\{\bv_j: j=s+1,...,n\}$. Let $\bQ=[\bQ_1,\bQ_2] \in \IC^{n\times n}$. 
 
Let $\sigma_1\geq \sigma_2\geq \cdots \geq \sigma_n$ be
the singular values of $\cY$. 
Denote  the {\em smallest nonzero}  singular value of $\cY$ by $\sigma_{\rm min}$ and set $\sigma_{\rm max}=\sigma_1$. If $\bY$ has rank $s$, then $\sigma_{\rm min}=\sigma_s$.  We partition $\bQ^* \cE\bQ$ as follows:
\beq
\bQ^* \cE\bQ=\lt[\begin{matrix}
\cE_{11}&\cE_{12}\\
\cE_{21}&\cE_{22}
\end{matrix}
\rt]
\eeq
where $\cE_{11}\in \IC^{s\times s}, \cE_{12}\in \IC^{s\times (n-s)}, \cE_{21}\in \IC^{(n-s)\times s}, \cE_{22}\in \IC^{(n-s)\times (n-s)}$.

The following is a slight recasting of a general result of  matrix perturbation theory  \cite{SS}.
\begin{proposition}(Theorem 2.7, Chap. V, \cite{SS})
\label{prop2'} 
If  
\beq
\label{205}
{\sqrt{\|\cE_{12}\|_2\|\cE_{21}\|_2}\over \sigma_{\rm min}-\|\cE_{11}\|_2-\|\cE_{22}\|_2}< {1\over 2}\eeq
 then there exist
 $\bF\in \IC^{(n-s)\times s}$ with
\beq
\label{83}
\|\bF\|_2&\leq& {2\|\cE_{21}\|_2\over \sigma_{\rm min}-\|\cE_{11}\|_2-\|\cE_{22}\|_2 }
\eeq
such that the columns of
\beq
\label{64}
\bQ^\ep_1&=&(\bQ_1+\bQ_2\bF)(\bI+\bF^*\bF)^{-1/2}\\
\bQ^\ep_2&=&(\bQ_2-\bQ_1\bF^*)(\bI+\bF\bF^*)^{-1/2}\label{64'}
\eeq
are, respectively,  orthornormal bases for  invariant subspaces
of $\cY^\ep$. 

The representation of $\cY^\ep$ with respect to $\bQ^\ep_1,\bQ^\ep_2$ is, respectively, 
\beq
\Sigma^\ep_1&=&(\bI+\bF^*\bF)^{1/2}
\lt[ \Sigma_1+\cE_{11}+\cE_{12} \bF\rt](\bI+\bF^*\bF)^{-1/2}\\
\Sigma_2^\ep&=&(\bI+\bF\bF^*)^{-1/2}
\lt[\cE_{22}-\bF\cE_{12}\rt](\bI+\bF\bF^*)^{1/2}
\eeq
where $\Sigma_1=\hbox{diag}(\sigma_1, \sigma_2,...,\sigma_{\rm min})$. 
\end{proposition}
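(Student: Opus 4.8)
The statement is Theorem~2.7 of Chapter~V of \cite{SS} specialized to the Hermitian matrix $\cY^\ep=\cY+\cE$, and my plan is to recall that argument: reduce the construction of the perturbed invariant subspaces to an algebraic Riccati equation for $\bF$ and solve it by a fixed-point iteration. Working in the orthonormal basis $\bQ=[\bQ_1,\bQ_2]$ one has $\bQ^*\cY\bQ=\mathrm{diag}(\Sigma_1,\Sigma_2)$ with $\Sigma_2=0$ (the nonzero singular values of $\cY$ being collected in $\Sigma_1$, since $\bQ_2$ spans $\ker\cY$), so that, using $\cE=\cE^*$,
\[
\bQ^*\cY^\ep\bQ=\lt[\begin{matrix}\Sigma_1+\cE_{11}&\cE_{12}\\ \cE_{21}&\Sigma_2+\cE_{22}\end{matrix}\rt],\qquad \cE_{21}=\cE_{12}^*.
\]
First I would look for an invariant subspace of $\cY^\ep$ of the form $\mathrm{Ran}(\bQ_1+\bQ_2\bF)$: imposing $\cY^\ep(\bQ_1+\bQ_2\bF)=(\bQ_1+\bQ_2\bF)M$ and equating the two block rows forces $M=\Sigma_1+\cE_{11}+\cE_{12}\bF$ together with the quadratic matrix equation $\bT\bF=\cE_{21}-\bF\cE_{12}\bF$, where $\bT\colon\bF\mapsto\bF(\Sigma_1+\cE_{11})-(\Sigma_2+\cE_{22})\bF$ is a Sylvester operator.

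The analytic core is the solvability of this equation under \eqref{205}. Since $\Sigma_1$ and $\Sigma_2=0$ are Hermitian with spectra separated by $\sigma_{\rm min}$, the separation of the unperturbed Sylvester map equals $\sigma_{\rm min}$, and a standard perturbation estimate shows $\bT$ is invertible with $\|\bT^{-1}\|_2\le\delta^{-1}$, $\delta:=\sigma_{\rm min}-\|\cE_{11}\|_2-\|\cE_{22}\|_2>0$. I would then invoke the contraction-mapping lemma for quadratic matrix equations: the map $\bG(\bF)=\bT^{-1}(\cE_{21}-\bF\cE_{12}\bF)$ sends the closed ball $\{\bF:\|\bF\|_2\le 2\|\cE_{21}\|_2/\delta\}$ into itself and is a strict contraction there exactly when $\sqrt{\|\cE_{12}\|_2\|\cE_{21}\|_2}/\delta<\tfrac12$, i.e.\ hypothesis \eqref{205}; its unique fixed point is the required $\bF$, and its membership in the ball is precisely \eqref{83}. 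Checking the self-map and contraction properties with the sharp constants $2$ and $\tfrac12$ — which is where the precise form of \eqref{205} enters and where one must control $\|\bF\cE_{12}\bF\|_2$ — is the step I expect to be the main obstacle; everything else is either standard operator theory or bookkeeping.

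Granting $\bF$, the rest is algebra. The columns of $\bQ_1+\bQ_2\bF$ span an invariant subspace of $\cY^\ep$, and since $(\bQ_1+\bQ_2\bF)^*(\bQ_1+\bQ_2\bF)=\bI+\bF^*\bF$, right multiplication by $(\bI+\bF^*\bF)^{-1/2}$ orthonormalizes them, giving $\bQ_1^\ep$ as in \eqref{64}. As $\cY^\ep$ is Hermitian, the orthogonal complement of this subspace — spanned by $\bQ_2-\bQ_1\bF^*$, which is indeed orthogonal to $\bQ_1+\bQ_2\bF$ — is also invariant, and orthonormalizing with $(\bI+\bF\bF^*)^{-1/2}$ produces $\bQ_2^\ep$ as in \eqref{64'}. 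Finally I would read off $\Sigma_i^\ep=(\bQ_i^\ep)^*\cY^\ep\bQ_i^\ep$: substituting the formulas for $\bQ_i^\ep$, using $\cY^\ep(\bQ_1+\bQ_2\bF)=(\bQ_1+\bQ_2\bF)M$ for the first block and the relation $\cE_{21}=\cE_{12}^*$ together with the Riccati equation to simplify the complementary block, yields the stated expressions for $\Sigma_1^\ep$ and $\Sigma_2^\ep$; these are routine identities requiring no further estimates.
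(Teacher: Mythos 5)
The paper itself gives no proof of this proposition --- it is imported verbatim from Stewart--Sun (Theorem 2.7, Chap.~V) --- and your reconstruction is exactly the argument of that reference: block $\cY^\ep$ in the basis $\bQ$, reduce to the Riccati equation $\bF(\Sigma_1+\cE_{11})-(\Sigma_2+\cE_{22})\bF=\cE_{21}-\bF\cE_{12}\bF$, solve it by contraction using $\|\bT^{-1}\|_2\le\delta^{-1}$ with $\delta=\sigma_{\rm min}-\|\cE_{11}\|_2-\|\cE_{22}\|_2$, and orthonormalize; so your proof is correct and takes the same (indeed the only standard) route. The step you flag as the main obstacle is routine: on the ball $\|\bF\|_2\le 2\|\cE_{21}\|_2/\delta$ the map $\bG(\bF)=\bT^{-1}(\cE_{21}-\bF\cE_{12}\bF)$ is a self-map and a contraction with Lipschitz constant $4\|\cE_{12}\|_2\|\cE_{21}\|_2/\delta^2<1$, which is precisely hypothesis (\ref{205}); and the stated form of $\Sigma_2^\ep$ is the adjoint of the expression $\cE_{22}-\cE_{21}\bF^*$ that the direct computation produces, the two agreeing because $\cE_{21}=\cE_{12}^*$ and $\cY^\ep$ is Hermitian, as you indicate.
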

\commentout{
\begin{proposition}
Let $\{\lambda_k\}$ be the complete set of eigenvalues of
$\cY$. Let
\[
D_k=\{\mu+\imath \nu:|\mu+\imath \nu-\lambda_k|\leq \|\cE\|_2 \,\,\&\,\,
|\nu|\leq {1\over 2}\|\cE-\cE^*\|_2\}.
\]
Then the eigenvalues of $\cY^\ep$ belong to the union
of $D_k$. 
\end{proposition}
}

\begin{corollary}\label{cor20} 
Let $\rho_*\in (1/5, 1/4)$ be the only real root of
the cubic polynomial $p(\rho)=1-8\rho+20\rho^2-20\rho^3$
and suppose
\beq
\label{206}
{\|\cE\|_2\over \sigma_{\rm min}}<\rho_*. 
\eeq
Then  $ \hbox{\rm Ran}(\bQ^\ep_1)$ is the singular subspace
associated with the $s$ largest singular values of $\bY^\ep$ and $ \hbox{\rm Ran}(\bQ^\ep_2)$ the singular subspace associated with the rest of the singular values. 
\end{corollary}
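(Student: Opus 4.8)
The plan is to reduce the corollary to a direct application of Proposition \ref{prop2'} by showing that the single, clean hypothesis $\|\cE\|_2/\sigma_{\rm min}<\rho_*$ implies the more cumbersome condition (\ref{205}), and then to read off the singular-subspace conclusion from the block representations $\Sigma_1^\ep,\Sigma_2^\ep$. First I would observe that since $\bQ$ is unitary, each of the four blocks of $\bQ^*\cE\bQ$ has operator norm bounded by $\|\bQ^*\cE\bQ\|_2=\|\cE\|_2$; writing $\eta=\|\cE\|_2/\sigma_{\rm min}$ this gives $\|\cE_{ij}\|_2\leq \eta\,\sigma_{\rm min}$ for all $i,j$. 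Substituting these bounds into the left side of (\ref{205}),
\beqn
{\sqrt{\|\cE_{12}\|_2\|\cE_{21}\|_2}\over \sigma_{\rm min}-\|\cE_{11}\|_2-\|\cE_{22}\|_2}
\leq {\eta\,\sigma_{\rm min}\over \sigma_{\rm min}(1-2\eta)}={\eta\over 1-2\eta},
\eeqn
valid whenever $\eta<1/2$. So (\ref{205}) holds as soon as $\eta/(1-2\eta)<1/2$, i.e. $\eta<1/4$; in particular it holds for $\eta<\rho_*<1/4$. Proposition \ref{prop2'} then supplies the matrix $\bF$ and the orthonormal bases $\bQ_1^\ep,\bQ_2^\ep$ for invariant subspaces of $\cY^\ep=\bY^\ep\bY^{\ep*}$, together with the block representations $\Sigma_1^\ep$ and $\Sigma_2^\ep$.

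Next I would show these two invariant subspaces are separated in the singular-value sense, which is where the specific cubic $p(\rho)=1-8\rho+20\rho^2-20\rho^3$ enters. From (\ref{83}) and $\|\cE_{21}\|_2\leq\eta\sigma_{\rm min}$ we get $\|\bF\|_2\leq 2\eta/(1-2\eta)=:b$. Using $\|(\bI+\bF^*\bF)^{\pm1/2}\|_2\leq (1+b^2)^{\pm1/2}$ (and the analogous bound for $\bF\bF^*$) together with $\|\Sigma_1\|_2\geq$ smallest diagonal entry $=\sigma_{\rm min}$ and $\|\Sigma_1\|_2\leq\sigma_{\rm max}$, I would estimate from below the smallest singular value of $\Sigma_1^\ep$:
\beqn
\sigma_{\min}(\Sigma_1^\ep)\geq (1+b^2)^{-1/2}\big(\sigma_{\rm min}-\|\cE_{11}\|_2-\|\cE_{12}\|_2\|\bF\|_2\big)\geq (1+b^2)^{-1/2}\sigma_{\rm min}\big(1-\eta-\eta b\big),
\eeqn
and from above the largest singular value of $\Sigma_2^\ep$:
\beqn
\sigma_{\max}(\Sigma_2^\ep)\leq (1+b^2)^{1/2}\big(\|\cE_{22}\|_2+\|\bF\|_2\|\cE_{12}\|_2\big)\leq (1+b^2)^{1/2}\sigma_{\rm min}\big(\eta+b\eta\big).
\eeqn
The desired conclusion — that $\hbox{Ran}(\bQ_1^\ep)$ carries the $s$ largest singular values of $\bY^\ep$ (equivalently of $\cY^\ep$) and $\hbox{Ran}(\bQ_2^\ep)$ the rest — holds precisely when the first quantity strictly exceeds the second. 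After clearing the common factor $\sigma_{\rm min}$ and the powers of $(1+b^2)^{1/2}$, the inequality $1-\eta-\eta b>(1+b^2)(\eta+b\eta)$ with $b=2\eta/(1-2\eta)$ becomes, upon clearing denominators, a polynomial inequality in $\eta$; I expect it to reduce exactly to $p(\eta)>0$, and since $p$ has its unique real root at $\rho_*\in(1/5,1/4)$ with $p(0)=1>0$, the hypothesis $\eta<\rho_*$ is exactly what makes it hold.

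The main obstacle is the bookkeeping in the last step: verifying that the crude but uniform bounds $\|\cE_{ij}\|_2\leq\eta\sigma_{\rm min}$, when propagated through the $(1+\bF^*\bF)^{\pm1/2}$ factors in $\Sigma_1^\ep,\Sigma_2^\ep$ and combined with $b=2\eta/(1-2\eta)$, collapse to exactly the cubic $p(\rho)$ and not some nearby polynomial. One has to be careful whether to bound $\|\Sigma_1\|_2$ below by $\sigma_{\rm min}$ (the relevant choice, since we want the \emph{smallest} singular value of the perturbed block) and to track that the factor $(1+b^2)$ appears with the right power on each side; a sign error or an off-by-one in the exponent changes the root. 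A secondary point worth stating explicitly is that the singular values of $\cY^\ep$ are the squares of those of $\bY^\ep$, so the ordering of subspaces is the same for both — this is why it suffices to work with the self-adjoint $\cY^\ep$ throughout, and why Proposition \ref{prop2'}, which is about the Hermitian eigenproblem, yields a statement about the SVD of $\bY^\ep$. Everything else is routine once (\ref{205}) is in hand.
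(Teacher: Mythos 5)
Your proposal is correct and follows essentially the same route as the paper's own proof: verify (\ref{205}) via $\eta<\rho_*<1/4$, bound $\|\bF\|_2\le 2\eta/(1-2\eta)$, and compare $\sigma_{\min}(\Sigma_1^\ep)$ with $\sigma_{\max}(\Sigma_2^\ep)$ using the uniform block bounds $\|\cE_{ij}\|_2\le\|\cE\|_2$; the final inequality $(1-\eta-\eta b)>(1+b^2)(\eta+b\eta)$ with $b=2\eta/(1-2\eta)$ does simplify, exactly as you anticipated, to $1-8\eta+20\eta^2-20\eta^3>0$. Your closing remark that the singular values of $\cY^\ep$ are the squares of those of $\bY^\ep$, so the subspace ordering transfers, is a point the paper leaves implicit and is worth having stated.
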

\begin{proof}It suffices to show that under (\ref{206})
the smallest singular value of $\Sigma_1$ is larger than the largest singular value of $\Sigma_2$. 

Since $\rho_*<1/4$, condition (\ref{206}) implies that
\[
{\|\cE\|_2 \over \sigma_{\rm min}-2\|\cE\|_2}<{1\over 2} 
\]
which in turn implies (\ref{205}). Note that
under this condition $\|\bF\|_2\leq 1$. 
By Proposition \ref{prop2'} we have
\beq
\|\Sigma^\ep_2\|_2&\leq &\|\bI+\bF\bF^*\|^{1/2}_2
\|\cE_{22}-\bF\cE_{12}\|_2\label{249'} \\
&\leq &\lt(1+{4\|\cE\|_2^2\over (\sigma_{\rm min}-2\|\cE\|_2)^2}\rt)^{1/2}
\lt(\|\cE\|_2+{2\|\cE\|_2^2\over \sigma_{\rm min}-2\|\cE\|_2}\rt)\nn
\eeq
On the other hand,
\beq
\label{249}
\min_{\|\be\|_2=1}\|\Sigma^\ep_1\be\|_2&=&\min_{ \|\be\|_2=1}\|(\Sigma_1+\cE_{11}+\cE_{12}\bF)\be\|
\|\bI+\bF^*\bF\|^{-1/2}_2\\
&\geq& \lt(\sigma_{\rm min}-\|\cE\|_2-{2\|\cE\|^2_2\over \sigma_{\rm min}-2\|\cE\|_2}\rt) \lt(1+{4\|\cE\|_2^2\over (\sigma_{\rm min}-2\|\cE\|_2)^2} \rt)^{-1/2}. \nn
\eeq

Let 
\[
\rho={\|\cE\|_2\over \sigma_{\rm min}}. 
\]
Imposing that the right hand side of (\ref{249}) is greater
than that of (\ref{249'}) leads to
the inequality
\beqn
p(\rho)&=&(1-2\rho)^3-2\rho(1-2\rho)^2-4\rho^3\\
&=&1-8\rho+20\rho^2-20\rho^3
>0
\eeqn
which holds for $\rho<\rho_* $, the only real  root of
$p(x)$. It is readily verified that $\rho_*\in (1/5, 1/4).$

 \end{proof}
 
 In view of Corollary \ref{cor20}, it is natural to call  $\hbox{\rm Ran}(\bQ^\ep_2)$  the {\em noise} subspace and $\hbox{\rm Ran}(\bQ_1^\ep)$ the  {\em signal} (or {\em object})
 subspace.

 Let $\cP^\ep$
be the orthogonal projection onto the noise subspace
and define the MUSIC imaging function for the noisy
data 
\[
J^\ep(\br)={1\over |\cP^\ep\phi_\br|^2}. 
\]

We are ready to  state the first main result of the paper.
\begin{theorem}
 \label{thma}Let $\cY^\ep=\cY+\cE$ where $\cY$ and $\cE$ are given by (\ref{200})-(\ref{201}). 
  
  Suppose $\phi_\br\in {\rm Ran}(\bA)$ if and only if $\br\in \cS$. Then the condition
  \beq
  \label{202}
  {\|\cE\|_2\over \sigma_{\rm min}} &<&\Delta={1\over 2}-{1\over 2}{1\over \sqrt{\sqrt{2}\Gamma_\cS+1}}
  \eeq
implies that  the $s$ highest peaks of $
 J^\ep(\br)$  coincide with the true locations of objects. Indeed,
 the object locations can be identified by the thresholding rule:
 \beq
 \label{241}
\lt\{\br\in \cK:  J^\ep(\br)\geq 2\Gamma_\cS^{-2} \rt\}.
\eeq
   \end{theorem}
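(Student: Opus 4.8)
The plan is to reduce the statement to a perturbation estimate for the noise projection $\cP^\ep$ and then to two scalar inequalities. First I would check that (\ref{202}) is strong enough to invoke Corollary \ref{cor20}: since $\cP$ is an orthogonal projection, $\Gamma_\cS\le 1$, hence $\Delta\le \frac{1}{2}-\frac{1}{2\sqrt{1+\sqrt{2}}}<\frac{1}{5}<\rho_*$, so (\ref{202}) forces ${\|\cE\|_2/\sigma_{\rm min}}<\rho_*$, which is (\ref{206}). Thus $\cP^\ep$ really is the orthogonal projection onto the noise subspace ${\rm Ran}(\bQ^\ep_2)$ produced by Proposition \ref{prop2'}. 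Feeding the crude estimates $\|\cE_{ij}\|_2\le\|\cE\|_2$ into (\ref{83}) and using (\ref{202}), the rotation parameter is controlled by
\[
\|\bF\|_2<\frac{2\Delta}{1-2\Delta}=\sqrt{\sqrt{2}\,\Gamma_\cS+1}-1=:\tau,
\]
and from $\tau^2+2\tau=\sqrt{2}\,\Gamma_\cS$ one reads off $\tau<\Gamma_\cS/\sqrt{2}$; this identity is precisely what explains the form of $\Delta$.

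Next I would bound $J^\ep$ on $\cS$. If $\br\in\cS$ then $\phi_\br\in{\rm Ran}(\bA)={\rm Ran}(\bQ_1)$, so $\bQ_2^*\phi_\br=0$, and (\ref{64'}) gives $(\bQ^\ep_2)^*\phi_\br=-(\bI+\bF\bF^*)^{-1/2}\bF\bQ_1^*\phi_\br$, whence $\|\cP^\ep\phi_\br\|_2=\|(\bQ^\ep_2)^*\phi_\br\|_2\le\|\bF\|_2<\tau<\Gamma_\cS/\sqrt{2}$ and therefore $J^\ep(\br)>2\Gamma_\cS^{-2}$. So every object location lies in the set (\ref{241}), and these $s$ values of $J^\ep$ already exceed the threshold.

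The harder half is showing $J^\ep(\br)<2\Gamma_\cS^{-2}$ for every $\br\in\cS^c$. Here I would write $\phi_\br=\bQ_1 a+\bQ_2 b$ with $\|a\|_2^2+\|b\|_2^2=\|\phi_\br\|_2^2=1$ and $\|b\|_2=\|\cP\phi_\br\|_2\ge\Gamma_\cS$, so (\ref{64'}) yields $(\bQ^\ep_2)^*\phi_\br=(\bI+\bF\bF^*)^{-1/2}(b-\bF a)$ and hence a lower bound on $\|\cP^\ep\phi_\br\|_2$ in terms of $\|\bF\|_2$ and $\Gamma_\cS$, the worst case being $\|b\|_2=\Gamma_\cS$ (so $\phi_\br$ is almost entirely in ${\rm Ran}(\bA)$). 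One then has to push this lower bound past $\Gamma_\cS/\sqrt{2}$, that is, establish $J^\ep(\br)<2\Gamma_\cS^{-2}$. I expect this to be the main obstacle: a crude subspace-rotation estimate such as $\|\cP-\cP^\ep\|_2\le\|\bF\|_2$ leaves too little margin when $\Gamma_\cS$ is small, so one must control the principal angles between ${\rm Ran}(\bQ_2)$ and ${\rm Ran}(\bQ^\ep_2)$ more tightly, presumably by exploiting the structure $\cE=\bE\bZ^*\bA^*+\bA\bZ\bE^*+\bE\bE^*$ relative to the splitting ${\rm Ran}(\bA)\oplus{\rm Ran}(\bA)^\perp$: since $\bQ_2^*\bA=0$, the blocks touched by $\bQ_2$ simplify and $\cE_{22}=\bQ_2^*\bE\bE^*\bQ_2$ has norm only $O(\|\bE\|_2^2)$. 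Once $\|\bF\|_2$ is controlled sharply enough, the assertion for $\br\in\cS^c$ reduces to a single polynomial inequality in $\|\bF\|_2$ and $\Gamma_\cS$; it is this inequality, together with the one from the $\cS$ case, that pins down the threshold $\Delta$ in (\ref{202}) and the constant $2$ in (\ref{241}).

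Combining the two halves, the set (\ref{241}) coincides with $\cS$, and since $J^\ep>2\Gamma_\cS^{-2}$ on $\cS$ while $J^\ep<2\Gamma_\cS^{-2}$ on $\cS^c$, the $s$ largest values of $J^\ep$ over the grid $\cK$ are attained exactly at the true object locations, as asserted. The routine ingredients are the algebraic identity for $\tau$ and the manipulations with (\ref{64'}) and (\ref{83}); the single delicate point is the sharp lower bound for $\|\cP^\ep\phi_\br\|_2$ on $\cS^c$ and the calibration of constants it requires.
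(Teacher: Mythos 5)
Your treatment of the case $\br\in\cS$ is complete and in fact slightly cleaner than the paper's: from (\ref{64'}) one gets $(\bQ_2^\ep)^*\phi_\br=-(\bI+\bF\bF^*)^{-1/2}\bF\bQ_1^*\phi_\br$, and your identity $\tau^2+2\tau=\sqrt{2}\,\Gamma_\cS$ for $\tau=2\Delta/(1-2\Delta)$ does give $\|\cP^\ep\phi_\br\|_2\leq\|\bF\|_2<\tau<\Gamma_\cS/\sqrt{2}$, hence $J^\ep(\br)>2\Gamma_\cS^{-2}$ on $\cS$. But the half you yourself flag as ``the main obstacle'' --- showing $J^\ep(\br)<2\Gamma_\cS^{-2}$ for $\br\in\cS^c$ --- is genuinely missing, and the bound you sketch cannot close it: with $\|b\|_2\geq\Gamma_\cS$ and $\|a\|_2\leq 1$ one only obtains $\|\cP^\ep\phi_\br\|_2\geq(\Gamma_\cS-\tau)/\sqrt{1+\tau^2}$, and already at $\Gamma_\cS=1$ this is about $0.39$, well below the required $\Gamma_\cS/\sqrt{2}\approx 0.71$. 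Your proposed repair (sharpening the subspace-rotation estimate by exploiting $\bQ_2^*\bA=0$, so that $\cE_{22}=\bQ_2^*\bE\bE^*\bQ_2$ is only $O(\|\bE\|_2^2)$) is not the route the paper takes, and nothing in the paper's argument uses the block structure of $\cE$ beyond the crude bounds $\|\cE_{ij}\|_2\leq\|\cE\|_2$.

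What the paper actually does on $\cS^c$ is expand the squared norm additively rather than estimate the rotated vector from below: writing the columns of $\bQ_2^\ep$ as $\bv_k+\delta\bv_k$, it uses
\[
\|\cP^\ep\phi_\br\|_2^2=\sum_{k>s}|\bv_k^*\phi_\br|^2+2\sum_{k>s}\Re[\bv_k^*\phi_\br\,\phi_\br^*\delta\bv_k]+\|(\bQ_2^\ep-\bQ_2)^*\phi_\br\|_2^2
\]
together with $\|\bQ_2^\ep-\bQ_2\|_2\leq\|\bF\|_2+\tfrac12\|\bF\|_2^2\leq 2\rho(1-\rho)/(1-2\rho)^2$ to arrive at the lower bound $\Gamma_\cS^2-4\rho^2(1-\rho)^2/(1-2\rho)^4$ of (\ref{82}). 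The constant $2$ in (\ref{241}) and the exact form of $\Delta$ are then calibrated by requiring the right-hand side of (\ref{81}) to exceed that of (\ref{82}), which reduces to $2\rho(1-\rho)/(1-2\rho)^2<\Gamma_\cS/\sqrt{2}$, i.e.\ the quadratic inequality (\ref{ineq}) whose solution set is precisely (\ref{202}). (One may object that the cross term in the displayed expansion is dropped rather than estimated in the $\cS^c$ direction, but that is the paper's route; it is exactly the step your proposal leaves open, and your speculation about how to fill it points in a different direction than the actual argument.)
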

   \begin{proof}
   Let $\{\bv_j+\delta\bv_j:j=s+1,...,n\}$ be the columns of $\bQ^\ep_2$.  Clearly, $\delta\bv_j$ is the $(j-s)$-th column  of
$\bQ^\ep_2-\bQ_2$. Now we have  
\beqn
\|\bQ^\ep_2-\bQ_2\|_2&\leq& 
\|\bQ_1\bF^*(\bI+\bF\bF^*)^{-1/2} \|_2
+\|\bQ_2(\bI+\bF\bF^*)^{-1/2}-\bQ_2\|_2\\
&\leq&\|\bQ_1\bF^*\|_2
+\|(\bI+\bF\bF^*)^{-1/2}-\bI\|_2
\eeqn
whose first term is bounded by $\|\bF\|_2$ and
whose second term is bounded by
\beqn
\|(\bI+\bF\bF^*)^{-1/2}-\bI\|_2
&=&\|(\bI+\bF\bF^*)^{-1/2} (\bI+(\bI+\bF\bF^*)^{1/2})^{-1}\bF\bF^*\|_2\\
&\leq& {1\over 2} \|\bF\|_2^2.
\eeqn
Hence 
\beq
\|\delta\bv_j\|_2&\leq&\|\bQ^\ep_2-\bQ_2\|_2\leq
\|\bF\|_2+{1\over 2} \|\bF\|_2^2\leq 
{2\rho(1-\rho)\over (1-2\rho)^2},  \quad j=s+1,...,n,\label{225}
\eeq
where $\rho=\|\cE\|_2/\sigma_{\rm min}$. 

By Corollary \ref{cor20} $\{\bv_j+\delta\bv_j:j=s+1,...,n\}$ is the set of singular vectors associated with the $n-s$ smallest singular values of $\bY$.
By definition, 
\beq
\|\cP^\ep \phi_\br\|_2^2&=&
\sum_{k=s+1}^n \lt|(\bv^*_k+\delta\bv^*_k)\phi_\br\rt|^2\nn\\
&=& \sum_{k=s+1}^n \lt|\bv^*_k\phi_\br\rt|^2+
2\sum_{k=s+1}^n \Re[\bv^*_k\phi_\br
\phi^*_\br \delta\bv_k]+ \|(\bQ_2^\ep-\bQ_2)^*\phi_\br\|_2^2
.\label{63}
\eeq
By assumption
the first  two terms on the right hand side of (\ref{63}) vanish
if and only if $\br\in \cS$. By (\ref{225})  the third term is bounded by
\beqn
 \|(\bQ_2^\ep-\bQ_2)^*\phi_\br\|_2^2 &\leq& 
{4\rho^2(1-\rho)^2\over (1-2\rho)^4}. 
 \eeqn
For $\br\not \in \cS$, 
\beqn
\sum_{k=s+1}^n \lt|(\bv^*_k+\delta\bv^*_k)\phi_\br\rt|^2
&= &\sum_{k=s+1}^n \lt|\bv^*_k\phi_\br\rt|^2 - \|(\bQ_2^\ep-\bQ_2)^*\phi_\br\|_2^2\\
&\geq&  \Gamma^2_\cS  - {4\rho^2(1-\rho)^2\over (1-2\rho)^4}
\eeqn
by (\ref{225}). 
Hence $
J^\ep(\br)$
has the following behavior
\beq
\label{81}
J^\ep(\br)&\geq&  {(1-2\rho)^4\over 4\rho^2(1-\rho)^2} ,\quad \br\in \cS\\
J^\ep(\br)&\leq&\lt( \Gamma^2_\cS  -{4\rho^2(1-\rho)^2\over (1-2\rho)^4}\rt)^{-1},\quad \br\in \cS^c. \label{82}
\eeq

Setting
\[
 {(1-2\rho)^4\over 4\rho^2(1-\rho)^2}>\lt( \Gamma^2_\cS  -{4\rho^2(1-\rho)^2\over (1-2\rho)^4}\rt)^{-1}
  \]
 we obtain the inequality
 \beq
 \label{ineq}
 \rho^2-\rho+{\Gamma_\cS\over 4\Gamma_\cS+2\sqrt{2}}>0
 \eeq
whose solution is (\ref{202}). Note that
$\Delta <1/5<\rho_*, \forall \Gamma_\cS\in [0,1]$.

   \end{proof}

\commentout{
\begin{remark}
one needs 
the notion of {\em separation} between
two sub-matrices  in place of  the notion of
the smallest nonzero singular value \cite{Ste}. We leave
this extension  to the interested  readers. 
\end{remark}
}

Condition (\ref{202}) would not be very useful unless $\|\cE\|_2$ 
 can be bounded from above and $\sigma_{\rm min}$, $\Gamma_\cS$ can be bounded from below
by other known or accessible quantities. This is what the
compressed sensing techniques enable us to do.

\section{Compressed sensing analysis}\label{sec:cs}

We now give
a quantitative evaluation of MUSIC based on compressed sensing theory.
 
A fundamental notion in compressed sensing is the restrictive isometry property (RIP) due to Cand\`es and Tao \cite{CT}.
Precisely, let  the sparsity $s$  of a vector  $Z\in \IC^N$ be the
number of nonzero components of $Z$ and define the restricted isometry constants $\delta^-_s\in [0,1],\delta^+_s\in [0,\infty)$
to be the {\em smallest} nonnegative  numbers such that the inequality
\beq
\label{rip}
 (1-\delta^-_s) \|Z\|_2^2\leq \|\tilde\bA Z\|_2^2\leq (1+\delta^+_s)
\|Z\|_2^2
\eeq
holds for all $Z\in \IC^N$ of sparsity at most $ s$. 

Roughly speaking this means that $\tilde\bA$  acts like a near isometry, up to a scaling, 
when restricted to $s$-sparse vectors. 
In particular, if $\delta^-_{s+1}< 1$ then any $s+1$
columns of $\tilde \bA$ are linearly independent which
implies the characterization
(\ref{61}).

More generally, let us  extend the notion of the restricted isometry constants to ones $\delta^\pm_\cS$
associated with a particular set $\cS$, namely  the
smallest nonnegative numbers  satisfying 
\beq
\label{rip'}
 (1-\delta^-_\cS) \|Z\|_2^2\leq \|\tilde\bA Z\|_2^2\leq (1+\delta^+_\cS)
\|Z\|_2^2
\eeq
for all $Z\in \IC^N$ supported on the set $\cS$.
This will become important later when we analyze the case
of an arbitrarily refined  grid (Section \ref{sec:grid}). 
Clearly, 
\beq\label{112}
\delta^\pm_s=\max_{|\cS|=s}\delta^\pm_\cS. 
\eeq
Then (\ref{61}) is equivalent to $\delta^-_{\cS'}<1$ for
all $\cS'$ which is  the union of $\cS$ and another point $\br\in \cS^c$.

First,  let us estimate the magnitude of the  error term $\cE$
in terms of $\bE$ as follows.

\begin{lemma}
\label{lem5}
Suppose (\ref{rip'}) holds for $\bA$. 
Then 
 \beq
 \label{211}
 \|\cE\|_2\leq \|\bE\|_2^2+2\zeta_{\rm max}\sqrt{1+\delta^+_{\cS}} \|\bE\|_2\leq \|\bE\|_2^2+2\zeta_{\rm max}\sqrt{1+\delta^+_s} \|\bE\|_2
 \eeq
 where $\zeta_{\rm max}=\|\bZ\|_2$ is the largest singular value of
 $\bZ$. 
 
 For the case of $\bZ=\bX\bPsi^*$ with  $\bPsi$ satisfying 
 (\ref{rip'}),  we have
  \beq
 \label{211'}
 \|\cE\|_2\leq \|\bE\|_2^2+2\xi_{\rm max}({1+\delta^+_{\cS}}) \|\bE\|_2\leq \|\bE\|_2^2+2\xi_{\rm max}({1+\delta^+_s}) \|\bE\|_2
  \eeq
  where $\xi_{\rm max}=\max_{i}|\xi_i|$. 
\end{lemma}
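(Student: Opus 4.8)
The plan is to estimate the three summands of $\cE=\bE\bZ^*\bA^*+\bA\bZ\bE^*+\bE\bE^*$ from (\ref{201}) separately and combine them with the triangle inequality for the spectral norm $\|\cdot\|_2$. The last term is immediate: $\|\bE\bE^*\|_2=\|\bE\|_2^2$. The first two terms are adjoints of each other, $(\bE\bZ^*\bA^*)^*=\bA\bZ\bE^*$, hence have equal norm, so it suffices to bound one of them, say $\|\bA\bZ\bE^*\|_2\le\|\bA\bZ\|_2\,\|\bE\|_2\le\|\bA\|_2\,\|\bZ\|_2\,\|\bE\|_2$.

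The key ingredient is that $\bA=\tilde\bA_\cS$ is the column submatrix of $\tilde\bA$ indexed by the support $\cS$, so the upper inequality in the restricted isometry property (\ref{rip'}) says precisely that $\|\bA v\|_2\le\sqrt{1+\delta^+_\cS}\,\|v\|_2$ for every $v\in\IC^s$, i.e. $\|\bA\|_2\le\sqrt{1+\delta^+_\cS}$. With $\zeta_{\rm max}=\|\bZ\|_2$ this already gives $\|\cE\|_2\le\|\bE\|_2^2+2\zeta_{\rm max}\sqrt{1+\delta^+_\cS}\,\|\bE\|_2$, and the weaker form in (\ref{211}) follows from $\delta^+_\cS\le\delta^+_s$, which is (\ref{112}).

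For the refined bound (\ref{211'}), in which $\bZ=\bX\bPsi^*$, I would factor further: $\|\bA\bZ\|_2\le\|\bA\|_2\,\|\bX\|_2\,\|\bPsi^*\|_2$. The diagonal matrix $\bX=\hbox{diag}(\xi_j)$ has $\|\bX\|_2=\xi_{\rm max}$, and since $\bPsi=\tilde\bPsi_\cS$ likewise has all its columns indexed by $\cS$, (\ref{rip'}) applied to $\bPsi$ gives $\|\bPsi^*\|_2=\|\bPsi\|_2\le\sqrt{1+\delta^+_\cS}$. Multiplying, $\|\bA\bZ\|_2\le(1+\delta^+_\cS)\,\xi_{\rm max}$, and inserting this in place of $\|\bA\bZ\|_2$ in the computation above yields (\ref{211'}) after again relaxing $\delta^+_\cS$ to $\delta^+_s$.

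There is no real obstacle here; the only point to be careful about is the bookkeeping — one must note that $\bA$ and $\bPsi$ have all their columns indexed by $\cS$ so that (\ref{rip'}) legitimately controls $\|\bA\|_2$ and $\|\bPsi\|_2$ (rather than using the crude bound $\|\bA\|_2\le\sqrt{s}$). Observe also that (\ref{211}) needs the RIP only for $\bA$, with $\zeta_{\rm max}=\|\bZ\|_2$ absorbing the rest, whereas (\ref{211'}) uses it for both $\bA$ and $\bPsi$.
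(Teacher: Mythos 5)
Your proof is correct and follows essentially the same route as the paper's: split $\cE$ by the triangle inequality, use that the two cross terms are adjoints, and bound $\|\bA\bZ\bE^*\|_2$ via submultiplicativity together with $\|\bA\|_2\le\sqrt{1+\delta^+_\cS}$ (and likewise $\|\bPsi\|_2\le\sqrt{1+\delta^+_\cS}$ for the scattering case), which is exactly how the paper invokes (\ref{rip'}). The only cosmetic difference is the order in which you peel off factors ($\|\bA\bZ\|_2\|\bE\|_2$ versus the paper's $\|\bA\|_2\|\bZ\bE^*\|_2$), which changes nothing.
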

\begin{proof} First we have 
 \beqn
 \|\cE\|_2\leq \|\bE\|_2^2+2\|\bA\bZ \bE^*\|_2.
 \eeqn
The RIP (\ref{rip'}) then implies that
\[
\|\bA\bZ \bE^*\|_2^2\leq (1+\delta^+_{\cS}) \|\bZ\bE^*\|_2^2
\]
and thus
\beqn
 \|\cE\|_2&\leq &\|\bE\|_2^2+2\sqrt{1+\delta^+_{\cS}}  \|\bZ\bE^*\|_2\\
 &\leq& \|\bE\|_2^2+2\sqrt{1+\delta^+_{\cS}}  \|\bZ\|_2\|\bE^*\|_2. 
 \eeqn
 
 In the case of scattering objects  $\bZ=\bX\bPsi^*$ 
 \[
  \|\bZ\bE^*\|_2
=\|\bE \bPsi \bX^*\|_2\leq\|\bE\|_2\|\bPsi \bX^*\|_2  \leq \|\bE\|_2
\xi_{\rm max} \sqrt{1+\delta^+_\bS} 
\]
 provided that
$\bPsi$ also satisfies the RIP (\ref{rip'}). In this case, 
\beq
\label{210}
 \|\cE\|_2\leq \|\bE\|_2^2+2\xi_{\rm max} (1+\delta^+_{\cS})  \|\bE\|_2
 \eeq
 and hence (\ref{211'}).

\end{proof}

\begin{lemma} \label{prop4'}   The minimum nonzero
singular value $\sigma_{\rm min}$ of $\cY$ obeys
the lower bound
\beq
\label{226}
\sigma_{\rm min}\geq (1-\delta^-_\cS)\zeta^2_{\rm min}\geq
(1-\delta^-_s)\zeta^2_{\rm min}
\eeq
where
\beq
\label{209}
\zeta_{\rm min}=\min_{\be\in \IC^s}{ \|\bZ^*\be\|_2\over \|\be\|_2}. 
\eeq

\commentout{On the other hand, the maximum singular value $\sigma_{\rm max}$ of $\cY$ satisfies
\beq
\label{227}
\sigma_{\rm max}\leq (1+\delta_\cS) \zeta_{\rm max}^2\leq
(1+\delta_s) \zeta_{\rm max}^2.
\eeq}

 For the case of scattering objects $\bZ=\bX\bPsi^*$ with  $\bPsi$ satisfying 
 (\ref{rip'}),  we have
\beq
\label{209'}
\sigma_{\rm min}&\geq& (1-\delta^-_\cS)^2\xi^2_{\rm min}\geq
(1-\delta^-_s)^2\xi^2_{\rm min}
\eeq

\end{lemma}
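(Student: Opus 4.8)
The plan is to work directly from the definitions of $\sigma_{\rm min}$ and of the restricted isometry constants. Recall $\cY=\bA\bZ\bZ^*\bA^*$, so $\cY$ is self-adjoint and positive semidefinite with rank equal to $\hbox{rank}(\bZ)=s$. The nonzero singular values of $\cY$ therefore equal its $s$ largest eigenvalues, and $\sigma_{\rm min}$ is the smallest eigenvalue of the restriction of $\cY$ to $\hbox{Ran}(\bA\bZ)=\hbox{Ran}(\bA)$ (here we use that $\bZ$ has full row rank $s$, so $\bZ^*$ is injective and $\hbox{Ran}(\bA\bZ)=\hbox{Ran}(\bA)$ has dimension $s$ when $\bA$ itself is injective on $\IC^s$, which follows from $\delta^-_\cS<1$). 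Thus
\beqn
\sigma_{\rm min}=\min_{\substack{\bw\in\hbox{\scriptsize Ran}(\bA)\\ \|\bw\|_2=1}}\bw^*\cY\bw
=\min_{\be\in\IC^s,\ \be\neq 0}\frac{\|\bZ^*\bA^*\bw_\be\|_2^2}{\|\bw_\be\|_2^2},
\eeqn
but it is cleaner to parametrize differently: write a generic unit vector in $\hbox{Ran}(\bA)$ as $\bA\bc/\|\bA\bc\|_2$ for $\bc\in\IC^s$ and compute $\bw^*\cY\bw = \|\bZ^*\bA^*\bA\bc\|_2^2/\|\bA\bc\|_2^2$. This is still awkward; instead I would use the standard fact that for $\bZ$ of full row rank, the nonzero eigenvalues of $\bA\bZ\bZ^*\bA^*$ coincide with those of $\bZ^*\bA^*\bA\bZ\in\IC^{m\times m}$, hence $\sigma_{\rm min}=\lambda_{\rm min}^+(\bZ^*\bA^*\bA\bZ)=\min_{\be}\|\bA\bZ\be\|_2^2/\|\be\|_2^2$ over $\be$ with $\bZ\be\neq 0$, and since $\bZ$ has full row rank $s\leq m$... no — $\bZ\in\IC^{s\times m}$, so $\bZ\be$ ranges over all of $\IC^s$ as $\be$ ranges over $\IC^m$. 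Therefore, substituting $\by=\bZ\be$,
\beqn
\sigma_{\rm min}=\min_{\substack{\be\in\IC^m\\ \bZ\be\neq 0}}\frac{\|\bA\bZ\be\|_2^2}{\|\be\|_2^2}
=\min_{\by\in\IC^s,\ \by\neq 0}\frac{\|\bA\by\|_2^2}{\|\by\|_2^2}\cdot\frac{\|\by\|_2^2}{\min\{\|\be\|_2^2:\bZ\be=\by\}}
\geq\Big(\min_{\by\neq 0}\frac{\|\bA\by\|_2^2}{\|\by\|_2^2}\Big)\cdot\Big(\min_{\by\neq 0}\frac{\|\by\|_2^2}{\|(\bZ^*)^\dagger\by\|_2^{-2}... }\Big).
\eeqn

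That last manipulation is getting muddled, so let me instead take the direct route, which I expect is the intended one. Every $\bw\in\hbox{Ran}(\cY)=\hbox{Ran}(\bA)$ can be written $\bw=\bA\be$ for a \emph{unique} $\be\in\IC^s$ when $\delta^-_\cS<1$ (columns of $\bA=\tilde\bA_\cS$ independent). For such $\bw$, $\cY\bw=\bA\bZ\bZ^*\bA^*\bA\be$, and
\beqn
\bw^*\cY\bw=\|\bZ^*\bA^*\bA\be\|_2^2=\|\bZ^*\bu\|_2^2,\qquad \bu:=\bA^*\bA\be.
\eeqn
Since $\be\mapsto\bA^*\bA\be$ is a bijection on $\IC^s$ (again by $\delta^-_\cS<1$), and $\|\bw\|_2^2=\be^*\bA^*\bA\be=\be^*\bu$, I get
\beqn
\sigma_{\rm min}=\min_{\bu\neq 0}\frac{\|\bZ^*\bu\|_2^2}{(\bA^*\bA)^{-1}\bu\cdot\bu}
\geq\Big(\min_{\bu\neq 0}\frac{\|\bZ^*\bu\|_2^2}{\|\bu\|_2^2}\Big)\cdot\min_{\bu\neq 0}\frac{\|\bu\|_2^2}{\bu^*(\bA^*\bA)^{-1}\bu}
=\zeta_{\rm min}^2\cdot\lambda_{\rm min}(\bA^*\bA),
\eeqn
and $\lambda_{\rm min}(\bA^*\bA)=\min_{\be\neq 0}\|\bA\be\|_2^2/\|\be\|_2^2\geq 1-\delta^-_\cS$ by the RIP \eqref{rip'} applied to vectors supported on $\cS$ (extending $\be\in\IC^s$ by zeros to $\IC^N$). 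This gives $\sigma_{\rm min}\geq(1-\delta^-_\cS)\zeta_{\rm min}^2\geq(1-\delta^-_s)\zeta_{\rm min}^2$ using \eqref{112}, which is \eqref{226}.

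For the scattering case $\bZ=\bX\bPsi^*$, it remains to bound $\zeta_{\rm min}=\min_{\be\in\IC^s}\|\bZ^*\be\|_2/\|\be\|_2=\min_\be\|\bPsi\bX^*\be\|_2/\|\be\|_2$ from below. Since $\bX=\hbox{diag}(\xi_j)$ is invertible with $\|\bX^*\be\|_2\geq\xi_{\rm min}\|\be\|_2$, and applying \eqref{rip'} to $\bPsi$ (whose columns are supported on $\cS$) gives $\|\bPsi\bv\|_2\geq\sqrt{1-\delta^-_\cS}\,\|\bv\|_2$ for all $\bv\in\IC^s$, I obtain $\|\bPsi\bX^*\be\|_2\geq\sqrt{1-\delta^-_\cS}\,\xi_{\rm min}\|\be\|_2$, hence $\zeta_{\rm min}\geq\sqrt{1-\delta^-_\cS}\,\xi_{\rm min}$. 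Squaring and combining with \eqref{226} yields $\sigma_{\rm min}\geq(1-\delta^-_\cS)^2\xi_{\rm min}^2\geq(1-\delta^-_s)^2\xi_{\rm min}^2$, which is \eqref{209'}. The one point requiring care — and the main obstacle — is the reduction of $\sigma_{\rm min}$ to a Rayleigh quotient over $\hbox{Ran}(\bA)$ and the clean separation of the two minima; here one must be sure that $\bA^*\bA$ is invertible on $\IC^s$ (which is exactly what $\delta^-_\cS<1$ guarantees) so that the change of variables is legitimate and the submultiplicative bound $\|\bZ^*\bu\|_2^2\geq\zeta_{\rm min}^2\|\bu\|_2^2$ combines correctly with $\bu^*(\bA^*\bA)^{-1}\bu\leq(1-\delta^-_\cS)^{-1}\|\bu\|_2^2$.
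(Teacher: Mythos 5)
Your final argument is correct and takes essentially the same route as the paper's: both reduce $\sigma_{\rm min}$ to a Rayleigh-quotient bound over $\hbox{\rm Ran}(\bA)$ and then combine $\lambda_{\rm min}(\bA^*\bA)\geq 1-\delta^-_\cS$ (the RIP of $\bA$ restricted to $\cS$) with the smallest singular value $\zeta_{\rm min}$ of $\bZ^*$, handling the scattering case identically by bounding $\zeta_{\rm min}\geq\sqrt{1-\delta^-_\cS}\,\xi_{\rm min}$ via the RIP of $\bPsi$. The only cosmetic difference is that you evaluate the quotient exactly after the change of variables $\bu=\bA^*\bA\be$, whereas the paper invokes the Courant--Fischer max--min theorem and the eigenbasis of $\bA\bA^*$; the two abandoned preliminary attempts should simply be deleted.
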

\begin{proof}

Using  the max-min theorem  \cite{HJ}
 \beq
 \label{212}
 \sigma_s(\cY)=\max_{{\rm \tiny dim} \cH=s}
 \min_{\be\in  \cH} {\|\cY\be\|_2\over\|\be\|_2}
 \eeq
with $\cH=\hbox{\rm Ran}(\bA)$ we obtain
\beq
\sigma_{\rm min}(\cY)\geq 
\min_{\be\in \hbox{\rm Ran}(\bA)\atop \|\be\|_2=1}\|\bA\bZ \bZ^*\bPhi^*  \be\|_2. \label{69'}
\eeq

Let $\{\bu_j: j=1,...,s\}$ be the eigen-vectors of $\bA\bA^*$ associated
with the nonzero eigenvalues $\{\lambda^2_1\geq \lambda_2^2\geq ...\geq \lambda_s^2\}$ and form the orthonormal 
basis of $\cH$. Write $\be=\sum_{j=1}^s e_j \bu_j$ where
$\sum_{j}^s |e_j|^2=1$. 
We have
\[
\bA\bA^* \be=\sum_{j=1}^s \lambda_j^2e_j\bu_j
\]
and thus
\beq
\label{70}
\|\bA^* \be\|_2^2=\sum_{j=1}^s \lambda_j^2 |e_j|^2\geq
\lambda_{s}^2.
\eeq

It follows from (\ref{69'}), (\ref{rip'}) and (\ref{70})  that
\beq
\label{71}
\sigma_{\rm min}&\geq \sqrt{1-\delta^-_\cS} \|\bZ\bZ^*\bA^* \be\|_2
&\geq \sqrt{1-\delta^-_\cS} \zeta^2_{\rm min} \lambda_s
\eeq
by (\ref{209}).  
\commentout{In particular, for $\bZ=\bX\bPsi^*$ we have
\beq
\label{71'}
\sigma_{\rm min}&\geq \sqrt{1-\delta_\cS} \|\bZ\bZ^*\bA^* \be\|_2
&\geq \sqrt{1-\delta_\cS} \zeta^2_{\rm min} \lambda_s
\eeq
}
On the other hand, $\lambda_s^2 $ is exactly the smallest
eigenvalue of $\bA^*\bA\in \IC^{s\times s}$ and hence
by (\ref{rip'}) is bounded from below by $1-\delta^-_\cS$.   
Using this observation in (\ref{71}) we
obtain (\ref{226}). 

\commentout{
For $\sigma_{\rm max}$ we have the following calculation:
\beqn
\sigma_{\rm max}&=&\sup_{\|\be\|_2=1} \|\bA\bZ \be\|_2^2\\
&\leq& (1+\delta_\cS)\sup_{\|\be\|_2=1} \|\bZ\be\|_2^2\\
&\leq& (1+\delta_\cS) \zeta_{\rm max}^2
\eeqn
implying  (\ref{227}). }

In the case of scattering objects we can bound $\zeta^2_{\rm min}$ as 
\beqn
\zeta^2_{\rm min}&\geq& (1-\delta^-_\cS) \xi^2_{\rm min} \\
\zeta^2_{\rm max}&\leq& (1+\delta^+_\cS) \xi^2_{\rm max} 
\eeqn
 by using (\ref{rip'})  with $\bPsi$ and hence the result (\ref{209'}). 

\commentout{

Restricted to $\cH$, $\bA^*:\cH\to\IC^s$ is invertible. 
Let us write $Z=\bA^{\dagger *} Z'$ for some $Z'\in \IC^s$. 
In other words, $Z'=\bA^* Z$.  Recall the 
RIP (\ref{rip}) in a different form 
\[
(1-\delta_s)\|\bA^\dagger Z\|_2^2\leq \|Z\|^2_2
\leq (1+\delta_s)\|\bA^\dagger Z\|_2^2. 
\]
}

\commentout{
Choose $Z\in \hbox{\rm Ran}(\bA)$ such that
\[
\tilde\bA^* Z=\hbox{\rm sgn}( {\bX})
\]
where 
\[
\hbox{\rm sgn}({\bX})=\lt\{\begin{matrix}
{\xi_j\over |\xi_j|},& j=1,..,s\\
0,& j=s+1,...,N
\end{matrix}\rt.
\]

Indeed, we may set
\[
Z=\bA^{\dagger *} \hbox{\rm sgn}({\bX})
\]
which is well-defined as $\tilde\bA^*$ has rank $s$.
Substituting this trial vector in (\ref{var}) we obtain
\[
\sigma_{\rm min}\geq \|\bA^{\dagger *} \hbox{\rm sgn}({\bX})\|_2^{-1}
\]
}
\end{proof}

\commentout{
An immediate consequence of (\ref{226}) and (\ref{227}) is
the following.

\begin{corollary}
\label{cor30}
We have the general bound
\[
\kappa(\cY)\leq  {1+\delta_\cS\over 1-\delta_\cS} \kappa^2(\bZ^*)\leq {1+\delta_s\over 1-\delta_s} \kappa^2(\bZ^*)
\]
and in the case of scattering objects $\bZ=\bX\bPsi^*$ with $\bPsi$ satisfying (\ref{rip'}) 
\[
\kappa(\cY)\leq  \lt({1+\delta_\cS\over 1-\delta_\cS}\rt)^2 {\xi_{\rm max}^2\over \xi_{\rm min}^2} \leq \lt({1+\delta_s\over 1-\delta_s} \rt)^2{\xi_{\rm max}^2\over \xi_{\rm min}^2}. 
\]

\end{corollary}
}

Next we derive   a lower bound for
$\Gamma_\cS$ in terms of RIC. 
\begin{lemma} Fix $S, |S|=s$. 
Then the lower bound is valid
\beq
\label{88}
 \Gamma_\cS\geq 1-\max_{\br\in\cS^c\atop
 \cS'=\cS\cup \{\br\}}{\delta^-_{\cS'}(1+\delta^+_\cS)\over 2+\delta_\cS^+-\delta^-_{\cS'}}\geq  1-{\delta^-_{s+1}(1+\delta^+_s)\over 2+\delta^+_s-\delta^-_{s+1}}. 
\eeq
\label{prop3'}
\end{lemma}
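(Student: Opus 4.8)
The plan is to estimate, for a fixed $\br\in\cS^c$ with $\cS'=\cS\cup\{\br\}$, the quantity $\|\phi_\br\|_2^{-2}\phi_\br^*\bA\bA^\dagger\phi_\br$ from above, and then take the worst case over $\br$ and pass to order-$s$ constants via (\ref{112}). Write $P=\bA\bA^\dagger$ for the orthogonal projection onto $\mathrm{Ran}(\bA)$. Since $\|\phi_\br\|_2=1$ by normalization, I want an upper bound on $\|P\phi_\br\|_2^2$, equivalently a lower bound on $\mathrm{dist}(\phi_\br,\mathrm{Ran}(\bA))^2=\|\phi_\br-P\phi_\br\|_2^2$. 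The natural route is to decompose $\phi_\br=P\phi_\br+w$ with $w\perp\mathrm{Ran}(\bA)$, write $P\phi_\br=\bA c$ for some coefficient vector $c\in\IC^s$, and then consider the $(s+1)$-sparse vector $Z\in\IC^N$ supported on $\cS'$ whose $\cS$-part is $-c$ and whose $\br$-entry is $1$; this gives $\tilde\bA Z=\phi_\br-\bA c=w$, so $\|\tilde\bA Z\|_2^2=\|w\|_2^2=1-\|P\phi_\br\|_2^2$.

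Now apply the restricted isometry property (\ref{rip'}) to this $Z$: $(1-\delta^-_{\cS'})\|Z\|_2^2\le\|\tilde\bA Z\|_2^2\le(1+\delta^+_{\cS'})\|Z\|_2^2$. Here $\|Z\|_2^2=1+\|c\|_2^2$. The upper side gives $1-\|P\phi_\br\|_2^2\le(1+\delta^+_{\cS'})(1+\|c\|_2^2)$, which is useless directionally, so the work is on the lower side: $1-\|P\phi_\br\|_2^2\ge(1-\delta^-_{\cS'})(1+\|c\|_2^2)$. I still need to control $\|c\|_2^2$ from below (or rather relate it to $\|P\phi_\br\|_2^2$) using RIP again — applying (\ref{rip'}) with the $s$-sparse vector supported on $\cS$ with coefficients $c$ yields $\|P\phi_\br\|_2^2=\|\bA c\|_2^2\le(1+\delta^+_\cS)\|c\|_2^2$, hence $\|c\|_2^2\ge\|P\phi_\br\|_2^2/(1+\delta^+_\cS)$. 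Substituting, $1-\|P\phi_\br\|_2^2\ge(1-\delta^-_{\cS'})\bigl(1+\|P\phi_\br\|_2^2/(1+\delta^+_\cS)\bigr)$. Writing $t=\|P\phi_\br\|_2^2\in[0,1]$ and solving the linear inequality for $t$ gives an upper bound $t\le\delta^-_{\cS'}(1+\delta^+_\cS)/(2+\delta^+_\cS-\delta^-_{\cS'})$. Then
\[
\Gamma_\cS=\sqrt{1-\max_{\br\in\cS^c}t}\ \ge\ \sqrt{1-\max_{\br\in\cS^c}\frac{\delta^-_{\cS'}(1+\delta^+_\cS)}{2+\delta^+_\cS-\delta^-_{\cS'}}},
\]
and since $\sqrt{1-x}\ge 1-x$ for $x\in[0,1]$, and the displayed fraction lies in $[0,1]$, I get the first stated inequality; monotonicity of $x/(2+\delta^+_\cS-x)$ in $x$ together with $\delta^-_{\cS'}\le\delta^-_{s+1}$, $\delta^+_\cS\le\delta^+_s$ and (\ref{112}) gives the second.

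The main obstacle I anticipate is purely algebraic bookkeeping: making sure the chain of RIP applications is tight enough that the two $\delta^-$ quantities appearing ($\delta^-_{\cS'}$ in the lower bound on $\|w\|_2^2$) and the $\delta^+$ quantity ($\delta^+_\cS$ in the lower bound on $\|c\|_2^2$) are exactly the ones in the claimed bound, rather than a messier combination — in particular one must resist the temptation to bound $\|c\|_2$ via the lower RIP constant, which would produce $\delta^-_\cS$ where $\delta^+_\cS$ is wanted, and one must check that $Z$ is genuinely $(s+1)$-sparse (it is, since $\br\notin\cS$) so that (\ref{rip'}) with the index set $\cS'$ legitimately applies. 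The step $\sqrt{1-x}\ge1-x$ is where a small amount of sharpness is deliberately traded away to get the clean form quoted.
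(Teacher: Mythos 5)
Your proposal is correct and follows essentially the same route as the paper: the paper writes $\cP\phi_\br=\phi'$ (your $w$), forms the same $(s+1)$-sparse vector $Z=(-c_1,\dots,-c_s,1,0,\dots,0)^T$ supported on $\cS'$, applies the lower RIP bound on $\cS'$ to $\tilde\bA Z=\phi'$ and the upper RIP bound on $\cS$ to $\bA c$ via the Pythagorean identity, and solves the resulting linear inequality to get exactly the bound $\|\phi'\|_2^2\geq 1-\delta^-_{\cS'}(1+\delta^+_\cS)/(2+\delta^+_\cS-\delta^-_{\cS'})$, finishing with the same monotonicity observation for the order-$s$ constants. Your explicit use of $\sqrt{1-x}\geq 1-x$ just makes visible a step the paper leaves implicit when passing from the bound on $\|\phi'\|_2^2$ to the bound on $\Gamma_\cS$.
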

\commentout{
\begin{remark}
Alternatively, by exactly the same argument we have
the lower bound
\beq
\label{88'}
 \Gamma_\cS\geq 1-\max_{\br\in\cS^c\atop
 \cS'=\cS\cup \{\br\}}{\delta_{\cS'}(1+\delta_\cS)\over 2+\delta_\cS-\delta_{\cS'}}. 
\eeq
\label{prop3''}
\end{remark}
}
\begin{proof}
Without loss of generality, suppose
$\cS=\{1,2,...,s\}$ and consider $\phi_\br=\tilde\Phi_{s+1}$. 
Let $\cS'=\cS\cup \{s+1\}$. 
Our subsequent analysis is independent of these choices modulo
inconsequential notational  change. 

Denote $\cP\phi_\br=\phi'$ and write the orthogonal decomposition 
\beq
\label{67}
\phi_\br=\phi'+\sum_{j=1}^s c_j \tilde\Phi_j. 
\eeq
Hence we can express $\phi'$ as
\[
\phi'=\tilde\Phi_{s+1}-\sum_{j=1}^s c_j \tilde\Phi_j=
\tilde\bA Z,\quad Z=(-c_1,-c_2,...,-c_s,1,0,...,0)^T\in \IC^N.
\]
Using (\ref{rip'}) for sparsity  $\cS'$  we obtain a lower bound for $\|\phi'\|_2$:
\beq
\label{66}
(1-\delta^-_{\cS'})(1+\sum_{j=1}^s |c_j|^2)
\leq\|\phi'\|_2^2.
\eeq

On the other hand, 
we have by the Pythagorean theorem that 
\beq
\label{68}
\|\phi_\br\|_2^2=\|\phi'\|_2^2+\|\sum_{j=1}^s c_j \tilde\Phi_j\|_2^2.
\eeq
Applying  (\ref{rip'}) for sparsity $\cS$ to the second term on the right hand side of (\ref{68}) we obtain 
\beqn
\|\phi_\br\|^2_2-\|\phi'\|_2^2=\|\sum_{j=1}^s c_j \tilde\Phi_j\|_2^2 \leq (1+\delta^+_\cS) \sum_{j=1}^s |c_j|^2
\eeqn
and hence
\beq
\label{69}
 \sum_{j=1}^s |c_j|^2\geq  (1+\delta^+_\cS)^{-1}\lt( \|\phi_\br\|^2_2-\|\phi'\|_2^2\rt). 
\eeq

Combining (\ref{69}) and (\ref{66})  we obtain
\[
\|\phi'\|_2^2 \geq (1-\delta^-_{\cS'}) \lt(1+{1-\|\phi'\|_2^2\over 1+\delta^+_\cS}\rt)
\]
which can be solved to yield 
\beq
\label{89}
\|\phi'\|_2^2\geq 1-{\delta^-_{\cS'}(1+\delta^+_\cS)\over 2+\delta^+_\cS-\delta^-_{\cS'}}.
\eeq
Minimizing (\ref{89}) over $\br\in \cS^c$ we obtain 
the first inequality in (\ref{88}).

The second inequality (\ref{88}) follows from (\ref{112}) and 
the observation (by differentiation) that the quantity
\[
{\delta^-_{\cS'}(1+\delta^+_\cS)\over 2+\delta^+_\cS-\delta^-_{\cS'}}
 \]
 is an increasing function of $\delta^-_{\cS'}\in [0,1]$ and $\delta^+_\cS\in [0,\infty)$ separately.

\end{proof}

Combining the preceding results  we have the following stability criterion  for exact recovery by  MUSIC.
 \begin{theorem}\label{cor2.1}  \label{thmb}
   Suppose $\delta^-_{s+1}<1$ (implying (\ref{61}))
   and $\|\bE\|_2=\ep$.

If the noise-to-object ratio (NOR) satisfies 
 \beq
 \label{203}\label{87}\label{100}
 {\ep\over \zeta_{\rm min}}
 <  
 \sqrt{{(1+\delta^+_s)}{\zeta_{\rm max}^2\over \zeta_{\rm min}^2}+{(1-\delta^-_s)\Delta}}
 -{\zeta_{\rm max}\over \zeta_{\rm min}}\sqrt{1+\delta^+_s}
 \eeq
 where $\Delta$ is given by (\ref{202})
 then  the object support $\cS$  can be identified by
 the thresholding rule
 \beq
 \lt\{ \br\in \cK: J^\ep(\br) \geq  2 \lt(1-{\delta^-_{s+1}(1+\delta^+_s)\over 2+\delta^+_s-\delta^-_{s+1}}\rt)^{-2}\rt\}. \label{240}
 \eeq
 
 In the case of scattering objects $\bZ=\bX\bPsi^*$ with
 $\bPsi^*$ satisfying the RIP (\ref{rip}) 
 the thresholding rule (\ref{240}) holds 
 under  the following bound
 on the noise-to-scatterer ratio (NSR)  
 \beq
 \label{203'}
{\ep\over \xi_{\rm min}}  < 
 \sqrt{{(1+\delta^+_s)^2}{\xi^2_{\rm max}\over \xi^2_{\rm min}} +(1-\delta^-_s)^2\Delta}
 -{(1+\delta^+_s)}{\xi_{\rm max}\over \xi_{\rm min}}
 \eeq
 where  
$\xi_{\rm max}/\xi_{\rm min}$ is the dynamic range of
 scatterers.

 \end{theorem}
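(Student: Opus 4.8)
The plan is to feed the three compressed-sensing lemmas just proved (Lemmas \ref{lem5}, \ref{prop4'}, \ref{prop3'}) into the abstract sensitivity criterion of Theorem \ref{thma}; the only genuine work is an elementary quadratic manipulation that turns the condition $\|\cE\|_2/\sigma_{\rm min}<\Delta$ into the explicit noise bounds (\ref{203}) and (\ref{203'}).

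First, $\delta^-_{s+1}<1$ already forces the characterization (\ref{61}), so Theorem \ref{thma} applies as soon as $\|\cE\|_2/\sigma_{\rm min}<\Delta$. Bounding the numerator from above by Lemma \ref{lem5} and the denominator from below by Lemma \ref{prop4'} gives
\[
\frac{\|\cE\|_2}{\sigma_{\rm min}}\ \le\ \frac{\ep^2+2\zeta_{\rm max}\sqrt{1+\delta^+_s}\,\ep}{(1-\delta^-_s)\,\zeta^2_{\rm min}}.
\]
Requiring the right-hand side to be $<\Delta$ is the quadratic inequality $\ep^2+2\zeta_{\rm max}\sqrt{1+\delta^+_s}\,\ep-\Delta(1-\delta^-_s)\zeta^2_{\rm min}<0$, whose positive root, divided through by $\zeta_{\rm min}$, is exactly the right-hand side of (\ref{203}). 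For the scattering case $\bZ=\bX\bPsi^*$ with $\bPsi$ satisfying (\ref{rip}) I would rerun this with the sharper bounds $\|\cE\|_2\le\ep^2+2\xi_{\rm max}(1+\delta^+_s)\ep$ and $\sigma_{\rm min}\ge(1-\delta^-_s)^2\xi^2_{\rm min}$ furnished by the same two lemmas; the analogous quadratic in $\ep$ has positive root equal to the right-hand side of (\ref{203'}), the extra powers of $1\pm\delta^\pm_s$ there arising from invoking the RIP a second time, for $\bPsi$, when passing from $\bZ$ to $\bX$.

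Once (\ref{202}) is in force, Theorem \ref{thma} delivers exact recovery through the rule $\{J^\ep\ge2\Gamma_\cS^{-2}\}$, and it remains to make this computable. Here I would substitute the RIC lower bound $\Gamma_\cS\ge\gamma_0:=1-\delta^-_{s+1}(1+\delta^+_s)/(2+\delta^+_s-\delta^-_{s+1})$ from Lemma \ref{prop3'} in both places $\Gamma_\cS$ occurs: inside $\Delta$, so that (\ref{203}) becomes a bound in terms of the RIC alone, and inside the threshold, which then becomes the one in (\ref{240}). The key observation is that replacing $\Gamma_\cS$ by $\gamma_0$ simultaneously tightens the admissible noise level (through $\Delta$) and raises the threshold (through $\Gamma_\cS^{-2}$), so that under the tightened (\ref{203}) the bounds (\ref{81})--(\ref{82}) still place every on-support value of $J^\ep$ above, and every off-support value below, the raised threshold; the monotonicity of $x\mapsto x/(4x+2\sqrt{2})$ — equivalently of $\Delta$ — in $\Gamma_\cS$ is what makes this internally consistent.

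I expect the delicate point to be exactly this last bookkeeping rather than the algebra: Theorem \ref{thma} is phrased with the unknown $\Gamma_\cS$, whereas (\ref{240}) may use only $\delta^\pm$, and one must check that the noise margin produced by (\ref{203}) survives after $\Gamma_\cS$ is replaced by $\gamma_0$ throughout. The quadratic computations themselves are routine; the only care needed there is to keep $\ep$ on the correct branch of the parabola and to track the bookkeeping of the RIC factors, which enter under a square root in the general case and linearly in the scattering case.
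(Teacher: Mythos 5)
Your proposal is correct and follows essentially the same route as the paper: Lemmas \ref{lem5} and \ref{prop4'} bound $\|\cE\|_2/\sigma_{\rm min}$ from above and below respectively, the resulting quadratic in $\ep$ is solved to produce (\ref{203}) and (\ref{203'}) (with the extra factors of $1\pm\delta^\pm$ in the scattering case coming from the second use of the RIP for $\bPsi$, exactly as you say), and Lemma \ref{prop3'} converts the threshold $2\Gamma_\cS^{-2}$ of Theorem \ref{thma} into the RIC-only threshold (\ref{240}). The one point where you genuinely diverge is the threshold verification, and your version is the more careful one. The paper keeps $\Delta=\Delta(\Gamma_\cS)$ in the noise hypothesis and asserts that $2L^{-2}$, with $L=1-\delta^-_{s+1}(1+\delta^+_s)/(2+\delta^+_s-\delta^-_{s+1})$, lies below the on-support lower bound (\ref{81}); but the condition $\rho<\Delta(\Gamma_\cS)$ only yields $4\rho^2(1-\rho)^2/(1-2\rho)^4<\Gamma_\cS^2/2$, which does not by itself give the needed bound $<L^2/2$, since $L\le\Gamma_\cS$. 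Substituting $L$ for $\Gamma_\cS$ inside $\Delta$ as well, as you propose, tightens the admissible noise level just enough to make both halves of the threshold check go through (off-support values stay below $2L^{-2}$ because $\Gamma_\cS^2-t>L^2-L^2/2$, on-support values stay above because $t<L^2/2$), at the harmless cost of a slightly more restrictive hypothesis; this is the right way to close the step the paper dispatches with ``it is straightforward to check.''
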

 
  \begin{proof}
  By  (\ref{211}) and  (\ref{226})
 \beq
 \label{204}
\rho^2+2\rho{\zeta_{\rm max}\over \zeta_{\rm min}}\sqrt{1+\delta^+_s} 
 < \Delta,\quad \rho={\ep\over \zeta_{\rm min}}
 \eeq
 implies  (\ref{202}) in Theorem \ref{thma}.  
 The sufficiency of (\ref{100}) now follows from
 solving the quadratic inequality (\ref{204}) for  $\rho$. 
 
 The derivation of the thresholding rule (\ref{240}) under the stronger condition (\ref{100})  is exactly
 the same as that of (\ref{241}). Alternatively, we can use (\ref{81}) and (\ref{82}) to verify  validity of
 the thresholding rule (\ref{240}) as follows. Let
 \[
 L=1-{\delta^-_{s+1}(1+\delta^+_s)\over 2+\delta^+_s-\delta^-_{s+1}}. 
 \]
By using (\ref{82}) and (\ref{ineq})  it is straightforward to check  that $
 2L^{-2}$ is greater than the right hand side of (\ref{82}). On the other hand, (\ref{202}) and Lemma \ref{prop3'} imply that
$2L^{-2}$ is smaller than the right hand side of (\ref{81}). 
 
 The proof for the case of scattering objects is exactly the same
 as above. 
 \end{proof}

\begin{remark}
\label{rmk72}
The right hand side of (\ref{203'}) decreases as the ratio
\beq
\label{super}
{(1-\delta_s^-)^2\Delta\over 
2(1+\delta_s^+)\xi_{\rm max}/\xi_{\rm min}}
\eeq
decreases. 
In the underresolved case (Section \ref{sec:num}),
$\delta_s^-$ is close to $1$, making the ratio
(\ref{super}) a small number. For a noise-to-scatterer
ratio  smaller than (\ref{super})
the $s$ scatterers can be perfectly localized by
the MUSIC algorithm with thresholding. This
is the superresolution effect. 

\end{remark}

A simple upper bound for  the RIC can be given in terms of the notion of coherence parameter $\mu(\tilde\bPhi)$ defined as
\[
\mu(\tilde\bPhi)=\max_{i\neq j} {\lt|\sum_{l} \tilde\Phi_{li}\tilde\Phi^*_{lj}\rt|\over
\sqrt{\sum_{l}|\tilde\Phi_{li}|^2\sum_{l} |\tilde\Phi_{lj}|^2}} . 
\]
Namely, $\mu(\tilde\bA)$ is the maximum of cosines of angles
between any two columns. 

The proof of the following well known result is elementary
and instructive. 
\begin{proposition} 
\label{prop2} For any $r\in \IN$, we have 
\[
\delta^\pm_r\leq \mu(\tilde\bA) (r-1).
\]

\end{proposition}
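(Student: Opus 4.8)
The plan is to bound the restricted isometry constants of $\tilde\bA$ directly in terms of the off-diagonal entries of the Gram matrix $\tilde\bA^*\tilde\bA$. Fix an index set $\cS$ with $|\cS| = r$ and a vector $Z \in \IC^N$ supported on $\cS$. Since the columns of $\tilde\bA$ are normalized to unit $2$-norm, the Gram matrix $G = \tilde\bA_\cS^*\tilde\bA_\cS \in \IC^{r\times r}$ has all diagonal entries equal to $1$, and each off-diagonal entry $G_{ij}$ satisfies $|G_{ij}| \leq \mu(\tilde\bA)$ by the definition of the coherence parameter. Writing $\|\tilde\bA Z\|_2^2 = Z^* G Z = \|Z\|_2^2 + Z^*(G - \bI)Z$, I would estimate $|Z^*(G-\bI)Z|$ and show it is at most $\mu(\tilde\bA)(r-1)\|Z\|_2^2$.

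The key step is the bound on the perturbation term. One clean way is to observe that $G - \bI$ is a Hermitian $r\times r$ matrix with zero diagonal and entries of modulus at most $\mu := \mu(\tilde\bA)$; hence by the Gershgorin circle theorem every eigenvalue of $G - \bI$ lies in a disc centered at $0$ of radius at most $\sum_{j\neq i}|G_{ij}| \leq \mu(r-1)$, so $\|G-\bI\|_2 \leq \mu(r-1)$. Consequently
\beq
\label{riccoh}
\big|\,\|\tilde\bA Z\|_2^2 - \|Z\|_2^2\,\big| = |Z^*(G-\bI)Z| \leq \|G-\bI\|_2\,\|Z\|_2^2 \leq \mu(r-1)\|Z\|_2^2,
\eeq
which gives $(1 - \mu(r-1))\|Z\|_2^2 \leq \|\tilde\bA Z\|_2^2 \leq (1 + \mu(r-1))\|Z\|_2^2$ for every $r$-sparse $Z$. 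Since $\delta^-_r$ and $\delta^+_r$ are by definition the \emph{smallest} constants making \eqref{rip} hold for all $r$-sparse vectors, both are bounded above by $\mu(\tilde\bA)(r-1)$, as claimed.

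The only mild subtlety — and the one place to be careful — is that the two-sided estimate \eqref{riccoh} must hold uniformly over \emph{all} choices of support $\cS$ of size $r$, not just one; but the Gershgorin bound depends on $\cS$ only through the cardinality $r$, so the uniformity is automatic. An alternative to Gershgorin, if one prefers to avoid invoking it, is to split $G - \bI$ and bound $|Z^*(G-\bI)Z| \leq \mu \sum_{i\neq j}|Z_i||Z_j| \leq \mu\big((\sum_i |Z_i|)^2 - \sum_i |Z_i|^2\big) \leq \mu(r-1)\sum_i |Z_i|^2 = \mu(r-1)\|Z\|_2^2$, where the last inequality is Cauchy--Schwarz applied to the vector $(|Z_i|)_{i\in\cS}$ of length $r$; this reproduces \eqref{riccoh} by elementary means. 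No step here is a genuine obstacle; the result is essentially a packaging of standard coherence estimates.
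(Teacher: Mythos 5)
Your proof is correct. It takes a mildly different route from the paper's: you bound the operator norm of the hollow Gram matrix $G-\bI$ by Gershgorin's theorem, whereas the paper expands $\|\tilde\bA Z\|_2^2-\|Z\|_2^2=\sum_{i\neq j}\tilde\Phi_i^*\tilde\Phi_j Z_i^*Z_j$ directly and bounds $\sum_{i\neq j}|Z_i^*Z_j|\leq (r-1)\|Z\|_2^2$ via the elementary inequality $2ab\leq a^2+b^2$. The Gershgorin packaging buys a slightly cleaner statement (it makes explicit that what is really being bounded is the spectral norm of $\tilde\bA_\cS^*\tilde\bA_\cS-\bI$, from which both $\delta_r^+$ and $\delta_r^-$ follow at once, and the uniformity over supports of size $r$ is transparent); the paper's computation is more elementary and self-contained. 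Your "alternative" argument at the end --- bounding $\sum_{i\neq j}|Z_i||Z_j|=(\sum_i|Z_i|)^2-\sum_i|Z_i|^2\leq(r-1)\|Z\|_2^2$ by Cauchy--Schwarz --- is essentially the paper's own proof with Cauchy--Schwarz playing the role of the AM--GM step, so the two approaches coincide up to that substitution. No gaps.
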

\begin{proof}
Calculating  
the quantity $\|\tilde\bA Z\|_2^2-\|Z\|_2^2$,  we have
\beqn
\lt|\|\tilde\bA Z\|_2^2-\|Z\|_2^2\rt|&=
\lt|\sum_{i\neq j}\tilde\Phi_i^*\tilde\Phi_j Z^*_iZ_j\rt|
&\leq \mu(\tilde\bA) \sum_{i\neq j}\lt| Z^*_i Z_j\rt|.
\eeqn
Using the quadratic inequality $2ab\leq a^2+b^2$ we obtain
\beqn
\sum_{i\neq j}\lt| Z^*_i Z_j\rt|&\leq 
{1\over 2} \sum_{i\neq j}\lt( |Z_i|^2+ |Z_j|^2\rt)
&\leq \sum_{i\neq j\atop Z_j\neq 0} |Z_i|^2\leq (r-1) \|Z\|_2^2.
\eeqn
Therefore (\ref{rip}) is satisfied with 
$\delta^\pm_{r}\leq \mu(\tilde\bA) (r-1). $

\end{proof}
\begin{remark}
For $s=2$  it follows from Proposition \ref{prop2} that
\[
\delta^\pm_2\leq \mu(\tilde\bA). 
\]
Since $\mu$ is almost surely less than unity for
randomly selected sampling directions, 
the MUSIC algorithm will find the true location
of object in the absence of noise, if there is only one object.

\end{remark}

The  coherence  bound for the most general setting 
of random sampling directions is this.
 \begin{proposition}\cite{cis-simo}\label{thm1} Suppose
 any two points in $\cK$ are separated by at least
 $\ell>0$. 
Let $\hat\bs_k, k=1,...,n$ be independently drawn from the distribution $f^{\rm s}$
on the $(d-1)$-dimensional sphere
independently and identically.  
Suppose
\beq
\label{m-2'}
N\leq {\alpha\over 8} e^{K^2/2}
\eeq
for any positive constants $\alpha, K$. 
Then   $\tilde\bA$  satisfies the coherence bound
\beqn
\label{mut'}
\mu(\tilde\bPhi) < \chi^{\rm s}+{\sqrt{2}K\over \sqrt{n}}
\eeqn
 with probability greater than $(1-\alpha)^2$
 where  $\chis$ satisfies the bound
 \beq
 \label{142''}
 &\chi^{\rm s}\leq {c_t}{(1+\om \ell)^{-1/2}} \|\pdfs\|_{t,\infty},&d=2\\
 &\chi^{\rm s}\leq {c_1}{(1+\om \ell)^{-1}} \|\pdfs\|_{1,\infty},&d=3\label{143''}
 \eeq
 where $\|\cdot\|_{t,\infty}$ is the H\"older norm
 of order $t>1/2$ and the constant $c_t$  depends only on $t$.
\end{proposition}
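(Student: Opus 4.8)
The plan is to establish the coherence bound by splitting $\mu(\tilde\bPhi)$ into its expectation (or a deterministic envelope) and a fluctuation term, then control each separately. Fix two distinct grid points $\br_i, \br_j \in \cK$ with $|\br_i - \br_j| \geq \ell$. The $(i,j)$ inner product is $\sum_{l=1}^n \tilde\Phi_{li}\tilde\Phi^*_{lj} = \frac{1}{n}\sum_{l=1}^n e^{-i\om \hat\bs_l\cdot(\br_i - \br_j)}$, a sample average of i.i.d.\ bounded complex random variables each of modulus $1$. First I would identify the population mean $\IE\,e^{-i\om \hat\bs\cdot(\br_i-\br_j)} = \int e^{-i\om \hat\bs\cdot(\br_i - \br_j)}\,\pdfs(\hat\bs)\,d\hat\bs$, which is (up to normalization) the Fourier transform of the sampling density $\pdfs$ evaluated at the frequency vector $\om(\br_i - \br_j)$. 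Since $|\om(\br_i - \br_j)| \geq \om\ell$, stationary-phase / oscillatory-integral estimates give decay of this Fourier transform controlled by the smoothness of $\pdfs$: for $d=3$ the decay is $\cO((1+\om\ell)^{-1})$ weighted by $\|\pdfs\|_{1,\infty}$, and for $d=2$ the decay is $\cO((1+\om\ell)^{-1/2})$ weighted by the H\"older norm $\|\pdfs\|_{t,\infty}$, $t > 1/2$. This is exactly the quantity $\chis$ and its bounds (\ref{142''})--(\ref{143''}); I would cite \cite{cis-simo} for this deterministic harmonic-analysis estimate.

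Next I would handle the fluctuation. For fixed $(i,j)$, write $\frac{1}{n}\sum_l e^{-i\om\hat\bs_l\cdot(\br_i-\br_j)} - \IE[\cdot]$ as a normalized sum of $n$ i.i.d.\ centered bounded random variables; apply a Hoeffding-type concentration inequality (real and imaginary parts separately, or a complex Bernstein/Hoeffding bound) to get that this deviation exceeds $\sqrt{2}K/\sqrt{n}$ with probability at most, say, $4 e^{-K^2}$ or a comparable Gaussian tail. Then take a union bound over all $\binom{N}{2} \leq N^2/2$ pairs $(i,j)$. The hypothesis (\ref{m-2'}), $N \leq \frac{\alpha}{8}e^{K^2/2}$, is precisely calibrated so that $N^2 \cdot (\text{tail}) \leq$ something like $2\alpha$ (one $\alpha$ from the $\hat\bs_k$ drawing for the matrix $\tilde\bA$), yielding the stated probability $(1-\alpha)^2$ — the square appears because the same argument is run independently for the two families of random directions (sampling and incident), i.e.\ for both $\tilde\bA$ and $\tilde\bPsi$, or because two independent events must hold simultaneously. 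Combining the triangle inequality $\mu(\tilde\bPhi) \leq \chis + (\text{max fluctuation})$ with these two pieces gives $\mu(\tilde\bPhi) < \chis + \sqrt{2}K/\sqrt{n}$ on the good event.

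The main obstacle is the deterministic oscillatory-integral bound for $\chis$: controlling $\bigl|\int e^{-i\om\hat\bs\cdot\bv}\pdfs(\hat\bs)\,d\hat\bs\bigr|$ with the sharp power of $(1+\om\ell)$ and the correct function-space norm of $\pdfs$ requires care about the geometry of the sphere (the curvature that produces the extra half-power in $d=3$ versus $d=2$) and about where the phase $\hat\bs \mapsto \hat\bs\cdot\bv$ is stationary on the sphere. Since this is the content of \cite{cis-simo}, I would invoke it as a black box rather than reprove it; the remaining steps — concentration plus union bound plus the bookkeeping that makes (\ref{m-2'}) give probability $(1-\alpha)^2$ — are routine. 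A secondary point to get right is that the bound must hold \emph{uniformly} over all pairs in $\cK$ simultaneously, which is why the cardinality constraint on $N$ enters and why $K$ must be chosen large enough relative to $\log N$ for the union bound to close.
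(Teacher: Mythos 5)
The paper does not prove this proposition; it imports it from \cite{cis-simo}, and your sketch reproduces the argument of that reference essentially verbatim: the coherence is split into the expected phase sum, bounded by the decay of the Fourier transform of $\pdfs$ on the sphere (giving the $(1+\om\ell)^{-(d-1)/2}$ rates in (\ref{142''})--(\ref{143''})), plus a fluctuation term controlled by Hoeffding's inequality and a union bound over the $\cO(N^2)$ pairs, with (\ref{m-2'}) calibrated to yield probability $(1-\alpha)^2$. Your outline is correct and follows the same route, with the oscillatory-integral estimate rightly treated as the one nontrivial ingredient to be cited rather than reproved.
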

\begin{remark}\label{rmk3}

Replacing  $\tilde\bA$, $\hat\bs_k$ and $f^{\rm s}$ in Proposition \ref{thm1} by
 $\tilde\bPsi, \bd_k$ and $f^{\rm i}$, respectively, we have 
 the same conclusion about $\tilde\bPsi$. 

\end{remark}
The constraint  (\ref{m-2'}) on the 
number of  search points in the computation grid $\cK$
is relatively weak and 
allows an extremely refined grid.
However, to have a small $\chi^{\rm s}$
$\ell$ can not be small compared to the wavelength.

 Suppose  $\om\ell \geq C^2n $ for $d=2$ or
 $\om\ell\geq C\sqrt{n}$ for $d=3$ where
  \beq
 \label{120}
  C\geq {c_1\over \sqrt{2} K} \max\lt\{\|f^{\rm s}\|_{1,\infty}, \|f^{\rm i}\|_{1,\infty}\rt\}. 
  \eeq
  Then, according to Propositions \ref{thm1},  \ref{prop2} and Remark \ref{rmk3}, with high probability,  for 
  continuously differentiable distributions $f^{\rm s}, f^{\rm i}$
  we have  
  \beq
  \label{230}
\delta^\pm_{s}\leq  \delta^\pm_{s+1} \leq 2\sqrt{2}K s/\sqrt{n}. 
    \eeq 
Theorem \ref{cor2.1} then implies the following.
\begin{corollary}
\label{cor1}  Suppose  $\om\ell \geq C^2n $ for $d=2$ or
 $\om\ell\geq C\sqrt{n}$ for $d=3$  with $C$ given by (\ref{120}). 

Suppose that $\sqrt{n}/{s}\geq 4\sqrt{2} K$  (hence $\delta^\pm_s, \delta^\pm_{s+1} \leq  1/2$ by (\ref{230})) and that the {\rm NSR} obeys
\beq
\label{150}
{\ep\over \xi_{\rm min}}  < \lt(
 \sqrt{{9\over 4} {\xi^2_{\rm max}\over  \xi^2_{\rm min}}+{\Delta\over 4}}
 -{3\over 2} {\xi_{\rm max}\over \xi_{\rm min}}\rt)
 \eeq
 where 
 $\Delta$ is given in (\ref{202}). 
 Then  under the assumptions of Proposition \ref{thm1}
the MUSIC algorithm with the thresholding rule
 \beq
 \label{75'}
 \lt\{ \br\in \cK: J^\ep(\br) \geq  {128\over 25} \rt\}
  \eeq
recovers  exactly the locations of $s$ scatterers with probability at least $(1-\alpha)^2$. 
\end{corollary}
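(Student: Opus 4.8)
The plan is to deduce Corollary~\ref{cor1} from Theorem~\ref{cor2.1} by feeding in the coherence-based estimate (\ref{230}) and then trading the object-dependent threshold of Theorem~\ref{cor2.1} for the uniform constant $128/25$. First I would record the restricted isometry bounds: under the hypotheses $\om\ell\geq C^2n$ ($d=2$) or $\om\ell\geq C\sqrt n$ ($d=3$) with $C$ as in (\ref{120}), Propositions~\ref{thm1} and~\ref{prop2} together with Remark~\ref{rmk3} give, with probability at least $(1-\alpha)^2$, the estimate (\ref{230}), i.e. $\delta^\pm_s\leq\delta^\pm_{s+1}\leq 2\sqrt2\,Ks/\sqrt n$ for both $\tilde\bA$ and $\tilde\bPsi$. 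The assumption $\sqrt n/s\geq 4\sqrt2\,K$ then forces $\delta^\pm_s,\delta^\pm_{s+1}\leq 1/2$; in particular $\delta^-_{s+1}<1$, so (\ref{61}) holds and Theorem~\ref{cor2.1} applies. Along the way I would note that Lemma~\ref{prop3'}, combined with the monotonicity of $\delta^-_{s+1}(1+\delta^+_s)/(2+\delta^+_s-\delta^-_{s+1})$ in each argument on $[0,1/2]$, gives $\Gamma_\cS\geq L\geq 5/8$, where $L$ is the quantity appearing in the threshold of (\ref{240}).

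Second, I would check that (\ref{150}) is a sufficient form of the NSR bound (\ref{203'}). Writing the right-hand side of (\ref{203'}) as $\sqrt{a^2+b}-a$ with $a=(1+\delta^+_s)\xi_{\rm max}/\xi_{\rm min}$ and $b=(1-\delta^-_s)^2\Delta$, this expression is decreasing in $a$ and increasing in $b$; since $\delta^\pm_s\leq 1/2$ gives $a\leq\frac32\,\xi_{\rm max}/\xi_{\rm min}$ and $b\geq\frac14\Delta$ (with $\Delta$ at its worst value, the one associated with $\Gamma_\cS=5/8$), the right-hand side of (\ref{203'}) dominates the right-hand side of (\ref{150}). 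Hence (\ref{150}) implies (\ref{203'}), and Theorem~\ref{cor2.1} yields exact recovery by the thresholding rule (\ref{240}) with threshold $2L^{-2}$.

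Third, I would verify that replacing $2L^{-2}$ by the uniform constant $128/25=2(8/5)^2\geq 2L^{-2}$ in (\ref{75'}) does not spoil exact recovery. Here one must control $\rho=\|\cE\|_2/\sigma_{\min}$ against the \emph{worst-case} value of $\Delta$ (the one with $\Gamma_\cS=5/8$), which is precisely what (\ref{150}) provides. Revisiting (\ref{81})--(\ref{82}): with $\rho<\Delta$ one obtains $u:=4\rho^2(1-\rho)^2/(1-2\rho)^4<(5/8)^2/2=25/128$, so by (\ref{81}) $J^\ep(\br)\geq 1/u>128/25$ for $\br\in\cS$, while by (\ref{82}) $J^\ep(\br)\leq(\Gamma_\cS^2-u)^{-1}<128/25$ for $\br\in\cS^c$, because $\Gamma_\cS^2-u>25/64-25/128=25/128$. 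Therefore $\{\br\in\cK:J^\ep(\br)\geq 128/25\}=\cS$, with probability at least $(1-\alpha)^2$.

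The main obstacle is this last step: the bookkeeping needed to show that the object-independent constant $128/25$, rather than the sharp object-dependent $2L^{-2}$, still strictly separates the signal- and noise-subspace values of $J^\ep$. This is why the NSR bound (\ref{150}) and the threshold (\ref{75'}) must be matched through the single worst-case constant $5/8$ (forcing $u<25/128$ rather than merely $u<\Gamma_\cS^2/2$) instead of through the actual $L$ and $\Gamma_\cS$; the monotonicity computations in Steps~1 and~2 are routine once the correct directions are identified.
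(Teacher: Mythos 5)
Your route is the paper's route: the paper offers no argument for Corollary \ref{cor1} beyond the remark that $\max_{\delta^+_s,\delta^-_{s+1}\le 1/2}\delta^-_{s+1}(1+\delta^+_s)/(2+\delta^+_s-\delta^-_{s+1})=3/8$, so that the threshold $2L^{-2}$ of (\ref{240}) is at most $2(8/5)^2=128/25$, and your Steps 1--2 (the RIC bound (\ref{230}) forcing $\delta^\pm_s,\delta^\pm_{s+1}\le 1/2$, hence $\Gamma_\cS\ge L\ge 5/8$, together with the monotonicity of $\sqrt{a^2+b}-a$ showing that (\ref{150}) implies (\ref{203'})) are correct and are exactly what the paper intends.

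The genuine issue is in Step 3, where you have isolated the right difficulty but resolved it with a false implication. With $\Delta=\Delta(\Gamma_\cS)$ as literally defined in (\ref{202}), the condition $\rho<\Delta$ is \emph{equivalent} to $u<\Gamma_\cS^2/2$ (this is the content of (\ref{ineq})), not to $u<25/128$: Lemma \ref{prop3'} bounds $\Gamma_\cS$ from \emph{below} by $5/8$, so $\Gamma_\cS^2/2$ may be as large as $1/2$, and for $\rho$ between $\Delta(5/8)$ and $\Delta(\Gamma_\cS)$ one has $u>25/128$, whence the lower bound $1/u$ on $J^\ep$ over $\cS$ falls below $128/25$ and the inclusion $\cS\subset\{J^\ep\ge 128/25\}$ is no longer certified. (The $\cS^c$ side is fine: $u<\Gamma_\cS^2/2$ already gives $(\Gamma_\cS^2-u)^{-1}<2\Gamma_\cS^{-2}\le 128/25$.) The argument closes only under the reading your Step 2 parenthetical hints at: $\Delta$ in (\ref{150}) must be taken as the worst-case value $\frac12-\frac12\bigl(\tfrac{5\sqrt2}{8}+1\bigr)^{-1/2}$, which forces $\rho<\Delta(5/8)$ and hence $u<25/128$; that is not what ``$\Delta$ is given in (\ref{202})'' says, so you should state this strengthening explicitly. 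Be aware that the identical gap sits in the paper's own justification of the rule (\ref{240}) in Theorem \ref{cor2.1}: the claim that $2L^{-2}$ is smaller than the right-hand side of (\ref{81}) requires $u<L^2/2$, whereas (\ref{202}) delivers only $u<\Gamma_\cS^2/2$, and $L\le\Gamma_\cS$ points the wrong way. In short, your proposal is faithful to the paper and more explicit than it, but the final separation step needs the worst-case interpretation of $\Delta$ (equivalently, a strengthened NSR bound) to be valid.
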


The value $128/25$ is arrived  from the fact that 
\[
\max_{\delta^+_s,\delta^-_{s+1}<1/2}
{\delta^-_{s+1}(1+\delta^+_s)\over 2+\delta^+_s-\delta^-_{s+1}}
\leq 3/8.
\]

\commentout{
For the near-field measurement,
the columns  take the form of
the Green function vector
\[
\phi_{\br_j}=\lt( G(\bs_1, \br_j),\cdots, G(\bs_n, \br_j)\rt)^T
\]
where $\bs_k$ are the locations of the sensors. 

 \begin{proposition}\cite{cis-simo}
 \label{thm2}
Suppose  
\beq
\label{m}
m\leq {\alpha\over 2} e^{2K^2/r_0^2},\quad \alpha>0
\eeq
where $c_0$ depends on the minimum distance $\Delta_{\rm min} $ between
$\{z=0\}$ and the lattice (For $d=2$, $r_0=
\cO(-\log{\Delta_{\rm min}})$; for $d=3$, $r_0=\cO(\Delta_{\rm min}^{-1})$).

The mutual coherence obeys
\beq
\label{1043}
\mu(\bPhi)& \leq & |G(\Delta_{\rm max})|^{-2} \lt({\sqrt{2}K \over \sqrt{n}}+ {c\over \red \sqrt{\om L}}\rt),\quad d=2\\
\mu(\bPhi) &\leq  & |G(\Delta_{\rm max})|^{-2} \lt({\sqrt{2}K \over \sqrt{n}}+ {c\over \red {\om L}}\rt),\quad d=3\label{1044}
\eeq
for some constant $c$ (independent of $\om>0$ for $d=2$  and $\om>1$ for $d=3$),   with probability greater 
than $(1-\delta)^2$,  where $\Delta_{\rm max}$
 is the largest
distance between the array and the lattice.  
\end{proposition}
}

\section{Planar objects: optimal recovery}\label{sec:opt}

Let us consider the favorable imaging geometry where all the scatterers lie on the
transverse plane $z=0$. Furthermore,
we consider the idealized situation where the locations of the scatterers are a subset $\cS$ of
a finite square lattice $\cK$ of spacing $\ell$
\beq
\label{44'}
\cK=\lt\{\br_j: j=1,...,N\rt\}=\lt\{(p_1\ell , p_2\ell, 0): p_1, p_2=1,...,\sqrt{N}\rt\},\quad j=(p_1-1)\sqrt{N}+p_2. 
\eeq
Hence the total number of grid points
 $N$ is  a perfect square.
 
 Suppose we choose the frequency such that
 \beq
 \label{freq}
 \om \ell=\sqrt{2}\pi.
 \eeq
 Let  $\ba_k=(\xi_k,\eta_k), k=1,...,n$ be  independently and  uniformly 
distributed random variables in  $[-1, 1]^2$ and
set
\beq
\label{ang}
\hat\bs_k={1\over \sqrt{2}}(\ba_k, \sqrt{2-|\ba_k|^2}).
\eeq
Let the incident directions $\bd_l, l=1,...,m$ be
selected the same way but independently from
$\hat \bs_k, k=1,...,n$.  It can be proved
that with $m\geq s$  the corresponding sensing matrix $\bPsi$
has rank $s$ with probability one. 

With (\ref{freq})-(\ref{ang}) and $j=(p_1-1)\sqrt{N}+p_2$
the scattering amplitude (\ref{4'}) for  linear extended objects  yields the following extended sensing matrix
\beq
\label{1001}
\tilde\Phi_{k,j}= e^{-\pi i \ba_k\cdot\bp} \in \IC^{n\times N}
\eeq
The matrix (\ref{1001}) is often referred to as the random partial
Fourier matrix in compressed sensing theory. 

The following is a standard result about random partial Fourier matrix \cite{Rau}. 
\begin{proposition} \cite{Rau}
\label{rau} Suppose $\ba_j, j=1,...,n$ are independently and  uniformly 
distributed in  $[-1, 1]^d, d\geq 1$. 
If 
\beq
\label{rau2}
{n\over \ln{n}}\geq C \delta_*^{-2} s\ln^2{s} \ln{N} \ln{1\over \gamma}
\eeq
for $\gamma\in (0,1)$ and some absolute constant $C$,  then 
with  probability at least $1-\gamma$
the random partial Fourier  matrix defined by (\ref{1001})
satisfies
the RIC bound
\beq
\label{rau22}
\delta^\pm_s<\delta_*. 
\eeq
\end{proposition}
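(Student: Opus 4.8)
The plan is to recognize that $\tilde\bPhi$, after the implicit normalization by $n^{-1/2}$ that makes $\IE[\tfrac1n\tilde\bPhi^*\tilde\bPhi]=\bI$, is the sampling matrix of a bounded orthonormal system: the $N$ functions $\bx\mapsto e^{\pi i\bp\cdot\bx}$ indexed by the lattice points $\bp$ form an orthonormal system on $[-1,1]^d$ with respect to the uniform probability measure, each bounded in modulus by $1$. Writing $\varphi_k=(e^{\pi i\ba_k\cdot\bp})_\bp$ for the $k$-th row and $D_{s,N}$ for the set of $s$-sparse unit vectors of $\IC^N$, the restricted isometry constant is the supremum of the centered empirical process
\[
\delta_s=\sup_{x\in D_{s,N}}\Bigl|\tfrac1n\sum_{k=1}^n|\langle\varphi_k,x\rangle|^2-1\Bigr|,
\]
so the proposition amounts to the tail bound $\IP(\delta_s\ge\delta_*)\le\gamma$ under (\ref{rau2}).

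First I would estimate $\IE\delta_s$. Symmetrization gives $\IE\delta_s\le 2\,\IE\sup_{x\in D_{s,N}}\bigl|\tfrac1n\sum_k\epsilon_k|\langle\varphi_k,x\rangle|^2\bigr|$ with $(\epsilon_k)$ an independent Rademacher sequence. For $x\in D_{s,N}$ one has $|\langle\varphi_k,x\rangle|\le\|x\|_1\le\sqrt s$, so $z\mapsto|z|^2$ is $2\sqrt s$-Lipschitz on the relevant disk and the (vector-valued) contraction principle reduces the bound to a linear Rademacher chaos, i.e.\ essentially $\max_{|S|=s}\IE_\epsilon\|\tfrac1n\sum_k\epsilon_k\varphi_k^S(\varphi_k^S)^*\|_{2\to2}$. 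This I would control either by Dudley's entropy integral, using a two-scale ($\ell_2$ and $\ell_1$) covering of $D_{s,N}$, or by Rudelson's lemma (noncommutative Khintchine) on each fixed support $S$ followed by a union bound over the $\binom Ns\le(eN/s)^s$ supports; the chaining/union step is what produces the polylogarithmic factors $\log^2 s$ and $\log N$, and solving the resulting self-referential inequality for $\delta_s$ shows that $n/\log n\gtrsim\delta_*^{-2}s\log^2 s\,\log N$ forces $\IE\delta_s\le\delta_*/2$.

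Second, I would pass from the expectation to the high-probability statement: $\delta_s$ is a supremum of an empirical process whose summands are uniformly bounded (after the harmless truncation by $s$) with weak variance $\cO(1+\delta_s)$, so Talagrand's concentration inequality for suprema of bounded empirical processes (equivalently, the Bernstein-type bound in Foucart--Rauhut) yields $\IP(\delta_s\ge\IE\delta_s+t)\le\exp(-c\,nt^2/s)$ in the range $t\lesssim1$; taking $t=\delta_*/2$ and requiring this probability $\le\gamma$ contributes the factor $\log(1/\gamma)$ in (\ref{rau2}) and completes the argument. The main obstacle is the middle step --- getting the sharp dependence $s\log^2 s\,\log N$ rather than a crude power of $\log N$ --- which hinges on the correct covering-number estimates for sparse vectors (or the precise form of Rudelson's lemma) and on carefully attributing each logarithm to the chaining versus the union over supports; symmetrization, contraction and the concentration step are routine. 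As this proposition is quoted verbatim from \cite{Rau}, one may also simply invoke that reference.
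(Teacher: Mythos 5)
The paper gives no proof of this proposition --- it is quoted verbatim from \cite{Rau} --- and your sketch correctly reconstructs the strategy of that reference: view (\ref{1001}) as the sampling matrix of a bounded orthonormal system on $[-1,1]^d$, bound $\IE\,\delta_s$ by symmetrization, contraction and a Rudelson/Dudley entropy argument (the source of the $\ln^2 s\,\ln N$ factors), then upgrade to a tail bound by concentration, which supplies the $\ln(1/\gamma)$ factor. Since the statement is invoked here as a black box, citing \cite{Rau}, as you do in your final sentence, is all that is required, and your outline is consistent with how that result is actually proved.
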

\begin{remark}\label{rmk8}
The result holds true for sampling points $\ba_j$ 
which are i.i.d. uniform r.v.s in the discrete set
\beq
\lt\{\lt({k_1\over N^{1/d}}, {k_2\over N^{1/d}},\cdots,{k_d\over N^{1/d}}\rt): k_1, k_2=-N^{1/d},...,N^{1/d}-1\rt\}
\eeq
instead of $[-1,1]^d$ where $N^{1/d}$ is assumed to be an integer.
\end{remark}
\commentout{
\begin{remark}
Since $\bD_1$ and $\bD_2$ are isometries of
their respective vector spaces, $\tilde\bA$ satisfies
the RIC bound (\ref{rau22}) with the same probability
under (\ref{rau2}).  
\end{remark}
}

Assume for simplicity the plane wave incidence as before.
Choosing
 $\delta_*=1/2$ for sparsity $s+1$ in Proposition \ref{rau} and
 using Theorem \ref{cor2.1} we obtain the following result. 
\begin{corollary}\label{cor2} Suppose that (\ref{freq}) is true
and that  $\hat\bs_k=\bd_k, k=1,...,n$ with 
\beq
\label{rau3}
{n\over \ln{n}}\geq 4 C (s+1)\ln^2{(s+1)} \ln{N} \ln{1\over \gamma}. 
\eeq
If  the {\rm NSR} obeys (\ref{150})
the MUSIC algorithm with the thresholding rule (\ref{75'})  recovers  exactly the
locations of $s$ scatterers with probability at least $1-\gamma$. 

\end{corollary}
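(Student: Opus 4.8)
The plan is to assemble Corollary \ref{cor2} purely by specialization from the machinery already built, so there is essentially no new work beyond checking that hypotheses match. First I would invoke Proposition \ref{rau} with $d=2$, sparsity parameter $s+1$ in place of $s$, and $\delta_*=1/2$: the hypothesis (\ref{rau3}) is exactly (\ref{rau2}) with these substitutions, so with probability at least $1-\gamma$ the extended sensing matrix $\tilde\bPhi$ of (\ref{1001}) satisfies $\delta^+_{s+1}<1/2$ and $\delta^-_{s+1}<1/2$; since $\delta^\pm_s\leq\delta^\pm_{s+1}$ this also gives $\delta^\pm_s<1/2$. Under (\ref{freq})--(\ref{ang}) the scattering amplitude reduces to precisely the random partial Fourier matrix (\ref{1001}), and the normalization there makes the columns have unit $2$-norm, so the RIP (\ref{rip}) is the relevant inequality; in particular $\delta^-_{s+1}<1$, which by the remark after (\ref{rip}) yields the characterization (\ref{61}), i.e.\ $\phi_\br\in\mathrm{Ran}(\bA)$ iff $\br\in\cS$.

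Second, because $\hat\bs_k=\bd_k$ we have $\bPsi=\bA$, so $\bPsi=\tilde\bPhi_\cS$ inherits the same RIP; this is what allows us to use the scattering-object branch of Theorem \ref{cor2.1}, whose hypothesis is exactly that $\bPsi^*$ satisfies the RIP (\ref{rip}). (Even without $\hat\bs_k=\bd_k$, Proposition \ref{rau} applied to $\bd_l$ via Remark \ref{rmk3}-type reasoning would give $\bPsi$ rank $s$, but the stated corollary takes the diagonal case, which is cleaner.) With $\delta^\pm_s<1/2$ the dynamic-range bound (\ref{203'}) in Theorem \ref{cor2.1} is implied by the more explicit NSR bound (\ref{150}) — this is the same reduction already carried out in the proof of Corollary \ref{cor1}, replacing $2\sqrt2 Ks/\sqrt n$-type estimates by the flat bound $1/2$: one checks $(1+\delta^+_s)\leq 3/2$ and $(1-\delta^-_s)\geq 1/2$ so that the right side of (\ref{203'}) is at least the right side of (\ref{150}). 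Hence (\ref{150}) implies (\ref{203'}).

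Third, I would conclude via Theorem \ref{cor2.1} that the thresholding rule (\ref{240}) recovers $\cS$ exactly; it remains only to see that (\ref{240}) can be replaced by the explicit rule (\ref{75'}) with threshold $128/25$. This is the computation already displayed after Corollary \ref{cor1}: for $\delta^+_s,\delta^-_{s+1}<1/2$ one has
\[
\frac{\delta^-_{s+1}(1+\delta^+_s)}{2+\delta^+_s-\delta^-_{s+1}}\leq \frac38,
\]
obtained by monotonicity of that expression in each variable separately (as noted in the proof of Lemma \ref{prop3'}), so $L=1-\tfrac38=\tfrac58$ and $2L^{-2}=2\cdot(8/5)^2=128/25$; since a larger threshold only shrinks the selected set while (\ref{81})--(\ref{82}) still separate on-support from off-support values, the rule (\ref{75'}) is valid. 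Collecting the $1-\gamma$ probability from Proposition \ref{rau} finishes the argument.

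The main obstacle, such as it is, is bookkeeping rather than mathematics: one must be careful that Proposition \ref{rau} is applied with sparsity $s+1$ (so that $\delta^-_{s+1}<1$, needed for (\ref{61}) and for Lemma \ref{prop3'}), that the normalization conventions of (\ref{1001}) and (\ref{entry})--(\ref{entry2}) agree so the RIP constants are the right ones, and that the identification $\bPsi=\bA$ under $\hat\bs_k=\bd_k$ is what licenses the scattering-object form of Theorem \ref{cor2.1}. No step requires an estimate not already proved above.
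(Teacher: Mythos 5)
Your proposal is correct and follows essentially the same route as the paper, which derives Corollary \ref{cor2} in one line by invoking Proposition \ref{rau} with $d=2$, sparsity $s+1$ and $\delta_*=1/2$ (so that (\ref{rau3}) is exactly (\ref{rau2}) and $\delta^\pm_{s+1}<1/2$ holds with probability $1-\gamma$), identifying $\bPsi=\bA$ under $\hat\bs_k=\bd_k$, and then applying Theorem \ref{cor2.1} together with the bound $\max\,\delta^-_{s+1}(1+\delta^+_s)/(2+\delta^+_s-\delta^-_{s+1})\leq 3/8$ that yields the threshold $128/25$. Your bookkeeping (monotonicity giving (\ref{150})$\Rightarrow$(\ref{203'}), and the replacement of the threshold $2L^{-2}$ in (\ref{240}) by its upper bound $128/25$) matches the paper's own reasoning at the same level of detail.
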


\subsection{Paraxial regime}

Here  we would like to extend the scattering  problem 
to the paraxial regime for the preceding set-up  where, instead of $n$ sampling directions, $n$ point sensors located on the transverse plane $z=z_0$
measure the scattered field. 

We shall make the paraxial approximation
for the Green function between the object plane 
$z=0$ and the sensor plane $z=z_0$:
\beq
\label{70'}
G(\bs,\br)={e^{i\om z_0}\over 4\pi z_0} 
e^{i\om (x^2+y^2)/(2z_0)} e^{-i\om x\xi /z_0}
e^{-i\om y\eta/z_0}
e^{i\om (\xi^2+\eta^2)/(2z_0)}
\eeq
with $\bs=(\xi,\eta,z_0), \br=(x,y,0)$. 
Denote
\beqn
G_p(\bs,\br)=e^{i\om|x-\xi|^2/(2z_0)}e^{i\om|y-\eta|^2/(2z_0)}.
\eeqn

Let $\bs_k=(\xi_k,\eta_k,z_0), k=1,...,n$ be
the locations of the transceivers. 
We have $\tilde\bA=\tilde\bPsi$ where the extended  matrix  $\tilde\bA$ is given
by \beq
\label{27}
\tilde\Phi_{k,l}=\lt( G(\bs_1, \br_j),\cdots, G(\bs_n, \br_j)\rt)^T,\quad j=1,...,N
\eeq
where $\br_j\in \cK$ defined in (\ref{44'}). 
After proper normalization  the extended sensing matrix $\tilde\bA$ can be written
as the product of three matrices
\beq
\label{52}
\tilde\bA=\bD_1 {\mathbf A}\bD_2
\eeq
where 
\[
\bD_1=\hbox{diag} (e^{i\om (\xi_j^2+\eta_j^2)/(2z_0)}) \in \IC^{n\times n},\quad \bD_2=\hbox{diag} (e^{i\om (x_l^2+y^2_l)/(2z_0)})\in \IC^{N\times N}
\]
 are  unitary and  
\beqn
{\mathbf A}={1\over \sqrt{n}} \lt[e^{-i\om \xi_j x_l /z_0}e^{-i\om \eta_j y_l /z_0}\rt] \in \IC^{n\times N}. 
\eeqn

Now suppose that  $(\xi_j,\eta_j), j=1,...,n$ are independently and  uniformly 
distributed in  $[-{A\over 2}, {A\over 2}]^2$ and write 
$ (\xi_j,\eta_j)= \ba_j\cdot A/2$ with
$\ba_j\in [-1,1]^2, j=1,...,n$. 
Write also $\br_l=(x_l,y_l,0)=(\bp\ell,0)$ where $\bp\in \IZ^2$. 
Then with
\beq
\label{102}
{A\ell\over \lambda z_0}=1
\eeq
${\mathbf A}$ takes the form
\beq
\label{1001'}
{\mathbf A}={1\over \sqrt{n}} \lt[e^{-\pi i \ba\cdot\bp}  \rt] \in \IC^{n\times N}
\eeq
which is exactly 
 the random partial
Fourier matrix  given in (\ref{1001}). Here and below $\lambda$ denotes the wavelength. 

Since both $\bD_1$ and $\bD_2$ are unitary and diagonal,
they leave both the $\ell^2$-norm and the sparsity
of a vector unchanged. Therefore Proposition \ref{rau}
and Remark \ref{rmk8}
are applicable  to $\tilde\bA$ given in (\ref{52}). 

Analogous to Corollary \ref{cor2} we have 
\begin{corollary}\label{cor2'} Suppose that (\ref{102}) is true
and that  there are $n$ transceivers satisfying (\ref{rau3}).
If the {\rm NSR} obeys (\ref{150}), then 
the MUSIC algorithm with the thresholding rule (\ref{75'})  recovers exactly the
locations of $s$ scatterers  with probability at least $1-\gamma$. 
\end{corollary}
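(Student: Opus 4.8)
The plan is to treat Corollary \ref{cor2'} exactly as the paraxial counterpart of Corollary \ref{cor2}, reducing it to Theorem \ref{cor2.1}. The only substantive point is to show that, on an event of probability at least $1-\gamma$, the extended sensing matrix $\tilde\bA$ of the transceiver geometry satisfies the restricted isometry property with $\delta^\pm_{s+1}<1/2$; since in this geometry $\tilde\bPsi=\tilde\bA$, the incident-side matrix automatically carries the same bound. Granting this, $\delta^-_{s+1}<1$ forces any $s+1$ columns of $\tilde\bPsi$ to be linearly independent, so $\bPsi=\tilde\bPsi_\cS$ has rank $s$ and the range characterization (\ref{61}) holds; Theorem \ref{cor2.1} (scattering-objects form) then applies, and specializing its hypotheses to $\delta^+_s=\delta^-_s=1/2$ turns the NSR bound (\ref{203'}) into (\ref{150}) --- the right side of (\ref{203'}) being non-increasing in $\delta^+_s$ and $\delta^-_s$, so that its value at $\delta^\pm_s=1/2$ is a sufficient condition --- and turns the threshold appearing in (\ref{240}) into $128/25$ via $\max_{\delta^+_s,\delta^-_{s+1}<1/2}\frac{\delta^-_{s+1}(1+\delta^+_s)}{2+\delta^+_s-\delta^-_{s+1}}\le 3/8$, exactly as in Corollaries \ref{cor1} and \ref{cor2}. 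This is precisely the thresholding rule (\ref{75'}).

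The RIC estimate for $\tilde\bA$ is the one genuinely new ingredient, and it is essentially already recorded just above the statement. By (\ref{52}) the normalized extended sensing matrix factors as $\tilde\bA=\bD_1\mathbf{A}\bD_2$ with $\bD_1,\bD_2$ unitary and diagonal, and under the paraxial scaling (\ref{102}) the middle factor $\mathbf{A}$ is exactly the random partial Fourier matrix (\ref{1001'}) built from i.i.d.\ samples $\ba_k$ uniform in $[-1,1]^2$ (or in the discrete grid of Remark \ref{rmk8}). Because $\bD_2$ is diagonal, $Z\mapsto\bD_2 Z$ is an $\ell^2$-isometry that preserves the support --- hence the sparsity --- of every vector, while $\bD_1$ is a global isometry; therefore $\|\tilde\bA Z\|_2=\|\mathbf{A}(\bD_2 Z)\|_2$ with $\bD_2 Z$ of the same sparsity as $Z$, so $\tilde\bA$ inherits the restricted isometry constants of $\mathbf{A}$. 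I would then invoke Proposition \ref{rau} with $\delta_*=1/2$ and sparsity $s+1$: for that choice condition (\ref{rau2}) reads $n/\ln n\ge 4C(s+1)\ln^2(s+1)\ln N\ln(1/\gamma)$, which is precisely the hypothesis (\ref{rau3}), and the conclusion is $\delta^\pm_{s+1}<1/2$, hence also $\delta^\pm_s\le\delta^\pm_{s+1}<1/2$, with probability at least $1-\gamma$.

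Putting the two parts together closes the argument: on the probability $1-\gamma$ event where $\delta^\pm_{s+1}<1/2$, all hypotheses of Theorem \ref{cor2.1} are in force, so the MUSIC imaging function $J^\ep$ with the threshold (\ref{75'}) recovers the support $\cS$ exactly. I do not expect a real obstacle here: the heavy lifting resides in Proposition \ref{rau}, in Theorem \ref{cor2.1}, and in the factorization (\ref{52}), all of which are already in place; the sole point deserving a moment's care is that conjugation by the \emph{diagonal} unitaries $\bD_1,\bD_2$ leaves the restricted isometry constants unchanged, which holds precisely because a diagonal matrix does not mix coordinates and therefore carries $s$-sparse vectors isometrically to $s$-sparse vectors.
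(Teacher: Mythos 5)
Your proposal is correct and follows exactly the route the paper intends: it notes (as the paper does just before the statement) that the diagonal unitaries $\bD_1,\bD_2$ in the factorization (\ref{52}) preserve both the $\ell^2$-norm and the sparsity, so $\tilde\bA$ inherits the RIC of the random partial Fourier matrix (\ref{1001'}), and then applies Proposition \ref{rau} with $\delta_*=1/2$ at sparsity $s+1$ followed by Theorem \ref{cor2.1}, exactly as in Corollaries \ref{cor1} and \ref{cor2}. The only addition you make beyond the paper's one-line indication is the explicit monotonicity check that (\ref{150}) implies (\ref{203'}) when $\delta^\pm_s\le 1/2$, which is a correct and welcome detail.
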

In the paraxial setting (\ref{102}) in  Corollary \ref{cor2'} replaces the
condition (\ref{freq}) in Corollary \ref{cor2}. Condition
(\ref{102}) is exactly the classical Rayleigh criterion for
resolution which is, in this case, the grid spacing $\ell$.

\section{Comparison with basis pursuit}\label{sec:bp}
\commentout{
In compressed sensing, the linear relation (\ref{20'}) between
$\bX$ abd $\bY$ is often written differently as follows: Let $Y=\hbox{\rm vec}(\bY)\in \IC^{nm}$
be the vectorized version of the data matrix $\bY\in \IC^{n\times m}$ where $Y$ is obtained by stacking
the $m$ column vectors of $\bY$ into a $nm$-dimensional  vector. 
Let   $\tilde X\in \IC^{N}$ be the vector  containing the diagonal elements of $\tilde\bX$. Then (\ref{15'}) can
be written as
\beq
\label{60}
Y={\mathbf A} \tilde X
\eeq
for some matrix ${\mathbf A}\in \IC^{nm\times N}$. 

}

\commentout{
The advantage of
For simplicity of notation and ease of
comparison, we focus on the case
of $m=1$, i.e. $\tilde\bPsi=\tilde\Psi\in \IC^N$. In this case, we can write
\[
Y=\tilde \bA  Z \in \IC^n,\quad Z=\tilde \bX \tilde \Psi^*\in \IC^N.
\]
If $Z$ can be recovered, then by component-wise division 
$\tilde X$ can be also recovered since $\Psi$ vanishes
nowhere. Note that $\tilde X$ contains all the information
about the locations and amplitudes of the scatterers.

Reconstruction  in compressed sensing
is given by the minimizer, if  unique, of 
the $L^1$-based minimization principle
\beq
\label{BP}
\min_{Z'\in \IC^N} \|Z'\|_1,\quad \hbox{s.t.}\,\, Y=\tilde\bA Z'
\eeq
called 
the basis pursuit (BP) \cite{BDE}. 
}

In the standard compressed sensing theory, one usually considers
the following data model
\beq
\label{43}
Y^\ep=\tilde \bA  Z +E\in \IC^n,\quad Z\in \IC^N, 
\quad \|E\|_2\leq\ep
\eeq
where the data and the object are  vectors 
and employs
the relaxed minimization principle called the  Basis Pursuit Denoising (BPDN)
 \cite{BDE, CDS} 
\beq
\label{relax2}
\min_{Z'\in \IC^N} \|Z'\|_1,\quad \hbox{s.t.}\,\, \|Y^\ep-\tilde\bA Z'\|_2\leq \epsilon
\eeq
for reconstruction. The noiseless version  $\ep=0$ of 
(\ref{relax2}) is called the {\em Basis Pursuit} (BP). 
Note that  BPDN  uses only {\em one column} 
of the  MUSIC model (\ref{20'}).

When $Z$ is not $s$-sparse, consider the
best $s$-sparse approximation $Z^{(s)}$ of $Z$. 
Clearly, $Z^{(s)}=Z$ if $Z$ is $s$-sparse. 

Denote the BPDN minimizer by $\hat Z$.  When does $\hat Z$ give a good approximation to the
true $Z$? Again, the RIP (\ref{rip}) gives a useful
characterization \cite{Can}. 

\commentout{
The first one  is in terms of the coherence parameter $\mu(\tilde\bA)$.  
\begin{proposition}\cite{DET}
\label{prop6} Suppose $\tilde\bA$ has all unit columns. 
Suppose the sparsity $s$  of the target vector $X$  satisfies
\[
s< {1\over 4} \lt(1+{1\over \mu(\tilde {\mathbf \Phi})}\rt). 
\]
Then the BPDN  minimizer $\hat Z$ is unique and obeys the
error bound
\[
\|\hat Z-Z\|_2\leq {2\ep\over \sqrt{1-(4s-1)\mu(\tilde {\mathbf \Phi})}}.
\]
\end{proposition}
}

\begin{theorem} \label{prop4} 
Suppose the  RIC  of $\tilde\bA \in \IC^{n\times N}$
satisfies the inequality 
\beq
\label{ric}
{\sqrt{2}\over 2} \delta_{2s}^++\lt({\sqrt{2}\over 2}+1\rt)
\delta_{2s}^-<1
\eeq
 Then the BPDN  minimizer  $\hat Z$ is unique and  satisfies
 the error bound
 \beq
 \|\hat Z-Z\|_2&\leq & C_1s^{-1/2}\|Z-Z^{(s)}\|_1+C_2{\ep}\label{est}
 \eeq
 where
 \beqn
 C_1&=&{2+(\sqrt{2}-2)\delta_{2s}^-+\sqrt{2}\delta_{2s}^+ \over 1- {\sqrt{2}\over 2} \delta_{2s}^+-\lt({\sqrt{2}\over 2}+1\rt)
\delta_{2s}^- }\\
 C_2&=&
{4\sqrt{1+\delta^+_{2s}}\over 1- {\sqrt{2}\over 2} \delta_{2s}^+-\lt({\sqrt{2}\over 2}+1\rt)
\delta_{2s}^- }. 
\label{c2}
 \eeqn
\end{theorem}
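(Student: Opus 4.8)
The plan is to run the cone-plus-tube argument of Cand\`es \cite{Can}, carried out with the asymmetric restricted isometry constants $\delta^\pm_{2s}$ and with the constants tracked so as to reproduce $C_1,C_2$ exactly. Write $h:=\hat Z-Z$. Since the true $Z$ is feasible for (\ref{relax2}) (because $\|Y^\ep-\tilde\bA Z\|_2=\|E\|_2\le\ep$), optimality gives $\|\hat Z\|_1\le\|Z\|_1$, and since $\hat Z$ and $Z$ both lie in the tube, $\|\tilde\bA h\|_2\le 2\ep$. Let $T_0$ be the index set of the $s$ largest-magnitude entries of $Z$, so that $\|Z_{T_0^c}\|_1=\|Z-Z^{(s)}\|_1$; split $T_0^c$ into successive blocks $T_1,T_2,\dots$ of size $s$ ordered by decreasing magnitude of $h$ on $T_0^c$, and set $T_{01}:=T_0\cup T_1$.

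First I would derive the cone inequality: expanding $\|Z\|_1\ge\|Z+h\|_1$ over $T_0$ and $T_0^c$ and using the reverse triangle inequality yields $\|h_{T_0^c}\|_1\le\|h_{T_0}\|_1+2\|Z-Z^{(s)}\|_1$. The shifting estimate $\|h_{T_j}\|_2\le s^{-1/2}\|h_{T_{j-1}}\|_1$ for $j\ge 2$ (every coordinate of $h_{T_j}$ is at most the average of $|h|$ over $T_{j-1}$) then gives $\sum_{j\ge2}\|h_{T_j}\|_2\le s^{-1/2}\|h_{T_0^c}\|_1\le\|h_{T_{01}}\|_2+\eta$ with $\eta:=2s^{-1/2}\|Z-Z^{(s)}\|_1$, whence $\|h\|_2\le 2\|h_{T_{01}}\|_2+\eta$. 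So it remains to bound $\|h_{T_{01}}\|_2$.

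The core step estimates $\|\tilde\bA h_{T_{01}}\|_2^2$. Writing $\tilde\bA h_{T_{01}}=\tilde\bA h-\sum_{j\ge2}\tilde\bA h_{T_j}$ and pairing with $\tilde\bA h_{T_{01}}$, I would use: (i) $|\langle\tilde\bA h_{T_{01}},\tilde\bA h\rangle|\le\sqrt{1+\delta^+_{2s}}\,\|h_{T_{01}}\|_2\cdot2\ep$, from Cauchy--Schwarz and the upper bound in (\ref{rip}) for the $2s$-sparse vector $h_{T_{01}}$; and (ii) the asymmetric near-orthogonality bound $|\langle\tilde\bA u,\tilde\bA v\rangle|\le\tfrac12(\delta^+_{2s}+\delta^-_{2s})\|u\|_2\|v\|_2$ for disjointly supported $u,v$ with $|\mathrm{supp}\,u|+|\mathrm{supp}\,v|\le 2s$, which follows from the polarization identity $4\,\mathrm{Re}\,\langle\tilde\bA u,\tilde\bA v\rangle=\|\tilde\bA(u+v)\|_2^2-\|\tilde\bA(u-v)\|_2^2$, the two-sided bound (\ref{rip}) with index $2s$, and a scaling optimization in $\|u\|_2/\|v\|_2$. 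Applying (ii) to the pairs $(h_{T_0},h_{T_j})$ and $(h_{T_1},h_{T_j})$ and using $\|h_{T_0}\|_2+\|h_{T_1}\|_2\le\sqrt2\,\|h_{T_{01}}\|_2$ gives $\sum_{j\ge2}|\langle\tilde\bA h_{T_{01}},\tilde\bA h_{T_j}\rangle|\le\tfrac{\sqrt2}{2}(\delta^+_{2s}+\delta^-_{2s})\|h_{T_{01}}\|_2(\|h_{T_{01}}\|_2+\eta)$. Combining with the lower bound $\|\tilde\bA h_{T_{01}}\|_2^2\ge(1-\delta^-_{2s})\|h_{T_{01}}\|_2^2$ and dividing by $\|h_{T_{01}}\|_2$ (the case $h_{T_{01}}=0$ being trivial),
\[
\Big(1-\tfrac{\sqrt2}{2}\delta^+_{2s}-\big(\tfrac{\sqrt2}{2}+1\big)\delta^-_{2s}\Big)\|h_{T_{01}}\|_2\le 2\sqrt{1+\delta^+_{2s}}\,\ep+\tfrac{\sqrt2}{2}(\delta^+_{2s}+\delta^-_{2s})\,\eta .
\]
Condition (\ref{ric}) makes the coefficient $D:=1-\tfrac{\sqrt2}{2}\delta^+_{2s}-(\tfrac{\sqrt2}{2}+1)\delta^-_{2s}$ positive; solving for $\|h_{T_{01}}\|_2$, inserting into $\|h\|_2\le 2\|h_{T_{01}}\|_2+\eta$, and recalling $\eta=2s^{-1/2}\|Z-Z^{(s)}\|_1$ gives (\ref{est}) with $C_2=4\sqrt{1+\delta^+_{2s}}/D$ and $C_1=(2\sqrt2(\delta^+_{2s}+\delta^-_{2s})+2D)/D$, which upon simplification equals the stated expression. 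For uniqueness I would invoke the standard convexity argument: $\delta^-_{2s}<1$ makes $\tilde\bA$ injective on $2s$-sparse vectors, and the same cone/RIP estimates applied along the segment joining two putative minimizers force them to coincide; this is the statement in \cite{Can} and I would cite it rather than reproduce it.

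The step I expect to be the main obstacle is not conceptual but one of careful bookkeeping: obtaining the asymmetric cross-term coefficient $\tfrac12(\delta^+_{2s}+\delta^-_{2s})$ (in place of the symmetric textbook $\delta_{2s}$) and threading the $\sqrt2$-refinement so that the final denominator is exactly $D$ and the numerator of the $\ell^1$ term collapses to $2+(\sqrt2-2)\delta^-_{2s}+\sqrt2\,\delta^+_{2s}$.
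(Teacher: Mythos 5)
Your proposal is correct and follows essentially the same route as the paper's Appendix B: the Cand\`es cone-plus-tube argument with the block decomposition $S_0,S_1,S_2,\dots$, the asymmetric near-orthogonality lemma $\lt|\Re\lan \tilde\bA u,\tilde\bA v\ran\rt|\leq \tfrac12(\delta^+_{2s}+\delta^-_{2s})\|u\|_2\|v\|_2$ proved via the parallelogram identity, and the same final bookkeeping yielding $C_1$ and $C_2$. The only cosmetic differences are that the paper normalizes to unit vectors in the lemma rather than optimizing a scaling, and isolates $\|\Delta_{S_0\cup S_1}\|_2\leq(1-\rho)^{-1}(\alpha\ep+2\rho e_0)$ before substituting, which gives identical constants.
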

\begin{remark}
The real-valued version with $\delta_s=\delta^\pm_s$ of Theorem \ref{prop4} is proved
in \cite{Can}. The proof for the complex-valued setting 
follows the same line of reasoning with
minor modifications. For reader's convenience and for
the purpose of showing where adjustments are needed,
 the full proof for the complex-valued setting  is given in the Appendix \ref{app:B}.  
 \end{remark}
  \begin{remark}
  Theorem \ref{prop4} does not guarantee exact recovery
  of support when $E\neq 0$. 
  
  An alternative approach to BPDN  is greedy algorithms such as
  the orthogonal matching pursuit (OMP). 
  The exact recovery of support by OMP is established
  for any $s$-sparse object vector $Z$ such that 
  the noise-to-object ratio satisfies 
  \beq
  {\ep\over Z_{\min}}\leq {1\over 2} +\mu(\tilde\bA)  ({1\over 2}-s)
  \label{260}
  \eeq
  where $Z_{\rm min}$ is the smallest absolute nonzero component of $Z$ \cite{DET}. In order for the right hand side
  of (\ref{260}) to be positive, it is necessary that
  \beq
  \label{261}
  s<{1\over 2}+{1\over 2\mu(\tilde\bA)}.
  \eeq
  
  \end{remark}
  \begin{remark}
  \label{rmk6}
 In the case of planar  objects, under the assumptions  of Proposition \ref{rau} (with $\delta_*=\sqrt{2}-1$), BP yields the exact solution $\hat Z=Z$ for
 $n=\cO(s)$ sampling directions (or sensors)  and just one
 incident wave, modulo logarithmic factors. In comparison, the performance guarantee for MUSIC in  Corollary \ref{cor2} assumes $n=\cO(s)$ sampling {\em and}  incident directions. 
 \end{remark}

Using Proposition  \ref{prop2} and Theorem \ref{prop4}, we obtain
\begin{corollary}\label{cor6}
If
\beqn
s<{1\over 2}+{\sqrt{2}-1\over  2\mu(\tilde\bA)},
\eeqn
cf. (\ref{261}), then (\ref{est}) holds true. 
\end{corollary}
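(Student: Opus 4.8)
The plan is to verify the single hypothesis of Theorem \ref{prop4} --- namely the restricted isometry bound (\ref{ric}) --- directly from the assumed bound on the coherence parameter, and then quote Theorem \ref{prop4} verbatim. Since Theorem \ref{prop4} already delivers uniqueness of the BPDN minimizer $\hat Z$ and the estimate (\ref{est}) with the explicit constants $C_1,C_2$, nothing further is needed once (\ref{ric}) is established.

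First I would invoke Proposition \ref{prop2} with $r=2s$ to bound both generalized restricted isometry constants of order $2s$ by the coherence:
\[
\delta^\pm_{2s}\leq \mu(\tilde\bA)\,(2s-1).
\]
Substituting these into the left-hand side of (\ref{ric}) and collecting the coefficient of $\mu(\tilde\bA)(2s-1)$ gives
\[
\frac{\sqrt{2}}{2}\,\delta_{2s}^+ + \Bigl(\frac{\sqrt{2}}{2}+1\Bigr)\delta_{2s}^-
\leq \Bigl(\frac{\sqrt{2}}{2}+\frac{\sqrt{2}}{2}+1\Bigr)\mu(\tilde\bA)(2s-1)
=(\sqrt{2}+1)\,\mu(\tilde\bA)\,(2s-1).
\]
Next I would observe that the hypothesis $s<\frac{1}{2}+\frac{\sqrt{2}-1}{2\mu(\tilde\bA)}$ is exactly equivalent, after multiplying by $2$ and subtracting $1$, to $2s-1<\frac{\sqrt{2}-1}{\mu(\tilde\bA)}$, and hence --- using the elementary identity $1/(\sqrt{2}+1)=\sqrt{2}-1$ --- to $(\sqrt{2}+1)\,\mu(\tilde\bA)\,(2s-1)<1$. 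Combining this with the displayed bound shows that (\ref{ric}) holds, so Theorem \ref{prop4} applies and yields (\ref{est}); note that the denominators in $C_1,C_2$ are then automatically positive.

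There is essentially no obstacle here: the argument is one substitution followed by an algebraic rearrangement. The only points requiring a word of care are the arithmetic that matches the coefficient $\sqrt{2}+1$ coming from $\tfrac{\sqrt{2}}{2}+\bigl(\tfrac{\sqrt{2}}{2}+1\bigr)$ to the threshold in the statement, and the remark that the bound is only meaningful when $2s-1>0$, i.e. for $s\geq 1$, which is the regime of interest.
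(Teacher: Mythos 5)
Your argument is correct and is exactly the route the paper intends: the text introducing Corollary \ref{cor6} simply says it follows from Proposition \ref{prop2} and Theorem \ref{prop4}, and your substitution $\delta^\pm_{2s}\leq \mu(\tilde\bA)(2s-1)$ followed by the rearrangement $(\sqrt{2}+1)\mu(\tilde\bA)(2s-1)<1$ is precisely that deduction. Nothing is missing.
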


 \begin{remark}\label{rmk7}
Under the assumptions of  Proposition \ref{thm1}  with continuously differentiable $f^{\rm s}$ and $f^{\rm i}$   BP recovers
the $s-$sparse object exactly  $\hat Z=Z$ in the noiseless case $\ep=0$ for $n=\cO(s^2)$ and sufficiently high
frequency. 

This is similar to the performance guarantee for MUSIC in Corollary \ref{cor1}. However,  the performance guarantee for MUSIC assumes 
$\cO(s^2)$ incident waves  while
the performance guarantee for  BP assumes only {\em one} 
incident wave. 

\end{remark}

\section{Spectral estimation and {source  localization}}\label{sec:spec}
Let us turn to the original application where the MUSIC algorithm
arises, namely the source localization and
the  frequency estimation for  multiple random signals. 
The two applications share almost exactly the same mathematical formulation.  

Suppose the random signal $x(t)$ consists of random linear
combinations of 
$s$ time-harmonic components from
the set
\[
\{e^{-i2\pi \om_j t}: \om_j ={j\over N}, j=1,...,N\}.
\] 
Let us write
\beq
\label{50}
x(t)=\sum_{j=1}^{N} a_j e^{-i2\pi \om_j t}
\eeq
and assume that there is a fixed set $\cS$ (i.e. deterministic support) of $s$ nonzero amplitudes and the elements in the complementary
set $\cS^c$ are zero almost surely.
 
 Consider the noisy signal model
 \beq
 \label{40}
 y(t)=x(t)+e(t)
 \eeq
 where $e(t)$ is the Gaussian white-noise. 
 The question is to find out which $s$ components
 are non-zero by sampling $y(t)$.  
 
 Consider random sampling times $t_k, k=1,...,n$ which
 are i.i.d. uniform r.v.s in the set $\{1,...,N\}$. 
 Write $Y=(y(t_k))\in \IC^n, E=(e(t_k))\in \IC^n$ and $Z=(a_j)\in \IC^N$. 
 Then by (\ref{40}) we have
 \beq
 \label{41}\label{u1}
 Y^\ep&=&\tilde \bA Z+E\\
 \Phi_{k,j}&=&{1\over \sqrt{n}} e^{-i2\pi t_kj/N}\in \IC^{n\times N}\nn
 \eeq
 cf.  (\ref{43}).  From the one-dimensional setting of Proposition \ref{rau}
 and Remark (\ref{rmk8}) we know that
 if (\ref{rau2}) is satisfied with $\delta_*=\sqrt{2}-1$
 then the RIC of $\tilde\bA$ obeys the bound (\ref{rau22}) with
 probability at least $1-\gamma$. Applying  BPDN  to (\ref{41}) we obtain the
 error bound (\ref{est}) with $\cO(s)$ data, modulo logarithmic factors. 
 
 How does MUSIC perform in this case? The standard MUSIC proceeds as follows.  Let $\bR_Y=\IE[YY^*]\in \IC^{n\times n}$, 
 $\bR_Z=\IE[ZZ^*]\in \IC^{N\times N}$ and $\bR_E=\IE[EE^*]$ be the covariance matrices of
 $Y$, $Z$ and $E$, respectively. Note that
 $\bR_Z$ is sparse and has at most rank $s$.

 Suppose the noise and
 the signal are independent of each other. Then we have
 \beq
 \label{43'}
 \bR_Y-\bR_E=\tilde\bA \bR_Z\tilde\bA^*
 \eeq
 which is of the form (\ref{15'}). 
 
 Assume that $\bR_Z$ has rank $s$. This is true, for example when $a_i$ are zero-mean,  independent random variables. 
 In this case, $\bR_Z=\hbox{diag}(\IE|a_i|^2)$ has
 exactly $s$ nonvanishing diagonal elements. 
 
Let $\bX \in \IC^{s\times s}$ be the $\cS\times \cS$ submatrix of $\bR_Z$.  By assumption, $\bX$ has rank $s$. Let $\bA=\bPsi=\tilde\bA_{\cS}\in\IC^{n\times s}$ be the column submatrix restricted to 
the index $\cS$, the (deterministic)  support of $Z$.
Let $
\bY= \bR_Y-\bR_E\in \IC^{n\times n}. 
$
We then  can rewrite (\ref{43'}) in the form (\ref{31}) suitable for MUSIC. 

  Since  $\bX$ has full rank,  the  the ranges of
 $\bA$ and $\bY$
 coincide. To guarantee the exact
 recovery by MUSIC, it suffices to show that the RIC of $\tilde\bA$ satisfies the bound  $\delta^-_{s+1}<1$
 which follows from Proposition  (\ref{rau}) under the condition
 $n=\cO(s)$, modulo logarithmic factors, cf. (\ref{rau3}). 
 
 Let us formally state the performance guarantee for MUSIC as applied
 to the problem of spectral estimation.
 
 \begin{corollary}\label{cor4} Suppose $\bR_Z$ has rank $s$ and
 suppose that the noise is independent from the signal. Assume
 that
 the number $n$ of time samples
 satisfies
 \beq
 \label{71'}
 {n\over \ln n}\geq C(s+1) \ln^2{(s+1)} \ln N \ln\gamma^{-1},
 \eeq
 cf. (\ref{rau3}). Then 
for any noise level  the singularities of the MUSIC
imaging function  (\ref{mus}) coincide with  the frequencies  in the random signals (\ref{50})
 with probability $1-\gamma$. 
 \end{corollary}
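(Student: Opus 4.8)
The plan is to reduce the statement to the \emph{noiseless} MUSIC characterization of Proposition~\ref{prop1} (in the Fredholm-alternative form recorded in the remark following it) combined with the one-dimensional RIP bound for the random partial Fourier matrix from Proposition~\ref{rau} and Remark~\ref{rmk8}. The decisive point is that, because one works with the exact covariance matrices, the noise has been removed \emph{exactly} when passing from $\bR_Y$ to $\bY=\bR_Y-\bR_E$: by (\ref{43'}) one has $\bY=\tilde\bA\bR_Z\tilde\bA^*=\bA\bX\bA^*$ with $\bA=\bPsi=\tilde\bA_\cS$, so the perturbation matrix $\cE$ of Section~\ref{sec:sen} is identically zero regardless of the size of $\bR_E$. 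This is exactly why the conclusion holds ``for any noise level'': the only source of randomness that matters is that of the sampling times $t_k$, which enter solely through $\tilde\bA$, and so the probability $1-\gamma$ is entirely a statement about $\tilde\bA$.

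First I would apply Proposition~\ref{rau} with $d=1$ together with Remark~\ref{rmk8}, fixing some $\delta_*\in(0,1)$ (for instance $\delta_*=\tfrac12$, or $\delta_*=\sqrt2-1$ to match the earlier use): with $s$ replaced by $s+1$, hypothesis (\ref{71'}) is precisely (\ref{rau2}), so with probability at least $1-\gamma$ the matrix $\tilde\bA$ of (\ref{u1}) satisfies $\delta^\pm_{s+1}<1$. Note that, in contrast with Corollaries~\ref{cor1} and \ref{cor2}, the sharper bound $\delta^\pm_s\leq\tfrac12$ is not needed here, since there is no noise term to dominate; $\delta^-_{s+1}<1$ alone is enough. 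On this event any $s+1$ columns of $\tilde\bA$ are linearly independent, which is equivalent to $\delta^-_{\cS'}<1$ for every $\cS'=\cS\cup\{\br\}$ with $\br\in\cS^c$, hence to the characterization (\ref{61}): $\phi_\br\in\mathrm{Ran}(\bA)$ if and only if $\br\in\cS$.

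Next I would record that $\mathrm{Ran}(\bY)=\mathrm{Ran}(\bA)$. Indeed $\delta^-_s<1$ forces $\bA=\tilde\bA_\cS$ to have full column rank $s$; and since $\bR_Z$ is supported in $\cS\times\cS$ (the support $\cS$ being deterministic) and has rank $s$ by assumption, its $\cS\times\cS$ block $\bX$ has rank $s$, so $\bA\bX\bA^*$ has the same range as $\bA$. Letting $\cP$ be the orthogonal projection onto $\mathrm{Ran}(\bY)^\perp=\mathrm{Ran}(\bA)^\perp=(\mathrm{span}\{\phi_\br:\br\in\cS\})^\perp$, the Fredholm alternative in the remark after Proposition~\ref{prop1} yields $\br\in\cS\iff\cP\phi_\br=0\iff J(\br)=|\cP\phi_\br|^{-2}=\infty$. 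Thus on an event of probability at least $1-\gamma$ the singularities of the imaging function (\ref{mus}) are exactly the frequencies $\om_j$, $j\in\cS$, which is the assertion of the corollary.

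I expect the only point requiring genuine care to be the bookkeeping in the second paragraph: matching the absolute constant $C$ and the sparsity index $s+1$ in (\ref{71'}) to those in (\ref{rau2}), and making explicit that $\bX$ inherits full rank $s$ from $\bR_Z$ via the deterministic-support hypothesis on $\cS$. The conceptual heart of the argument---that the exact subtraction of $\bR_E$ trivializes the noise and reduces everything to the noiseless range characterization---needs no further work once it is observed.
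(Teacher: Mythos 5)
Your proposal is correct and follows essentially the same route as the paper: the paper likewise observes that the exact covariance subtraction $\bY=\bR_Y-\bR_E=\bA\bX\bA^*$ eliminates the noise entirely, that full rank of $\bX$ gives $\mathrm{Ran}(\bY)=\mathrm{Ran}(\bA)$, and that Proposition~\ref{rau} (in its one-dimensional form, Remark~\ref{rmk8}) supplies $\delta^-_{s+1}<1$ with probability $1-\gamma$, whence the noiseless characterization (\ref{61}) applies. Your added bookkeeping on $\delta_*$ and the constant $C$ is consistent with the paper's implicit absorption of $\delta_*^{-2}$ into $C$.
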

\begin{remark}\label{rmk9} The  number of time samples assumed here  is similar for both 
MUSIC and  BPDN. However,   many realizations of $Y$ and $Z$ are needed
  to calculate the covariance matrices accurately and form
  the equation (\ref{43'}) before
  the MUSIC reconstruction.  Once (\ref{43'}) holds with
  sufficient accuracy, then the noise  structure does not
  affect reconstruction as long as the noise
  is independent of the signal.
  
  In the absence of
  abundant realizations of signals, though,  BPDN is the preferred
  method for spectral estimation. Indeed, BPDN can
  identify the frequencies  approximately with
  just {\bf one}  realization of signals. The recovery error is at worst linearly proportional  to the noise level as  in (\ref{est}).   \end{remark}

The source localization  problem can be treated in the same vein
as follows. 

Let us assume that $s$  source points are distributed
in the grid $\cK$ defined in (\ref{44'})
 and 
  each  source point emits a signal
described 
by the paraxial Green function (\ref{70'})  times the source  amplitude 
 which is  recorded by the $n$ sensors located at $\ba_i, i=1,...,n$ in the plane $z=z_0$.

 Let $Z=(\xi(\br_j))\in \IC^N$.  After proper normalization, the data vector $Y$ can be written as (\ref{41})
with the sensing matrix $\tilde\bA$ of  the form
(\ref{52}). 

 By the same analysis as before we arrive at the conclusion
 \begin{corollary}\label{cor4'} Suppose $\bR_Z$ has rank $s$ and
 suppose that the noise is independent of the signals. 
 Let  the number $n$ of time samples
 satisfy (\ref{71'}). 
  For any noise level  the singularities of the  MUSIC
 imaging function (\ref{mus}) coincide with the source locations 
 with probability $1-\gamma$.  
 \end{corollary}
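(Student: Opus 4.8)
The plan is to reduce the statement to the mechanism already used for Corollary~\ref{cor4}, the only genuinely new ingredient being the particular form (\ref{52}) of the sensing matrix in the paraxial source-localization geometry. First I would set up the covariance model exactly as in Section~\ref{sec:spec}: the $n$ sensors on the plane $z=z_0$ record $Y^\ep=\tilde\bA Z+E$ as in (\ref{41}), with extended sensing matrix $\tilde\bA=\bD_1\mathbf{A}\bD_2$ of the form (\ref{52}), where $\bD_1,\bD_2$ are unitary diagonal and, under the Rayleigh normalization (\ref{102}), $\mathbf{A}$ is the random partial Fourier matrix (\ref{1001'}). Taking second moments and using independence of signal and noise gives $\bY:=\bR_Y-\bR_E=\tilde\bA\bR_Z\tilde\bA^*$, cf.\ (\ref{43'}), which is of the form (\ref{15'}). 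Since $\bR_Z$ is diagonal with exactly $s$ nonzero entries (the rank-$s$ hypothesis), its $\cS\times\cS$ submatrix $\bX$ has full rank $s$, and with $\bA=\bPsi=\tilde\bA_\cS$ the identity $\bY=\bA\bX\bPsi^*$ is in the form (\ref{31}).

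Next I would verify the single hypothesis MUSIC needs here, namely $\delta^-_{s+1}<1$ for $\tilde\bA$, equivalently the characterization (\ref{61}). Because $\bX$ has full rank, $\hbox{\rm Ran}(\bA)=\hbox{\rm Ran}(\bY)$, so it suffices that any $s+1$ columns of $\tilde\bA$ be linearly independent. For this I would invoke Proposition~\ref{rau} with $\delta_*=\sqrt{2}-1$ and sparsity $s+1$: the sampling bound (\ref{71'}) is exactly (\ref{rau2}) for sparsity $s+1$, so $\mathbf{A}$ satisfies $\delta^\pm_{s+1}<\sqrt{2}-1<1$ with probability at least $1-\gamma$, and since $\bD_1$ and $\bD_2$ are unitary and diagonal they preserve the $\ell^2$-norm and the sparsity of every vector, so the same bound passes to $\tilde\bA=\bD_1\mathbf{A}\bD_2$ (the observation recorded after (\ref{1001'}), together with Remark~\ref{rmk8}). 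In particular $\delta^-_{s+1}<1$ on this event.

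On that event Proposition~\ref{prop1} yields $\phi_\br\in\hbox{\rm Ran}(\bA)$ if and only if $\br\in\cS$, equivalently $\cP\phi_\br=0$ iff $\br\in\cS$, so the imaging function $J(\br)=1/|\cP\phi_\br|^2$ of (\ref{mus}) is singular precisely at the $s$ source locations; the probability bound $1-\gamma$ is inherited entirely from Proposition~\ref{rau}. This holds for \emph{any} noise level because, after passing to covariances, the noise contributes only the separate term $\bR_E$ in (\ref{43'}), which is subtracted exactly in forming $\bY$; thus Theorem~\ref{thma} is not needed and the noiseless Proposition~\ref{prop1} suffices.

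I do not expect a substantive obstacle: the argument is a bookkeeping reduction to Corollary~\ref{cor4} plus the invariance of the restricted isometry constants under conjugation by the unitary diagonal factors $\bD_1,\bD_2$. The two points deserving care are (i) checking that the paraxial normalization (\ref{102}) is precisely what turns $\mathbf{A}$ into the standard random partial Fourier matrix covered by Proposition~\ref{rau} and Remark~\ref{rmk8}, and (ii) making the ``any noise level'' claim precise, which rests on the covariance identity (\ref{43'}) isolating $\bR_E$ rather than perturbing $\bY$.
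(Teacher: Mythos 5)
Your proposal is correct and follows essentially the same route as the paper, which itself disposes of this corollary by the reduction "by the same analysis as before": form the covariance identity (\ref{43'}), use the rank-$s$ hypothesis to identify $\hbox{\rm Ran}(\bY)$ with $\hbox{\rm Ran}(\bA)$, and obtain $\delta^-_{s+1}<1$ for the paraxial sensing matrix (\ref{52}) from Proposition \ref{rau} together with the observation that the unitary diagonal factors $\bD_1,\bD_2$ preserve sparsity and the $\ell^2$-norm. Your two flagged points (the role of the Rayleigh condition (\ref{102}) and the exact subtraction of $\bR_E$ explaining "any noise level") are precisely the ones the paper relies on.
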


\section{Resolution and grid spacing}\label{sec:grid}
Being essentially a gridless method, MUSIC's flexibility with grid spacing is an advantage that the current BPDN-based imaging methods do not yet possess.

Let $\ell$ a length scale to be determined below
and let $\cS_\ell=\{\br\in \cK: \hbox{\rm dist}(\br,\cS)\leq \ell\}$ be the $\ell$-neighborhood
of the objects. For the problem of inverse scattering,  $\hbox{\rm dist}(\br,\cS)$ typically refers to the {\em physical }
or Euclidean  distance in the spatial domain. 
We would like to derive a thresholding rule
which can eliminate all false alarms
(i.e. artifacts)  occurring outside  $\cS_\ell$, no matter how refined the grid
spacing is relative  to the frequency.

Let $\tilde\bA$ be the extension of $\bA$ over
a fine grid of spacing $\tilde\ell$ which may be much smaller than $\om^{-1}$. When $\tilde\ell=0$, the computation domain
$\cK$ is a
continuum. 

Generalizing the definition (\ref{90}), we define 
\beq
\label{91}
\Gamma_\cS(\ell)\equiv \min_{\br\in \cS_\ell^c} \|\phi_\br\|_2^{-1} \|\cP\phi_\br\|_2=\sqrt{1-\max_{\br\in \cS_\ell^c} \|\phi_\br\|^{-2}\phi_\br^*\bA \bA^\dagger \phi_\br}.  
\eeq
Clearly,  $\Gamma_\cS=\Gamma_\cS(0^+)$. In the noiseless case, 
the exact recovery of $\cS$ by MUSIC is equivalent to
$\Gamma_\cS(0^+)<1$. 

Extending the proof of Lemma \ref{prop3'} to $\Gamma_\cS(\ell)$ we have
\begin{lemma}\label{lem3}
\beq
 \Gamma_\cS(\ell )\geq 1-\max_{ \cS'=\cS_\ell\cup \{\br\}\atop \br\in\cS_\ell^c
}{\delta^-_{\cS'}(1+\delta^+_{\cS})\over 2+\delta^-_{\cS}-\delta^+_{\cS'}}. 
\eeq
\end{lemma}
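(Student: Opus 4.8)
The plan is to replay the proof of Lemma \ref{prop3'} essentially verbatim, the only structural change being that the test direction $\br$ now ranges over the enlarged complement $\cS_\ell^c$ instead of $\cS^c$, while the restricted isometry constants are read off the extension $\tilde\bA$ over the refined grid $\cK$. First I would fix $\br\in\cS_\ell^c$, let $\cP=\bI-\bA\bA^\dagger$ be the noise-subspace projection, and write the orthogonal decomposition $\phi_\br=\cP\phi_\br+\sum_{j\in\cS}c_j\tilde\Phi_j$, so that $\cP\phi_\br=\tilde\bA Z$ with $Z$ supported on the pattern $\cS'=\cS\cup\{\br\}$ of size $s+1$ sitting inside $\cK$, its $\br$-component equal to $\|\phi_\br\|_2$. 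Since $\br\in\cS_\ell^c$, this pattern consists of $\cS$ together with one extra point at distance more than $\ell$ from $\cS$, which is precisely what will keep the relevant lower restricted isometry constant strictly below $1$.

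From there the computation is the one already carried out in (\ref{66})--(\ref{89}): the lower bound in (\ref{rip'}) for the support $\cS'$ gives $\|\cP\phi_\br\|_2^2\ge(1-\delta^-_{\cS'})\bigl(\|\phi_\br\|_2^2+\sum_j|c_j|^2\bigr)$; the Pythagorean identity $\|\phi_\br\|_2^2=\|\cP\phi_\br\|_2^2+\|\sum_j c_j\tilde\Phi_j\|_2^2$ together with the upper bound in (\ref{rip'}) for the support $\cS$ gives $\sum_j|c_j|^2\ge(1+\delta^+_\cS)^{-1}\bigl(\|\phi_\br\|_2^2-\|\cP\phi_\br\|_2^2\bigr)$; substituting the second estimate into the first and solving the resulting affine inequality for the ratio $\|\phi_\br\|_2^{-2}\|\cP\phi_\br\|_2^2$ reproduces a lower bound of the form (\ref{89}) with $\cS'=\cS\cup\{\br\}$. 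Minimizing over $\br\in\cS_\ell^c$ then finishes the proof; if one prefers the constants to be written against the neighborhood $\cS_\ell$ rather than $\cS$, one invokes the monotonicity of $\delta^\pm_T$ in the index set $T$ to replace $\cS\cup\{\br\}$ by $\cS_\ell\cup\{\br\}$ at the cost of a weaker (still valid) bound.

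The step I expect to be delicate is the bookkeeping of which index set each restricted isometry constant refers to, together with the degeneracy it is meant to control: on a grid refined far below the wavelength there are search points $\br$ arbitrarily close to a true object, for which $\phi_\br$ is almost collinear with an object column, $\delta^-_{\cS\cup\{\br\}}\to1$, and the bound collapses to the vacuous $\Gamma_\cS(\ell)\ge0$. Restricting $\br$ to $\cS_\ell^c$ is exactly the device that excises these near-singular directions and leaves a support $\cS'$ whose columns are pairwise separated, so that $\delta^-_{\cS'}$ can afterwards be controlled by the coherence and partial-Fourier estimates of Section \ref{sec:cs}; quantifying the resulting trade-off between $\ell$ and the usable value of $\delta^-_{\cS'}$ is what will drive the resolution bounds in the rest of Section \ref{sec:grid}, but for the lemma itself nothing beyond the Lemma \ref{prop3'} algebra is required.
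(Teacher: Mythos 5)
Your proposal is correct and is exactly the argument the paper intends: the paper's entire justification of Lemma \ref{lem3} is the single phrase ``extending the proof of Lemma \ref{prop3'} to $\Gamma_\cS(\ell)$,'' and your replay of the decomposition $\phi_\br=\cP\phi_\br+\sum_{j\in\cS}c_j\tilde\Phi_j$ with $\br$ now ranging over $\cS_\ell^c$, followed by the same two RIP estimates and the same affine inequality, is precisely that extension. Note only that the printed denominator $2+\delta^-_\cS-\delta^+_{\cS'}$ in the lemma is a typo for $2+\delta^+_\cS-\delta^-_{\cS'}$ (and the index set should read $\cS'=\cS\cup\{\br\}$, as it does in Theorem \ref{thm4}); your derivation produces the correct form, and your closing remark that passing to the larger set $\cS_\ell\cup\{\br\}$ only weakens the bound, by monotonicity of $\delta^-$ in the support, is also right.
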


Extending  the analysis leading to (\ref{81})-(\ref{82})   we have
\beq
\label{92}
J^\ep(\br)&\geq&  {(1-2\rho)^4\over 4\rho^2(1-\rho)^2} ,\quad \br\in \cS\\
J^\ep(\br)&\leq&\lt( \Gamma^2_\cS(\ell)  -{4\rho^2(1-\rho)^2\over (1-2\rho)^4}\rt)^{-1},\quad \br\in \cS_\ell^c. \label{93}
\eeq
 
 The following result is analogous to Theorem \ref{thma}.
 \begin{theorem}\label{thm3} Suppose 
 $\delta^-_{\cS'}<1, \cS'=\cS\cup \{\br\}, \forall \br\in \cS^c_\ell$. 
 
 If 
 \beq
\label{242}
{\|\cE\|_2\over \sigma_{\rm min}}< \Delta_\ell={1\over 2}-{1\over 2}{1\over \sqrt{\sqrt{2}\Gamma_\cS(\ell)+1}}
\eeq
then 
\beq
\label{244}
\cS\subset \Theta=\lt\{\br: J^\ep(\br) \geq 2\Gamma^{-2}_\cS(\ell)\rt\}
\eeq
where $\Theta\cap \cS^c_\ell=\emptyset$. 

\commentout{
 the maximum of  $J^\ep$
 in $\cS_\ell^c$ is less than $2/\Gamma^2_{\cS}(\ell)$
 while $J^\ep(\br) > 2/\Gamma^2_{\cS}(\ell), \forall\br\in \cS$. In other words, the search in  $\{\br: J^\ep(\br) > 2/\Gamma^2_{\cS}(\ell)\}$
guarantees an  approximate recovery of support 
in $\cS_\ell$. }
\end{theorem}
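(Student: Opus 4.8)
The plan is to mimic, essentially verbatim, the proof of Theorem \ref{thma}, tracking where the role of $\cS$ must be replaced by $\cS_\ell$ (for the ``false alarm'' region) while $\cS$ itself still plays the role of the true object support where $J^\ep$ must be large. First I would note that the hypothesis $\delta^-_{\cS'}<1$ for every $\cS'=\cS\cup\{\br\}$ with $\br\in\cS_\ell^c$ is exactly what guarantees (via the argument behind (\ref{61}) and Lemma \ref{lem3}) that $\Gamma_\cS(\ell)>0$, so that the right-hand side $\Delta_\ell$ of (\ref{242}) is a genuine positive number in $(0,1)$, and in fact $\Delta_\ell<1/5<\rho_*$ as in Theorem \ref{thma}, so that Corollary \ref{cor20} applies under (\ref{242}). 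Thus $\hbox{\rm Ran}(\bQ_1^\ep)$ and $\hbox{\rm Ran}(\bQ_2^\ep)$ are genuinely the signal and noise subspaces of $\bY^\ep$, and the perturbation bound (\ref{225}) on $\|\delta\bv_j\|_2$ carries over unchanged with $\rho=\|\cE\|_2/\sigma_{\rm min}$.

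Next I would reproduce the decomposition (\ref{63}) of $\|\cP^\ep\phi_\br\|_2^2$ into three terms. For $\br\in\cS$: since $\cS\subset\cS_\ell$, the assumption $\phi_\br\in\hbox{\rm Ran}(\bA)\iff\br\in\cS$ still holds (this is implied by the $\delta^-_{\cS'}<1$ hypotheses, which in particular force the columns indexed by $\cS$ to be linearly independent and $\phi_{\br}$ for $\br\in\cS$ to lie in their span while $\phi_\br$ for $\br\in\cS_\ell^c$ does not), so the first two terms of (\ref{63}) vanish and the third is bounded by $4\rho^2(1-\rho)^2/(1-2\rho)^4$; this yields the lower bound (\ref{92}), i.e. the first line of (\ref{92})--(\ref{93}). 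For $\br\in\cS_\ell^c$: the first term of (\ref{63}) is $\sum_{k=s+1}^n|\bv_k^*\phi_\br|^2=\|\cP\phi_\br\|_2^2\ge\Gamma_\cS^2(\ell)\|\phi_\br\|_2^2\ge\Gamma_\cS^2(\ell)$ by the definition (\ref{91}) (recall the columns of $\tilde\bA$ have unit $2$-norm), and subtracting the third term gives the upper bound in (\ref{93}). This is exactly the content of the displayed inequalities (\ref{92})--(\ref{93}) that precede the theorem, so I would simply invoke them.

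Finally, with $L_\ell=\Gamma_\cS(\ell)$ put $T=2\Gamma_\cS^{-2}(\ell)$ as the threshold. The inequality $\rho^2-\rho+\Gamma_\cS(\ell)/(4\Gamma_\cS(\ell)+2\sqrt2)>0$, whose solution set is precisely $\rho<\Delta_\ell$, is equivalent (exactly as in the derivation of (\ref{ineq})) to
\[
{(1-2\rho)^4\over 4\rho^2(1-\rho)^2}\;>\;T\;>\;\lt(\Gamma_\cS^2(\ell)-{4\rho^2(1-\rho)^2\over(1-2\rho)^4}\rt)^{-1}.
\]
Hence under (\ref{242}), every $\br\in\cS$ has $J^\ep(\br)\ge{(1-2\rho)^4/(4\rho^2(1-\rho)^2)}>T$, so $\cS\subset\Theta$, while every $\br\in\cS_\ell^c$ has $J^\ep(\br)\le\lt(\Gamma_\cS^2(\ell)-4\rho^2(1-\rho)^2/(1-2\rho)^4\rt)^{-1}<T$, so $\Theta\cap\cS_\ell^c=\emptyset$. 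This establishes (\ref{244}). The only real subtlety — the ``hard part'' — is the bookkeeping that $\cS$ and $\cS_\ell$ play different roles simultaneously: the lower bound on $J^\ep$ is asserted only on the true support $\cS$ (where $\phi_\br$ genuinely lies in $\hbox{\rm Ran}(\bA)$), whereas the smallness is asserted on the larger co-neighborhood $\cS_\ell^c$; points in $\cS_\ell\setminus\cS$ are deliberately left uncontrolled, which is why the conclusion is an inclusion $\cS\subset\Theta$ with $\Theta$ missing $\cS_\ell^c$, rather than an exact identification. Everything else is a direct transcription of the proofs of Corollary \ref{cor20} and Theorem \ref{thma}.
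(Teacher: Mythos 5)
Your proposal is correct and follows exactly the route the paper intends: the paper gives no separate proof of Theorem \ref{thm3}, merely asserting the bounds (\ref{92})--(\ref{93}) ``by extending the analysis leading to (\ref{81})--(\ref{82})'' and calling the result ``analogous to Theorem \ref{thma},'' and your write-up is precisely that transcription, with the correct bookkeeping that the lower bound on $J^\ep$ is needed only on $\cS$ and the upper bound only on $\cS_\ell^c$, and with the correct verification that $\Delta_\ell<1/5<\rho_*$ so Corollary \ref{cor20} still applies. The only cosmetic slip is the opening claim that ``$\phi_\br\in\hbox{\rm Ran}(\bA)$ iff $\br\in\cS$ still holds'' — the hypotheses say nothing about $\br\in\cS_\ell\setminus\cS$ — but your parenthetical already restricts to the two implications actually used, so nothing is affected.
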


\commentout{
 Hence with proper formulation
 the constraint $\om \ell\gg 1$ (cf. (\ref{42''})-(\ref{43''}))
 can be relaxed if 
one seeks only approximate, instead of exact,  recovery within  a $\ell$-neighborhood
of the objects. 
}

Lemmas \ref{lem3}, \ref{prop4'} and Theorem \ref{thm3}
then implies the following result analogous to Theorem \ref{cor2.1}. 
\begin{theorem}
\label{thm4} Suppose 
 $\delta^-_{\cS'}<1, \cS'=\cS\cup \{\br\}, \forall \br\in \cS^c_\ell$. 
If  the {\rm NSR} 
obeying the upper bound 
 \beq
 \label{80}
{\ep\over \xi_{\rm min}}  < 
 \sqrt{{(1+\delta^+_\cS)^2}{\xi^2_{\rm max}\over \xi^2_{\rm min}} +(1-\delta^-_\cS)^2\Delta_\ell}
 -{(1+\delta^+_\cS)}{\xi_{\rm max}\over \xi_{\rm min}}
 \eeq
  with $\Delta_\ell$ as in  Theorem \ref{thm3}
then  
\beq
\label{245}
\cS\subset \Theta=\lt\{\br: J^\ep(\br) \geq
2 \lt( 1-\max_{\cS'=\cS\cup\{\br\}\atop\br\in \cS_\ell^c}{\delta^-_{\cS'}(1+\delta^+_\cS)\over 2+\delta^+_\cS-\delta^-_{\cS'}}\rt)^{-2}\rt\}
\eeq
where $\Theta\cap \cS^c_\ell=\emptyset$.
\end{theorem}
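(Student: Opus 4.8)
The plan is to mirror the proof of Theorem \ref{cor2.1} almost line for line, with Theorem \ref{thma} replaced by Theorem \ref{thm3}, Lemma \ref{prop3'} by Lemma \ref{lem3}, and every restricted isometry constant kept attached to the set $\cS$ (rather than to the sparsity level $s$) so that the $\ell$-neighborhood structure is respected. The first step is to turn the hypothesis (\ref{80}) on the noise-to-scatterer ratio into the sensitivity condition (\ref{242}) demanded by Theorem \ref{thm3}. For scattering objects $\bZ=\bX\bPsi^*$, Lemma \ref{lem5} bounds $\|\cE\|_2\leq \ep^2+2\xi_{\rm max}(1+\delta^+_\cS)\ep$ and Lemma \ref{prop4'} bounds $\sigma_{\rm min}\geq(1-\delta^-_\cS)^2\xi_{\rm min}^2$, so that, with $\rho=\ep/\xi_{\rm min}$,
\[
\frac{\|\cE\|_2}{\sigma_{\rm min}}\leq\frac{\rho^2+2(1+\delta^+_\cS)(\xi_{\rm max}/\xi_{\rm min})\,\rho}{(1-\delta^-_\cS)^2}.
\]
Hence the quadratic inequality $\rho^2+2(1+\delta^+_\cS)(\xi_{\rm max}/\xi_{\rm min})\rho<(1-\delta^-_\cS)^2\Delta_\ell$ is sufficient for (\ref{242}), and solving it for $\rho$ reproduces exactly (\ref{80}); this is the verbatim analog of the computation around (\ref{204}).

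Once (\ref{242}) holds, Theorem \ref{thm3} applies under the standing assumption $\delta^-_{\cS'}<1$ for $\cS'=\cS\cup\{\br\}$, $\br\in\cS_\ell^c$, and yields the inclusion $\cS\subset\{\br:J^\ep(\br)\geq 2\Gamma_\cS^{-2}(\ell)\}$ together with the disjointness of that set from $\cS_\ell^c$. It then remains to replace the implicit threshold $2\Gamma_\cS^{-2}(\ell)$ by the explicit restricted-isometry threshold in (\ref{245}). Lemma \ref{lem3} gives $\Gamma_\cS(\ell)\geq 1-M$, with $M=\max_{\cS'=\cS\cup\{\br\},\,\br\in\cS_\ell^c}\delta^-_{\cS'}(1+\delta^+_\cS)/(2+\delta^+_\cS-\delta^-_{\cS'})$ and $1-M>0$ because $\delta^-_{\cS'}<1$; consequently $2(1-M)^{-2}\geq 2\Gamma_\cS^{-2}(\ell)$, so $\Theta=\{\br:J^\ep(\br)\geq 2(1-M)^{-2}\}$ is contained in the set produced by Theorem \ref{thm3}, and in particular $\Theta\cap\cS_\ell^c=\emptyset$.

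For the remaining inclusion $\cS\subset\Theta$ I would not invoke Theorem \ref{thm3} as a black box but rather use the pointwise bounds (\ref{92})--(\ref{93}) directly, as in the alternative verification at the end of the proof of Theorem \ref{cor2.1}: for $\br\in\cS$ one has $J^\ep(\br)\geq (1-2\rho)^4/(4\rho^2(1-\rho)^2)$, while the first-step inequality on $\rho$ is equivalent to $\rho<\Delta_\ell$ and hence, by (\ref{ineq}) read with $\Gamma_\cS(\ell)$ in place of $\Gamma_\cS$, to $4\rho^2(1-\rho)^2/(1-2\rho)^4<\tfrac12\Gamma_\cS^2(\ell)$; combined with $\Gamma_\cS(\ell)\geq 1-M$ and the monotonicity (already used in Lemma \ref{prop3'}) of $\delta^-_{\cS'}(1+\delta^+_\cS)/(2+\delta^+_\cS-\delta^-_{\cS'})$ in its two arguments, this forces $J^\ep(\br)\geq 2(1-M)^{-2}$ for every $\br\in\cS$.

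I expect this last passage to be the main obstacle. Because the explicit threshold $2(1-M)^{-2}$ is \emph{a priori larger} than the implicit one $2\Gamma_\cS^{-2}(\ell)$ delivered by Theorem \ref{thm3}, the inclusion $\cS\subset\Theta$ does not follow formally from Theorem \ref{thm3} and has to be re-established from (\ref{92}); the care required is in checking that the slack in the quadratic $\rho$-inequality of the first step — where the \emph{strict} inequality in (\ref{80}) enters — suffices to cover the gap between $\Gamma_\cS(\ell)$ and its lower bound $1-M$, equivalently that one may read $\Delta_\ell$ in (\ref{80}) with $\Gamma_\cS(\ell)$ replaced by $1-M$, which only makes the NSR bound more stringent. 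All the remaining ingredients — Lemmas \ref{lem5}, \ref{prop4'}, \ref{lem3} and the solution of the quadratic — are routine and carry over unchanged from the uniform-grid case treated in Theorem \ref{cor2.1}.
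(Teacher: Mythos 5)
Your argument is correct and is essentially the paper's own: the paper gives no separate proof of Theorem \ref{thm4}, stating only that it follows by combining Lemmas \ref{lem3}, \ref{prop4'} and Theorem \ref{thm3} in exact analogy with Theorem \ref{cor2.1}, which is precisely the route you take (Lemma \ref{lem5} and Lemma \ref{prop4'} to pass from the NSR bound (\ref{80}) to condition (\ref{242}), then the pointwise bounds (\ref{92})--(\ref{93}) and Lemma \ref{lem3} for the thresholding). The obstacle you flag at the end is real --- the explicit threshold $2(1-M)^{-2}$ exceeds the implicit one $2\Gamma_\cS^{-2}(\ell)$, so the inclusion $\cS\subset\Theta$ only follows if $\Delta_\ell$ in (\ref{80}) is read with the lower bound $1-M$ in place of $\Gamma_\cS(\ell)$ --- and the same unremarked issue already occurs in the paper's verification of the thresholding rule (\ref{240}) in Theorem \ref{cor2.1}; your proposed (slightly more stringent) reading of the NSR bound is the correct repair.
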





Let us now give an estimate of the length scale $\ell$
for (\ref{80}) to be a useful upper bound for {\rm NSR}.
 Let us  focus on the general setting of 
Proposition \ref{thm1}, namely arbitrarily located scatterers and  random sampling directions. 

We resort to the following result analogous to Proposition \ref{prop2}. The proof is exactly the same as before and is omitted here. 
\begin{proposition}For any set $\cB\subset\cK, |\cB|\leq r$, we have
\[
\delta^\pm_\cB\leq \mu(\tilde\bA_\cB) (r-1).
\]
\label{prop10}

\end{proposition}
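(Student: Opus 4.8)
The plan is to transcribe the proof of Proposition \ref{prop2} almost verbatim, keeping track of the support restriction so that only the columns indexed by $\cB$ ever enter the estimates. Fix $\cB\subset\cK$ with $|\cB|\leq r$ and let $Z\in\IC^N$ be supported on $\cB$. Since the columns of $\tilde\bA$ have unit $2$-norm, expanding the quadratic form gives
\[
\|\tilde\bA Z\|_2^2-\|Z\|_2^2=\sum_{i\neq j,\ i,j\in\cB}\tilde\Phi_i^*\tilde\Phi_j\,Z_i^*Z_j,
\]
so the key point — and the only place the argument differs from Proposition \ref{prop2} — is that every cross term $\tilde\Phi_i^*\tilde\Phi_j$ has $i,j\in\cB$, hence is bounded in modulus by $\mu(\tilde\bA_\cB)$ rather than by the global coherence $\mu(\tilde\bA)$.

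From there I would follow the same two moves as before: bound $\sum_{i\neq j}|Z_i^*Z_j|$ using $2|Z_i||Z_j|\leq |Z_i|^2+|Z_j|^2$, and note that for each fixed $i\in\cB$ there are at most $|\cB|-1\leq r-1$ partners $j\neq i$ in $\cB$, so that $\sum_{i\neq j,\ i,j\in\cB}|Z_i^*Z_j|\leq (r-1)\|Z\|_2^2$. Combining the two estimates gives $\big|\|\tilde\bA Z\|_2^2-\|Z\|_2^2\big|\leq \mu(\tilde\bA_\cB)(r-1)\|Z\|_2^2$ for every $Z$ supported on $\cB$, which is exactly the two-sided bound (\ref{rip'}) defining $\delta^\pm_\cB$, and therefore $\delta^\pm_\cB\leq\mu(\tilde\bA_\cB)(r-1)$.

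There is no genuine obstacle: the computation is elementary and structurally identical to Proposition \ref{prop2}. The only substantive remark is the localization of the coherence — replacing $\mu(\tilde\bA)$ by $\mu(\tilde\bA_\cB)$ — which is precisely what makes Proposition \ref{prop10} useful in Section \ref{sec:grid}, where $\cB$ is taken to be a small neighborhood of the object support and $\mu(\tilde\bA_\cB)$ can be considerably smaller than the global coherence of the refined-grid matrix.
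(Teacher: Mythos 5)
Your proof is correct and is exactly the argument the paper intends: the paper omits the proof of Proposition \ref{prop10}, stating it is the same as that of Proposition \ref{prop2}, and your transcription with the coherence localized to $\mu(\tilde\bA_\cB)$ is precisely that adaptation. Nothing further is needed.
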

 
 To proceed,  let us tailor  the estimate in Proposition \ref{thm1} to
 the current setting as follows.
 
  \begin{proposition}\cite{cis-simo}\label{prop11}
  Suppose the physical distances between two points
  corresponding to any two members of 
  $\cB\subset \cK$ are at least $\ell$.
   Let $\hat\bs_k, k=1,...,n$ be independently drawn from the distribution $f^{\rm s}$
on the $(d-1)$-dimensional sphere
independently and identically.  
Suppose 
\beqn
\label{m-2}
|\cK|\leq {\alpha\over 8} e^{K^2/2}
\eeqn
for any positive constants $\alpha, K$. 
Then   $\tilde\bA_\cB$  satisfies the coherence bound
\beqn
\label{mut}
\mu(\tilde\bPhi_\cB) < \chi^{\rm s}+{\sqrt{2}K\over \sqrt{n}}
\eeqn
 with probability greater than $(1-\alpha)^2$
 where  $\chis$ satisfies the bound (\ref{142''})-(\ref{143''}). 
\end{proposition}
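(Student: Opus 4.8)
The plan is to follow the argument behind Proposition~\ref{thm1} (from \cite{cis-simo}) essentially line for line, the only structural change being that the coherence parameter $\mu(\tilde\bA_\cB)$ involves only the columns $\tilde\Phi_j$ with $j\in\cB$, so the separation hypothesis is needed only for pairs of points within $\cB$ and never for the full grid $\cK$. Concretely, I would start from the fact that the columns of $\tilde\bA$ have unit $2$-norm, so that for $i\ne j$ in $\cB$ one has $\tilde\Phi_i^*\tilde\Phi_j=\frac1n\sum_{k=1}^n e^{-i\om\hat\bs_k\cdot(\br_j-\br_i)}$, and split this into its expectation and a mean-zero fluctuation. Taking the maximum over pairs and using the triangle inequality reduces the claim to (i) a deterministic bound on $\max_{i\ne j\in\cB}|\IE(\tilde\Phi_i^*\tilde\Phi_j)|$ and (ii) a uniform-over-pairs probabilistic bound on the fluctuation.

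For (i), $\IE(\tilde\Phi_i^*\tilde\Phi_j)=\int_{S^{d-1}}e^{-i\om\hat\bs\cdot(\br_j-\br_i)}\pdfs(\hat\bs)\,d\hat\bs$ is the Fourier transform of the density $\pdfs$ against surface measure, evaluated at the point $\om(\br_j-\br_i)$, whose modulus is at least $\om\ell$ since $|\br_j-\br_i|\ge\ell$ for $i,j\in\cB$. Setting $\chi^{\rm s}$ to be the supremum of the modulus of this spherical Fourier transform over arguments of length at least $\om\ell$, the classical decay estimates for such transforms --- one radial integration by parts for $d=3$, giving the rate $(1+\om\ell)^{-1}$ and using one bounded derivative of $\pdfs$, and a van der Corput / stationary-phase estimate for $d=2$, giving $(1+\om\ell)^{-1/2}$ --- yield the bounds (\ref{142''})--(\ref{143''}); the H\"older regularity of $\pdfs$ of order $t>1/2$ is precisely what the $d=2$ estimate consumes. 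For (ii), each summand $e^{-i\om\hat\bs_k\cdot(\br_j-\br_i)}$ has modulus one and the summands are i.i.d.\ in $k$, so a Hoeffding bound applied to the real and imaginary parts shows that for each fixed pair the fluctuation exceeds $\sqrt2 K/\sqrt n$ with probability bounded by a fixed constant times $e^{-K^2/2}$.

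Finally I would union-bound over the at most $\binom{|\cB|}{2}\le\binom{|\cK|}{2}$ pairs; the hypothesis $|\cK|\le(\alpha/8)e^{K^2/2}$ is exactly what is needed to absorb this combinatorial factor and push the aggregate failure probability below $1-(1-\alpha)^2$, so that on the complementary event $\mu(\tilde\bPhi_\cB)\le\chi^{\rm s}+\sqrt2 K/\sqrt n$ with the stated $\chi^{\rm s}$. The only genuinely delicate ingredient --- the one I would cite from \cite{cis-simo} rather than reprove --- is the uniform decay estimate for the spherical Fourier transform in the planar case $d=2$, where only H\"older regularity of order $t>1/2$ of $\pdfs$ is available and the $(1+\om\ell)^{-1/2}$ rate is sharp; the concentration step and the union bound are routine bookkeeping, and the replacement of $\cK$ by $\cB$ in the coherence requires no new idea, only the observation that $\mu(\tilde\bPhi_\cB)$ never references a column outside $\cB$.
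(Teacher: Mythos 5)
The paper does not actually prove this proposition: it is imported from \cite{cis-simo} as a restriction of Proposition \ref{thm1} to the subset $\cB$, so there is no in-paper argument to compare against. Your architecture --- write $\tilde\Phi_i^*\tilde\Phi_j$ as its expectation plus a mean-zero fluctuation, bound the expectation by the decay of the spherical Fourier transform of $f^{\rm s}$ at frequencies of modulus at least $\om\ell$ (stationary phase on the circle for $d=2$, giving $(1+\om\ell)^{-1/2}$ and consuming the H\"older regularity; integration by parts on the sphere for $d=3$, giving $(1+\om\ell)^{-1}$), bound the fluctuation by Hoeffding, and union-bound --- is exactly the mechanism behind the cited result, and your structural observation that $\mu(\tilde\bPhi_\cB)$ only sees columns indexed by $\cB$, so that the $\ell$-separation is needed only within $\cB$, is precisely the point of the ``tailoring''.

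The one step that does not close as written is the union bound. Hoeffding at the threshold $\sqrt{2}K/\sqrt{n}$ (splitting into real and imaginary parts, each summand ranging over an interval of length $2$) gives a per-pair failure probability of order $e^{-K^2/2}$, so a union over $\binom{|\cK|}{2}\sim|\cK|^2/2$ pairs produces an aggregate failure probability of order $|\cK|^2e^{-K^2/2}$. The hypothesis $|\cK|\le(\alpha/8)e^{K^2/2}$ only makes $|\cK|\,e^{-K^2/2}$ small; substituting the allowed maximum of $|\cK|$ into $|\cK|^2e^{-K^2/2}$ yields $(\alpha^2/64)\,e^{K^2/2}$, which diverges rather than falling below $1-(1-\alpha)^2=2\alpha-\alpha^2$. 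The hypothesis is calibrated for a union over $O(|\cK|)$ events --- which is what one gets when the inner products depend only on the differences $\br_i-\br_j$ and those differences range over a set of cardinality $O(|\cK|)$, as for a lattice grid --- or, alternatively, for a per-pair tail of order $e^{-K^2}$. For an arbitrary point configuration and a union over all pairs, your bookkeeping would force either the stronger hypothesis $|\cK|\lesssim\sqrt{\alpha}\,e^{K^2/4}$ or a sharper concentration estimate, so the assertion that the stated hypothesis ``is exactly what is needed to absorb this combinatorial factor'' needs to be either verified against the actual counting in \cite{cis-simo} or repaired.
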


 Suppose  $\om\ell \geq C^2n $ for $d=2$ or
 $\om\ell\geq C\sqrt{n}$ for $d=3$ where $C$ is given
 by (\ref{120})
 and assume
 that the $s$ scatterers are separated by at least $\ell$ from
 one another.
  Then, according to Proposition \ref{prop11} and Proposition \ref{prop10}, with high probability,  for any
  continuously differentiable sampling distribution $f^{\rm s}$ 
  \[
\delta^\pm_{\cS}\leq  \delta^\pm_{\cS'} \leq 2\sqrt{2}K s/\sqrt{n}
    \]
  for all $\cS'=\cS\cup \{\br\}, \br\in \cS_\ell^c$. 
Hence we have the following analogous result to Corollary \ref{cor1}. 
\begin{corollary}
\label{cor10}  Suppose  $\om\ell \geq C^2n $ for $d=2$ or
 $\om\ell\geq C\sqrt{n}$ for $d=3$  with $C$ given by (\ref{120}). 

Under the assumptions of Proposition \ref{prop11} (for $\cB=\cS, \cS'$), $\sqrt{n}/{s}\geq 4\sqrt{2} K$  and  the NSR bound (\ref{150}),
\[
\cS\subset \Theta= 
 \lt\{ \br\in \cK: J^\ep(\br) \geq  {128\over 25} \rt\}
  \] 
 with probability at least $(1-\alpha)^2$ where $\Theta\cap \cS^c_\ell=\emptyset$.
 \end{corollary}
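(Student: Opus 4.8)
The plan is to chain together the three ingredients the corollary cites, in the following order. First I would invoke the coherence bound of Proposition \ref{prop11}, applied to $\cB=\cS$ and to each $\cB=\cS'=\cS\cup\{\br\}$ with $\br\in\cS_\ell^c$: under the stated hypotheses $|\cK|\leq\frac{\alpha}{8}e^{K^2/2}$ and the separation-by-$\ell$ assumption, with probability at least $(1-\alpha)^2$ the submatrix $\tilde\bA_\cB$ satisfies $\mu(\tilde\bPhi_\cB)<\chi^{\rm s}+\sqrt{2}K/\sqrt{n}$. Then I would use the frequency hypothesis $\om\ell\geq C^2 n$ ($d=2$) or $\om\ell\geq C\sqrt{n}$ ($d=3$) with $C$ as in (\ref{120}) to absorb $\chi^{\rm s}$: by the estimates (\ref{142''})--(\ref{143''}), this choice of $C$ forces $\chi^{\rm s}\leq \sqrt{2}K/\sqrt{n}$, so $\mu(\tilde\bPhi_\cB)\leq 2\sqrt{2}K/\sqrt{n}$. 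Feeding this into Proposition \ref{prop10} with $r=|\cS'|\leq s+1$ (and noting $r-1\leq s$) yields, simultaneously for $\cS$ and all the $\cS'$,
\[
\delta^\pm_{\cS}\leq\delta^\pm_{\cS'}\leq 2\sqrt{2}K s/\sqrt{n},
\]
which is exactly the display preceding the corollary statement. This is the same computation already carried out for (\ref{230}) in the proof environment around Corollary \ref{cor1}, now localized to the sets $\cS,\cS'$ rather than to all index sets of a given size.

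Next I would feed these RIC bounds into Theorem \ref{thm4}. The hypothesis $\sqrt{n}/s\geq 4\sqrt{2}K$ gives $\delta^\pm_\cS,\delta^\pm_{\cS'}\leq 1/2<1$, so in particular $\delta^-_{\cS'}<1$ for every $\cS'=\cS\cup\{\br\}$, $\br\in\cS_\ell^c$, which is precisely the standing assumption of Theorems \ref{thm3} and \ref{thm4}. With all the relevant $\delta^\pm$ at most $1/2$, the NSR bound (\ref{150}) (the one used in Corollary \ref{cor1}) is exactly the specialization of (\ref{80}) obtained by substituting $\delta^+_\cS=\delta^-_\cS=1/2$, with $\Delta_\ell$ in place of $\Delta$ — the substitution of $\xi_{\rm max}/\xi_{\rm min}$-dependent terms is identical to the one in the passage from (\ref{203'}) to (\ref{150}). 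Hence Theorem \ref{thm4} applies and gives $\cS\subset\Theta'$ with $\Theta'\cap\cS_\ell^c=\emptyset$, where $\Theta'$ is the set in (\ref{245}).

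The last step is to simplify the threshold. In (\ref{245}) the threshold constant is $2\bigl(1-\max_{\cS',\,\br\in\cS_\ell^c}\tfrac{\delta^-_{\cS'}(1+\delta^+_\cS)}{2+\delta^+_\cS-\delta^-_{\cS'}}\bigr)^{-2}$. Since the quantity $\tfrac{\delta^-_{\cS'}(1+\delta^+_\cS)}{2+\delta^+_\cS-\delta^-_{\cS'}}$ is increasing in each of $\delta^-_{\cS'}$ and $\delta^+_\cS$ separately (the monotonicity observation already used in the proof of Lemma \ref{prop3'}), and both are bounded by $1/2$, the maximum is at most the value at $\delta^-_{\cS'}=\delta^+_\cS=1/2$, namely $\tfrac{(1/2)(3/2)}{2+1/2-1/2}=3/8$. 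Therefore the threshold in (\ref{245}) is at most $2\,(1-3/8)^{-2}=2\cdot(8/5)^2=128/25$, so enlarging the threshold to the constant $128/25$ only shrinks $\Theta'$; the inclusion $\cS\subset\Theta'$ is preserved because on $\cS$ the imaging function obeys the much larger lower bound from (\ref{81})/(\ref{92}) (as verified in the proof of Theorem \ref{cor2.1}), and the disjointness $\Theta'\cap\cS_\ell^c=\emptyset$ is preserved a fortiori. This gives $\cS\subset\Theta=\{\br\in\cK:J^\ep(\br)\geq 128/25\}$ with $\Theta\cap\cS_\ell^c=\emptyset$, with probability at least $(1-\alpha)^2$, as claimed.

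The only genuinely delicate point is bookkeeping rather than mathematics: one must make sure the event on which the coherence bounds hold — a single event of probability $\geq(1-\alpha)^2$ coming from one application of Proposition \ref{prop11}, since the sampling directions $\hat\bs_k$ are drawn once and the bound is uniform over $\cB\subset\cK$ with $|\cB|\leq s+1$ and pairwise $\ell$-separation — simultaneously controls $\delta^\pm_\cS$ and all the $\delta^\pm_{\cS'}$, so that Theorem \ref{thm4} can be applied on that event without any further union bound. Everything else is a direct substitution into results already proved in the excerpt.
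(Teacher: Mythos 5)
Your proposal is correct and follows essentially the same route the paper takes: the paper's ``proof'' of Corollary \ref{cor10} is precisely the paragraph preceding it, which chains Proposition \ref{prop11} and Proposition \ref{prop10} to get $\delta^\pm_{\cS}\leq\delta^\pm_{\cS'}\leq 2\sqrt{2}Ks/\sqrt{n}\leq 1/2$ and then feeds this into Theorem \ref{thm4}, with the threshold $128/25$ coming from the same $\max\leq 3/8$ computation used for Corollary \ref{cor1}. Your added remarks on absorbing $\chi^{\rm s}$ via (\ref{120}), on $\Delta_\ell\geq\Delta$ making (\ref{150}) sufficient for (\ref{80}), and on the single probabilistic event covering $\cS$ and all $\cS'$ simply make explicit what the paper leaves implicit.
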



\section{Numerical tests}\label{sec:num}

In the simulations, $z_0=10000$, 
 $\lambda=0.1$ and  the search domain is $[-250, 250]^2$ with grid spacing $\ell=10$ on the transverse plane $z=0$.  The scatterers are independently and uniformly distributed  
on  the grid with amplitudes independently and uniformly
distributed   in the range  $[1, 2]$. 
The sensors are independently and uniformly  
distributed  in the domain $[-A/2,A/2]^2$ 
with various $A$. The source locations
are identical to the sensor locations. 
In the set-up,   condition (\ref{102}) is satisfied with  $A=100$.
With these parameters, 
the paraxial regime is about to set in (cf. \cite{cs-par}).
Note, however, that all the simulations are performed
with the exact Green function. 

\commentout{
We use the 
true Green
function in the computation of  scattered waves
and in the inversion step the exact Green function and its  paraxial approximation to construct the sensing matrix (for comparison). In other words, we allow model 
 mismatch between the propagation  and inversion steps.
 The degradation in performance can be seen in the figures 
 but is still manageable. 
 }

 In our simulations we have used   the Matlab  codes 
YALL1 (acronym for {\em Your ALgorithms for L1},  available at
{\tt http://www.caam.rice.edu/$~$optimization/L1/YALL1/}).
\commentout{
and 
SP (acronym for {\em Subspace Pursuit}, available at {\tt http://igorcarron.googlepages.com/cscodes}). 
}
YALL1 is  a  L1-minimization solver 
 based on the Alternating Direction Method \cite{YZ}.
 \commentout{
 while SP is a greedy algorithm which has a significantly
 lower computational complexity and,   under the condition  $\delta_{3s}<0.06$, is guaranteed to exactly recover  of $s$-sparse objects  via a finite number of iterations \cite{DM}.
}

\begin{figure}[t]
\begin{center}
\includegraphics[width=0.45\textwidth]{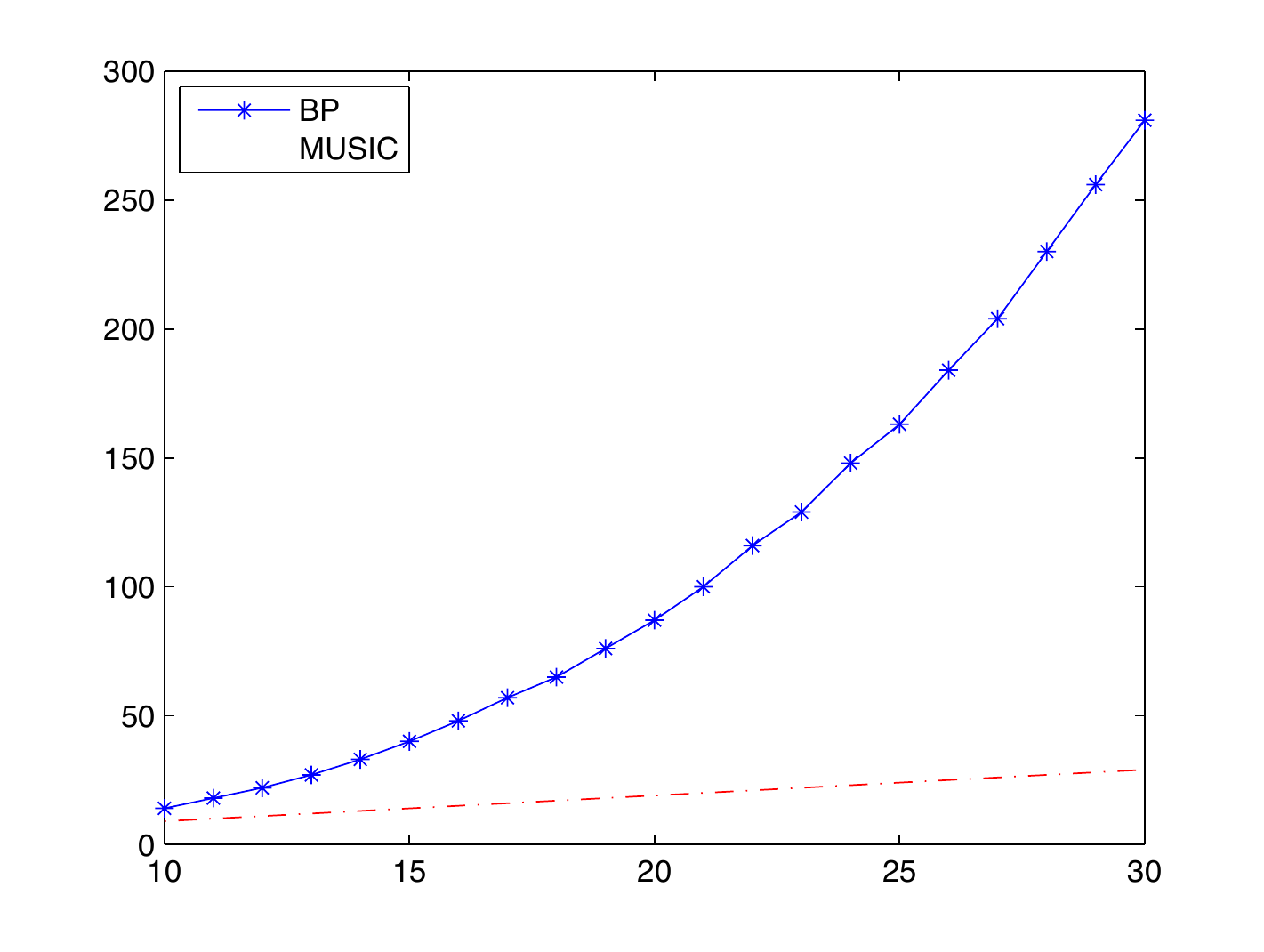}
\includegraphics[width=0.45\textwidth]{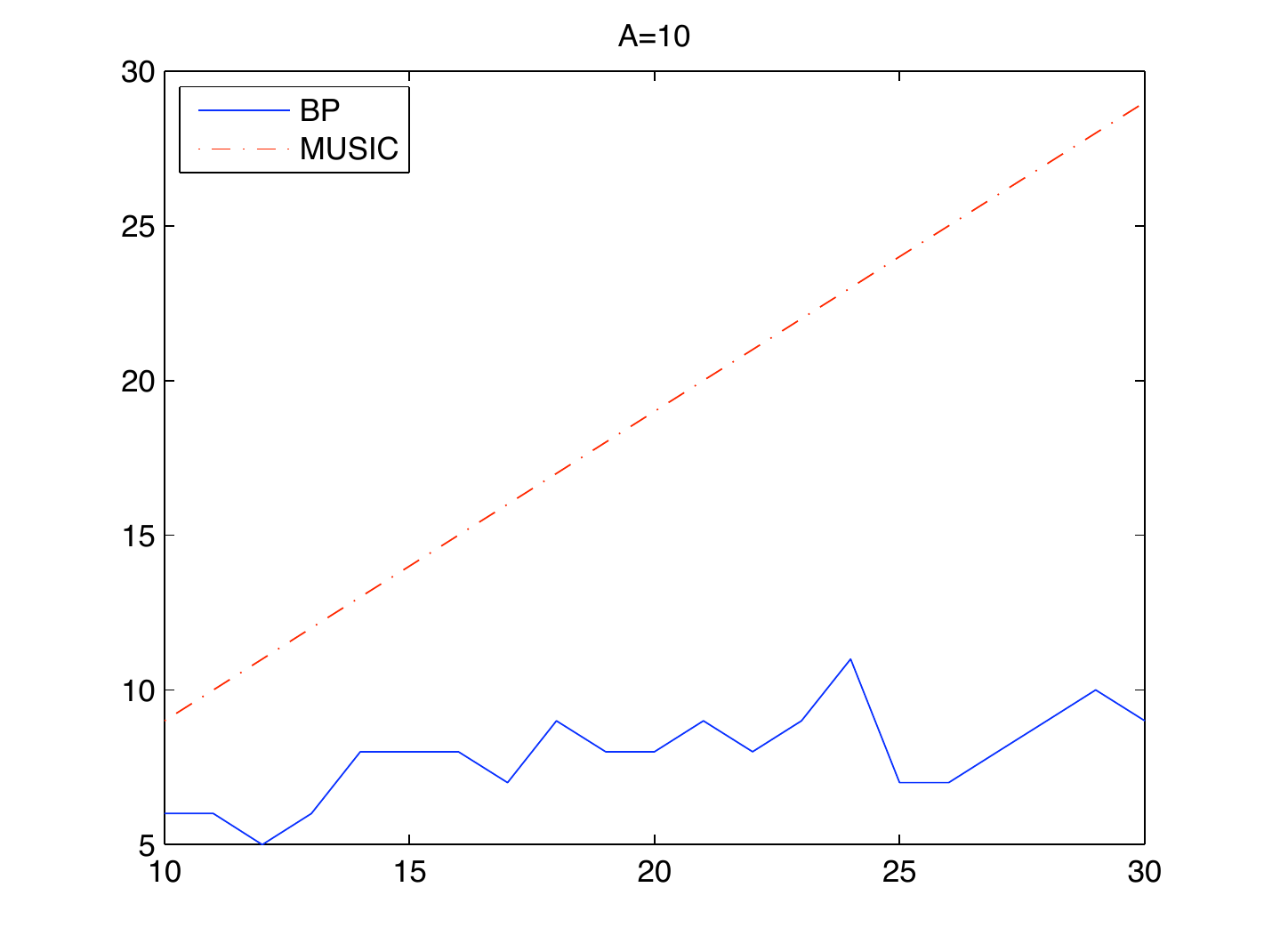}
\end{center}
\caption{Comparison of MUSIC and BP  performances, with both  using the whole data matrix: the number $s$ of recoverable scatterers versus the number of sensors $n$ 
with $A=100$ (left),  the well-resolved case,   and $A=10$ (right),
the under-resolved case. In the well-resolved  case,
BP delivers   a much better (quadratic-in-$n$) performance than MUSIC; in the under-resolved case, MUSIC outperforms BP whose performance tends to be unstable in this regime. The numbers
of recoverable scatterers by BP are calculated based on successful recovery of  at
least  90 out of 100 independent realizations of
transceivers and scatterers while the success rate of MUSIC is 100\%. }
\label{fig2}
\end{figure}
\begin{figure}[t]
\begin{center}
\includegraphics[width=0.45\textwidth]{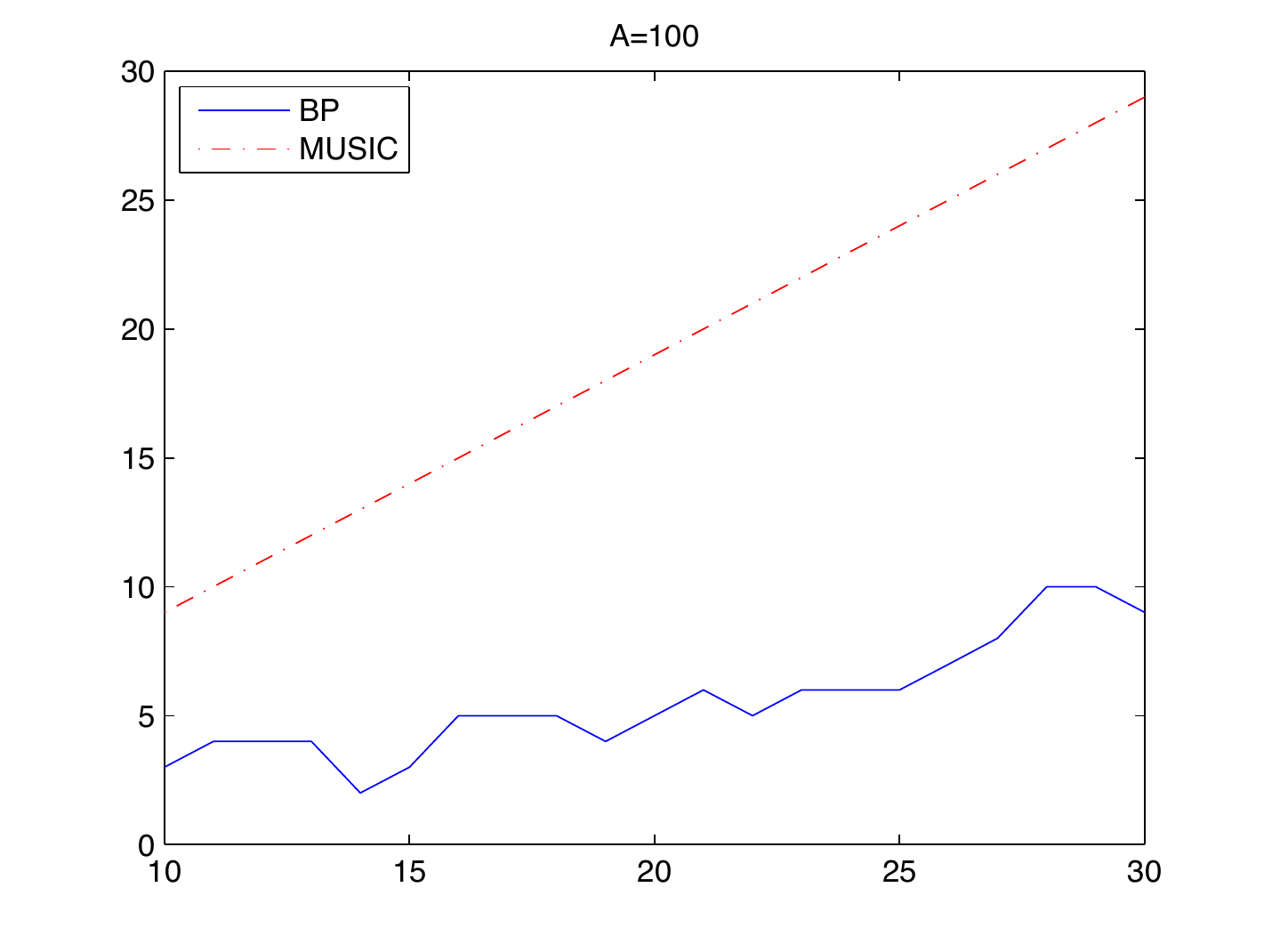}
\includegraphics[width=0.45\textwidth]{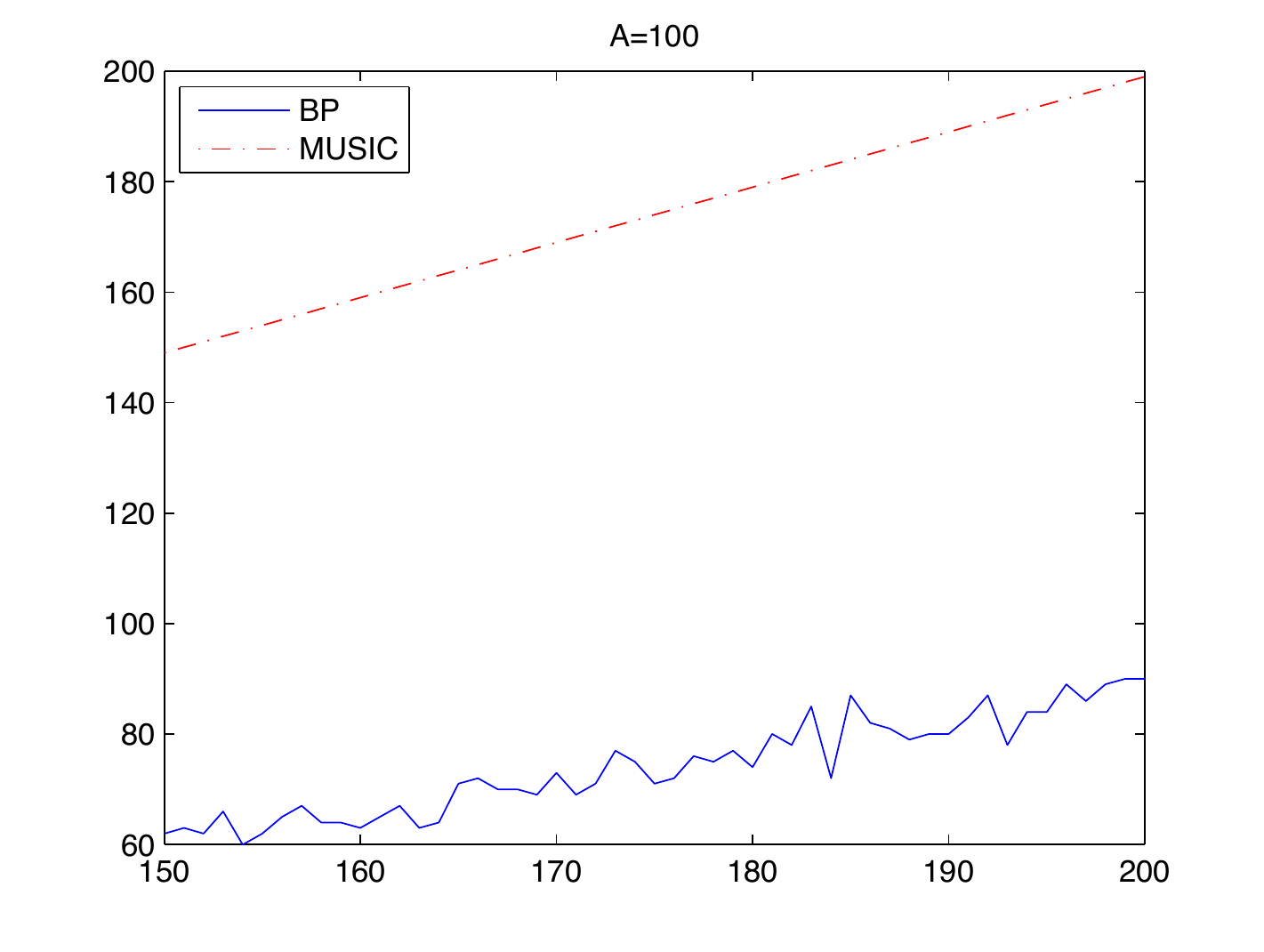}
\end{center}
\caption{Comparison of MUSIC and BP  performances  with BP employing  only {\em single}  column of the data matrix: the number $s$ of recoverable scatterers versus the number of sensors $n$ 
with $A=100$ for $n\in [10, 30] $ (left) and
$n\in [150, 200]$ (right). Both BP curves show a roughly linear behavior with slope less than that of the MUSIC curves.  }
\label{fig3}
\end{figure}

Figure \ref{fig2} compares the performances of MUSIC and
BP in the well-resolved case $A=100$  and the under-resolved
case $A=10$ where the aperture is only one tenth of
that satisfying   (\ref{102}). For Figure \ref{fig2}  BP is carried out on the data matrix  $\bY$ with
the sensors coincident with the sources, i.e. $\bA=\bPsi$. To put the problem
in the proper set-up for BP, we vectorize $\bY$ by
staking its $n$ columns and denote  the resulting $\IC^{n^2}$ vector by $Y$. We vectorize the diagonal matrix $\tilde\bX$ by listing its  $N$ diagonals as a $\IC^{N}$ vector $\tilde X$. 
The BP performance of this set-up has been 
analyzed  in \cite{cs-par}. The numbers of recoverable scatterers  shown in Figure \ref{fig2} are computed for at least $90\%$ recovery rate based
on $100$ independent realizations of
transceivers and scatterers. In both cases, MUSIC recovers $s=n-1$ scatterers with certainty. Clearly, for the well-resolved case,
 BP has a far superior performance to MUSIC. Indeed, it
can be shown that BP can recover $s=\cO(n^2)$ scatterers
with high probability in the well-resolved case \cite{cs-par}. 
The quadratic behavior is illustrated by the near-parabolic curve
in Figure \ref{fig1}(left).  For the under-resolved case, however, MUSIC outperforms BP by
a significant margin, Figure \ref{fig1}(right). 
As pointed out in Remark
\ref{rmk72}, the MUSIC algorithm 
has the superresolution capability for a sufficiently 
small noise-to-scatterer ratio.

If  only one column of $\bY$ is used in BP as discussed in Section \ref{sec:bp},
then MUSIC outperforms BP by a wide margin
even in the well-resolved case, 
Figure \ref{fig3}. 

\commentout{
\begin{figure}[t]
\begin{center}
\includegraphics[width=0.45\textwidth]{Active_SP.pdf}
\end{center}
 \caption{The empirical maximum number of recoverable 
{\em Born} scatterers  with $\xi \in [1,2]$ v.s. the number $n$
 of antennas. 
 The data for $n\in [10,30]$ in the MR plot is fitted with
  the parabola (blue-dashed curve): $-57.5400 +0.2964 n^2$. }
 \label{fig-scatt}
 \end{figure}
 }

\begin{figure}[t]
\begin{center}
\includegraphics[width=0.3\textwidth]{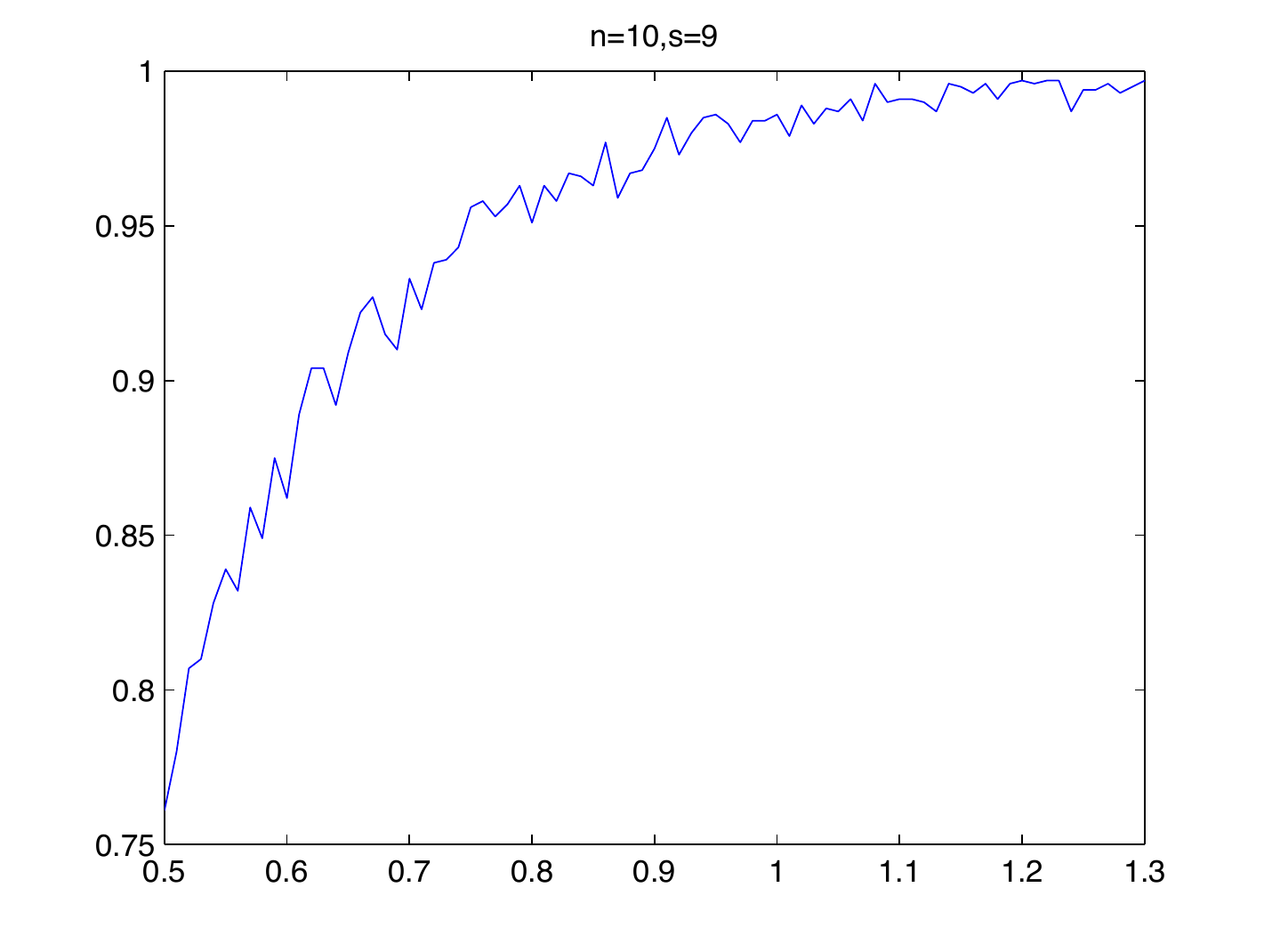}
\includegraphics[width=0.3\textwidth]{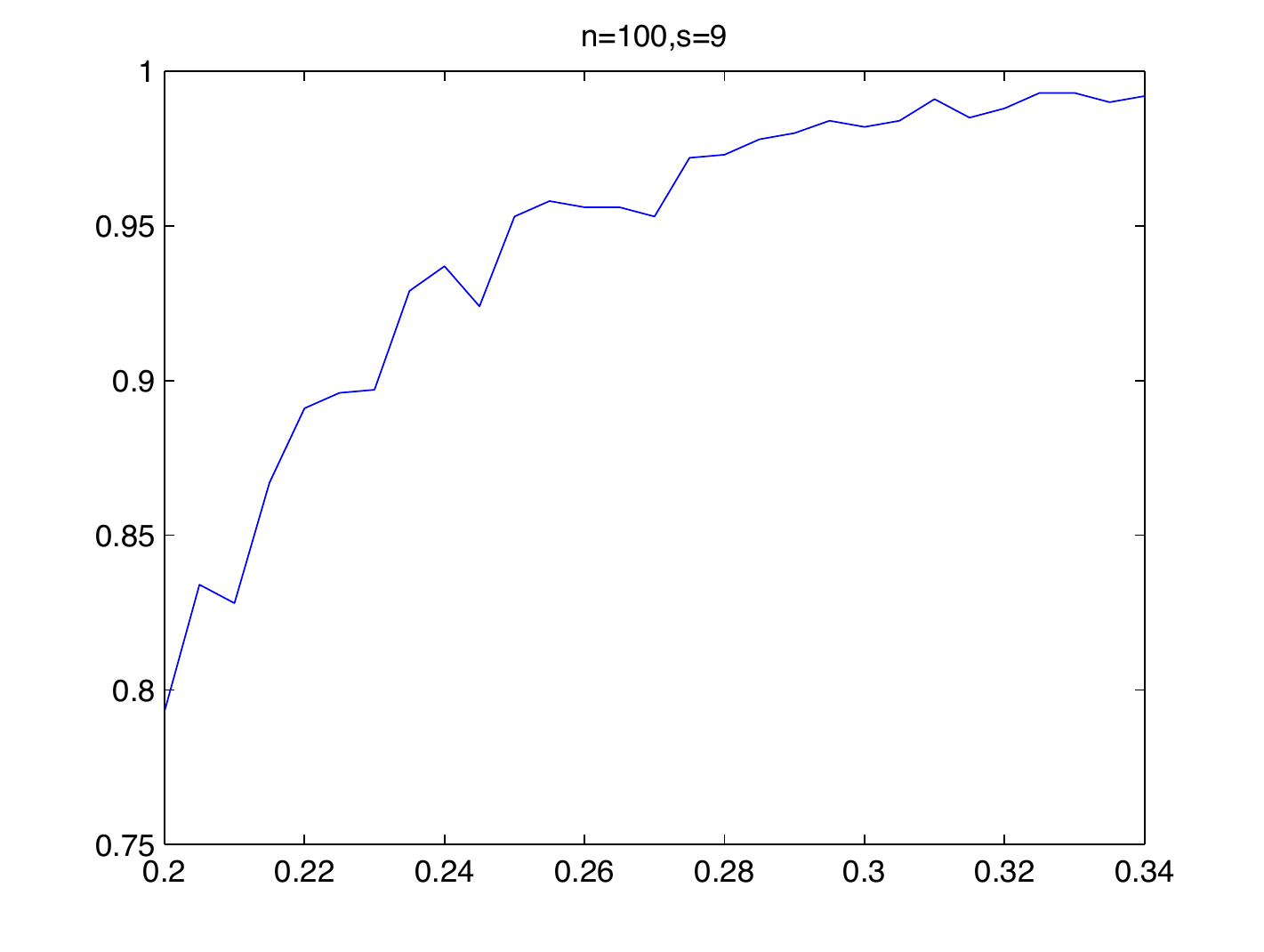}
\includegraphics[width=0.3\textwidth]{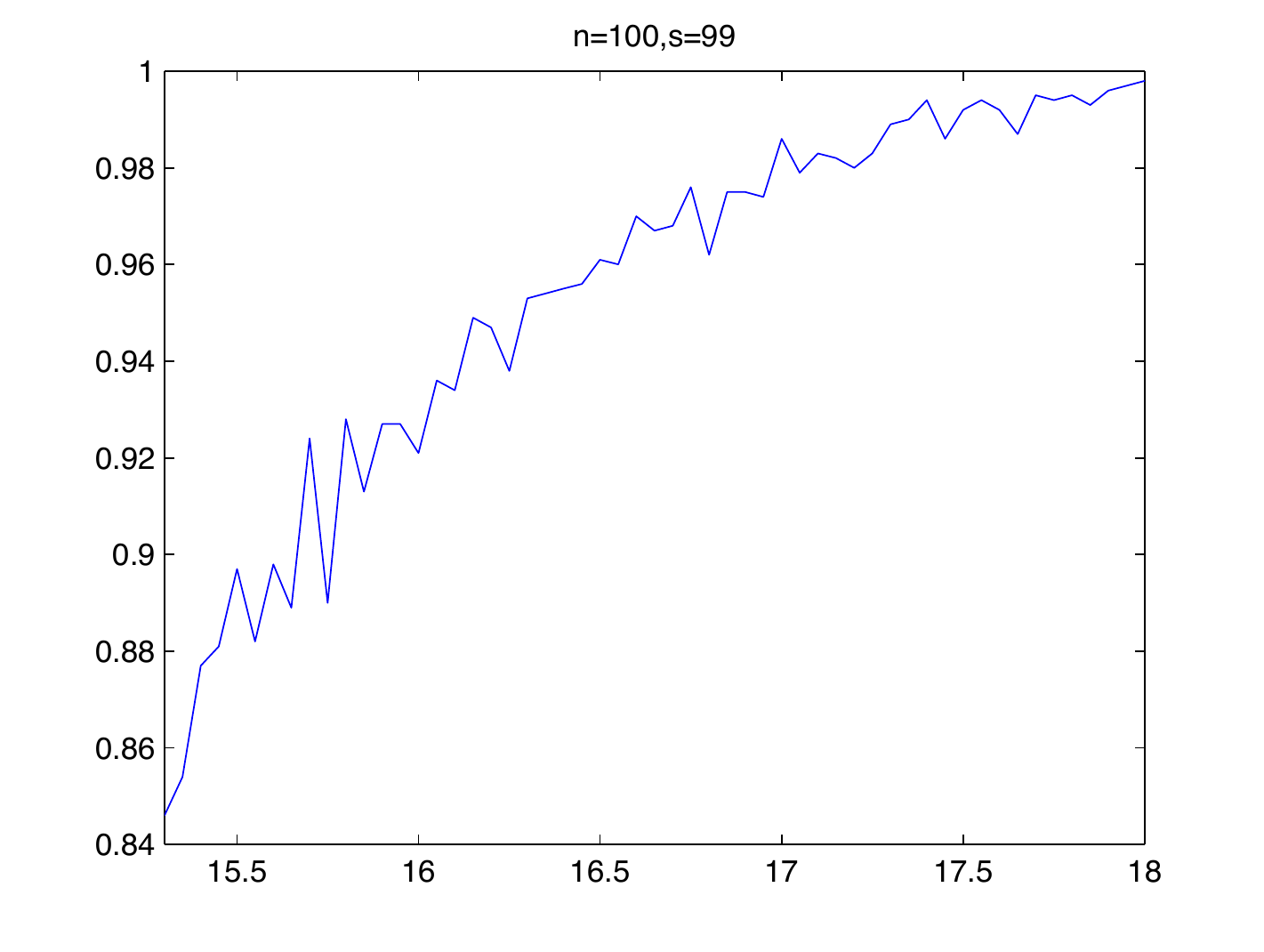}
\end{center}
\caption{Success probability of the MUSIC reconstruction versus aperture for $n=10, s=9$ (left), $n=100, s=9$ (middle)
and $n=100, s=99$ (right).  Note the different  aperture ranges for
the three plots. The success rate is calculated from 1000 trials. Increasing the number of transceivers for the
same number of scatterers reduces the   aperture
required for  the same success rate. The reduction of aperture is
about three folds (left to middle).  On the other hand, higher
number of scatterers with the same number of transceivers also demands larger aperture
for the same success rate. The increase in aperture is about
7 times (middle to right).  }
\label{fig4}
\end{figure}
 \begin{figure}[t]
\begin{center}
\includegraphics[width=0.3\textwidth]{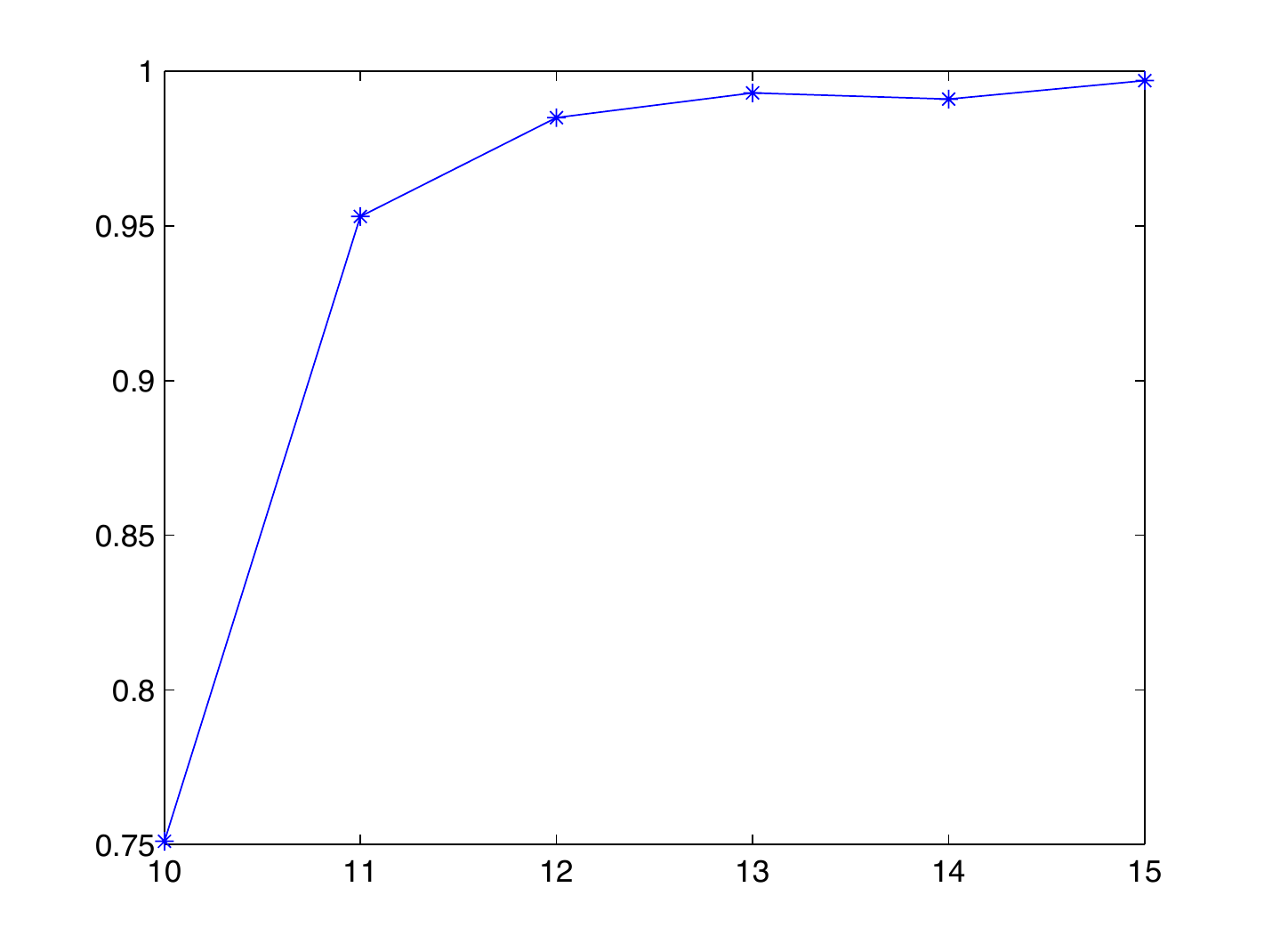}
\includegraphics[width=0.3\textwidth]{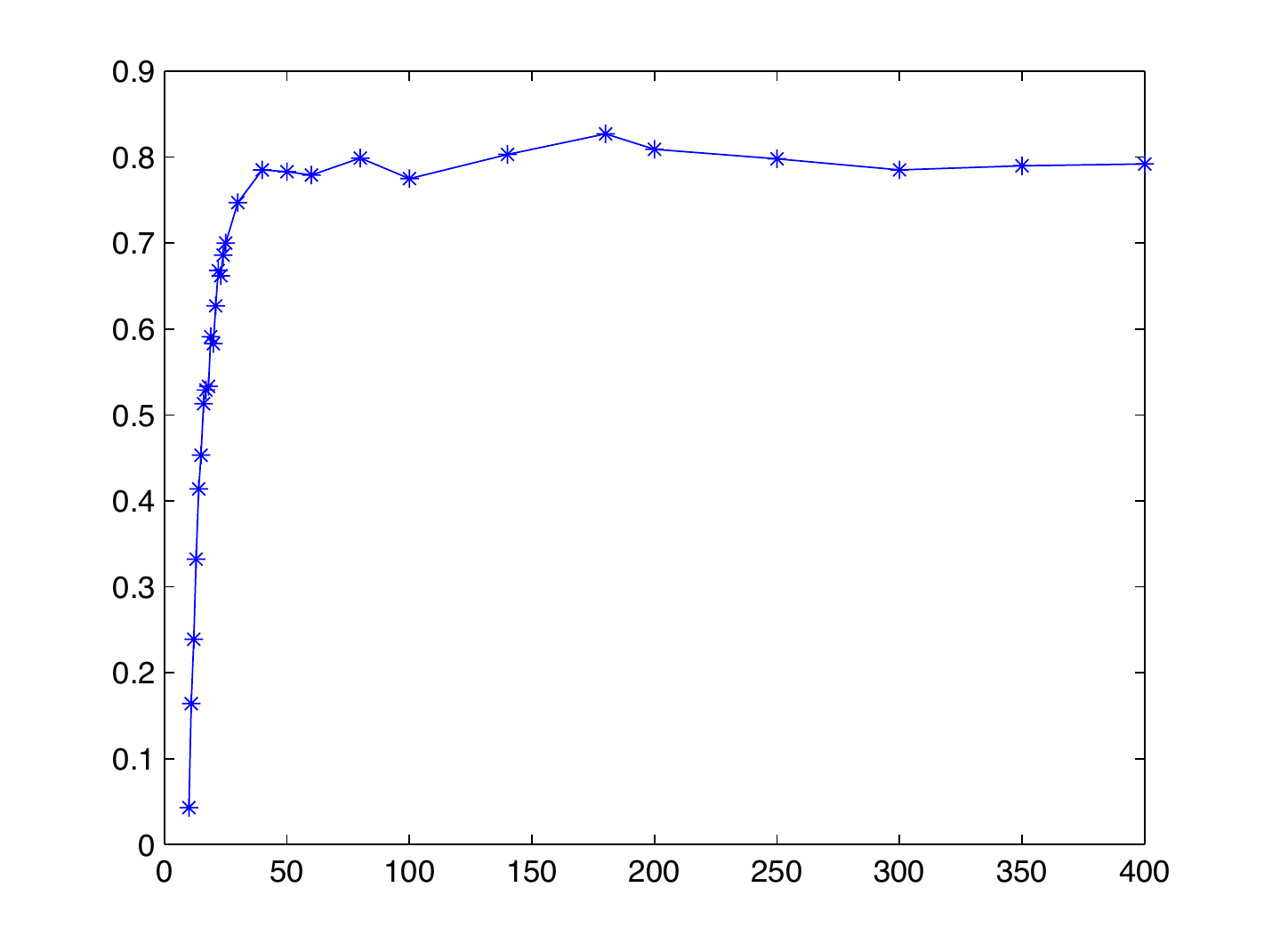}
\includegraphics[width=0.3\textwidth]{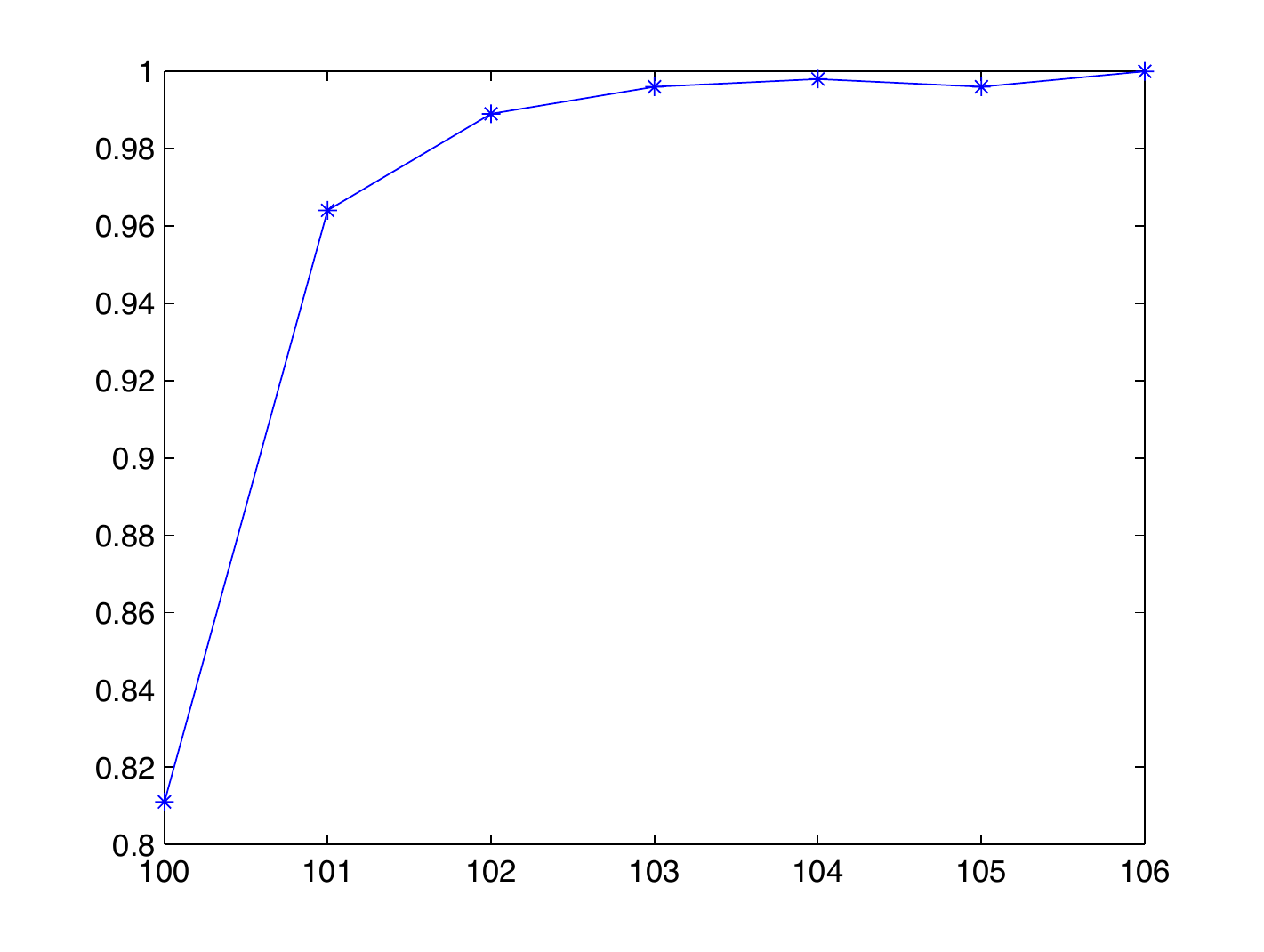}
\end{center}
\caption{Success probability of MUSIC versus the number
of transceivers with $A=0.5, s=9$ (left), 
$A=0.2, s=9$ (middle) and $A=15, s=99$ (right). The probabilities are calculated  from
1000 independent trials.}
\label{fig6}
\end{figure}

We further investigate the performance of  the MUSIC algorithm for the extremely under-resolved case when BP essentially has extremely low probabilities of exact 
recovery (even  for $s=1$). Figure \ref{fig4}
shows the success probabilities of MUSIC as a function
of aperture for various $n$ and $s$ while Figure \ref{fig6} shows the success probabilities  of MUSIC as
a function of $n$ for various $A$ and $s$.  
 The success rates are
calculated from 1000 independent realizations of
transceivers and scatterers.

 Three 
observations about Figure \ref{fig4} are in order: (i) The optimal performance of $s=n-1$ does not hold with certainty  
for  $s$ relatively large with respect to aperture (cf. left  and right panels);  (ii) 
Increasing the number of randomly selected
transceivers reduces the aperture required for the same probability of recovering the same number of scatterers (left to middle panels);
(iii) Increasing the number of randomly selected scatterers
increases the aperture required for
the same  probability  of recovery with the same
number of transceivers (middle to right panels). 

\commentout{
In light of the discussion in Section \ref{sec:grid} regarding
resolution, the fluctuating nature of the success probability   shown in Figure \ref{fig4}  may be caused by insufficiently separated scatterers which are
more likely to cluster as their number increases. The
minimum inter-scatterer   separation necessary for  exact recovery  is determined by
 the frequency and  the aperture (or the sampling distribution $f^{\rm s}$). The threshold separation increases as
 the frequency and the aperture decrease. 
 }

Likewise, the success rates increase with
the number of transceivers for any aperture
and sparsity (Figure \ref{fig6}). The most
interesting plot in Figure \ref{fig6} is the
middle panel which shows for $A=0.2, s=9$
the success rate curve becomes a plateau
after reaching $80\%$. This is {\em not}
inconsistent with the prediction of Proposition \ref{prop1}
since Proposition \ref{prop1} assumes
a {\em fixed} configuration of scatterers while
Figure \ref{fig6} is for random,  independent
realizations of scatterers. In other words
the threshold $n_0$ in Proposition \ref{prop1}  may {\em not} 
be uniformly valid
for all configurations of $s$ scatterers in the under-resolved case. On the other hand, when the aperture  increases by two and half times to $A=0.5$ and the number of transceivers  increases  to 15, the performance  becomes
uniform with respect to the scatterer configuration (left panel).

\begin{figure}[t]
\begin{center}
\includegraphics[width=0.45\textwidth]{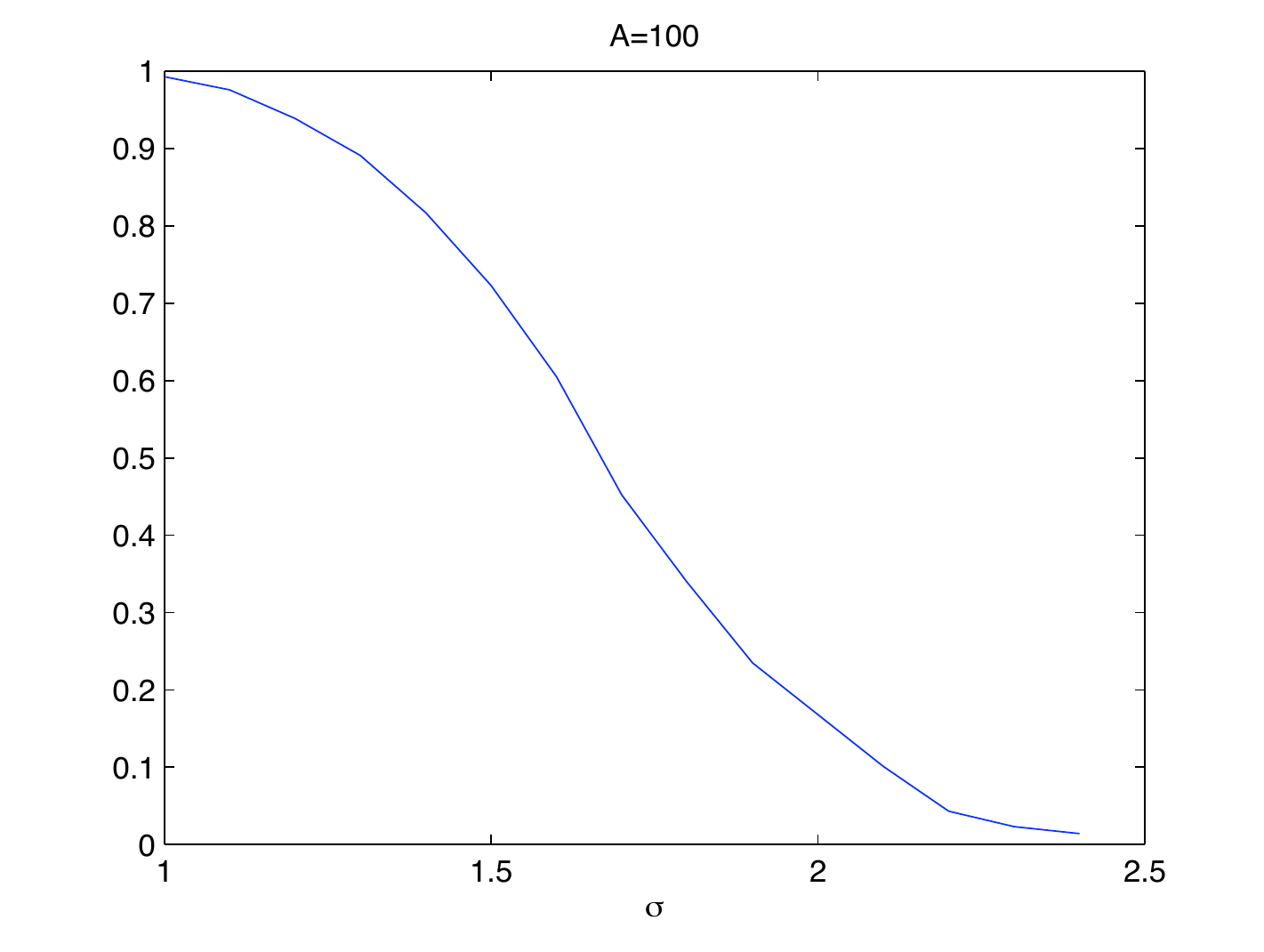}
\includegraphics[width=0.45\textwidth]{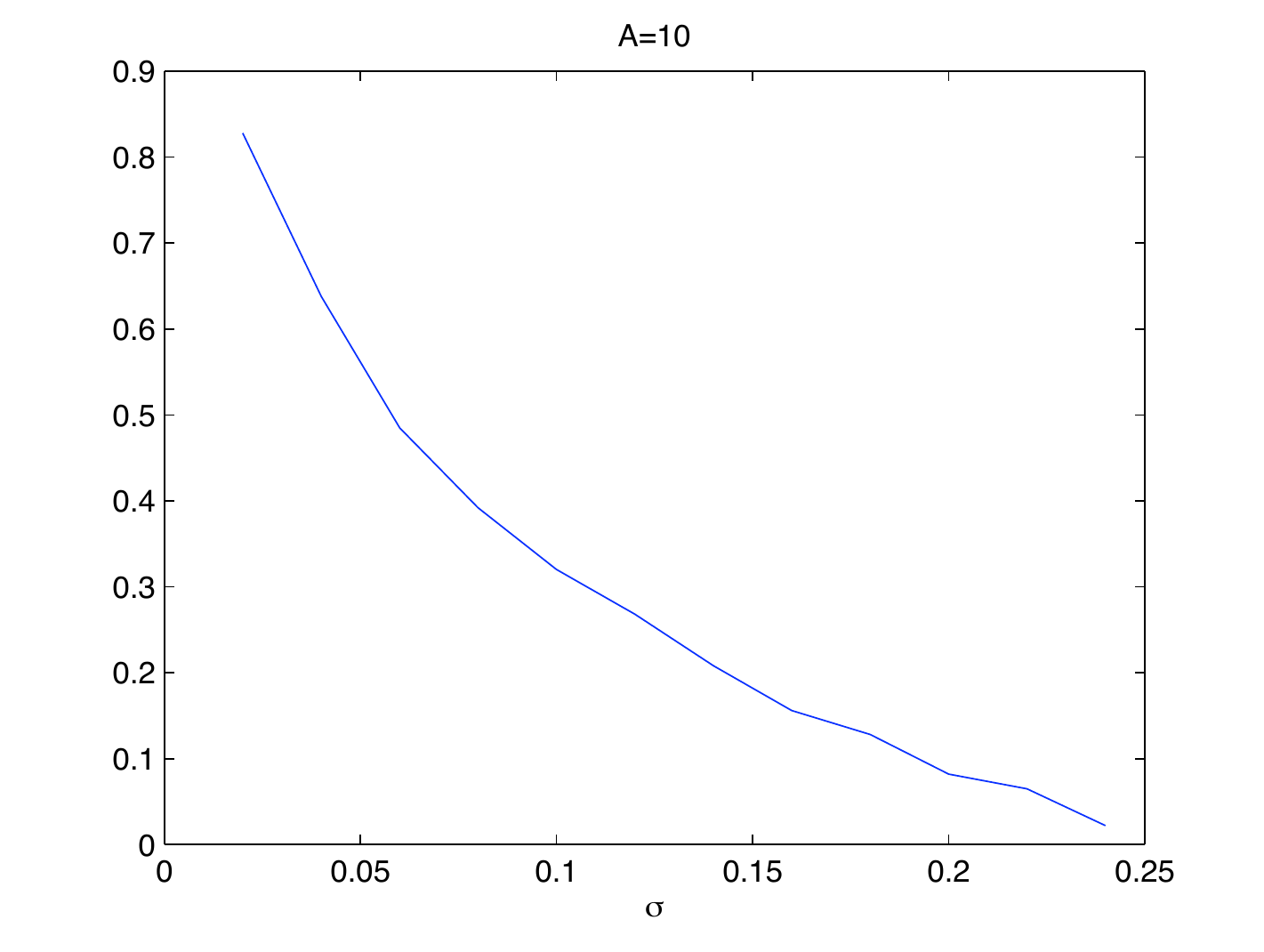}
\end{center}
\caption{Success probability of MUSIC reconstruction of $s=10$ scatterers with  $n=100$ transceivers  versus the noise level $\sigma$ in the well-resolved case $A=100$ (left) and  the
under-resolved case $A=10$ (right). The success rate is calculated
from 1000 trials.  Note the different scales of $\sigma$ in the two plots. Noise sensitivity 
increases dramatically in the under-resolved case.  }
\label{fig5}
\end{figure}

Figure \ref{fig5} shows the noise sensitivity  of MUSIC reconstruction of
10 scatterers 
with 100 transceivers. Here $n$ and $s$ are chosen so that
(\ref{rau2}) is roughly  satisfied.  We add the i.i.d. noises 
\beq
\label{301}
\sigma (e_1+ i e_2) Y_{\rm max}
\eeq
to the entries of the unperturbed data matrix
where $ e_1$ and $e_2$ are independent,  uniform r.v.s in $[-1,1]$ and
$Y_{\rm max}$  is the maximum absolute value of 
the data entries. Hence the signal-to-noise ratio (SNR) is
about $2^{-1}\sigma^{-2}$. In the well-resolved case ($A=100$) the MUSIC reconstruction can withstand 
a significant amount of noise in the data matrix. Indeed,
 at SNR $0.5$
the success rate is almost $100\%$, consistent
with the prediction of Theorem \ref{cor2.1}, and even 
at SNR $0.22$ ($\sigma=1.5$) the success rate can
be indefinitely improved by increasing the number of
transceivers (Figure \ref{fig7}, left panel).  

In the under-resolved case, however, the noise sensitivity
increases significantly. Figure \ref{fig5} (right panel) reminds us how
fragile the superior performance of MUSIC in the under-resolved case is, cf. Figure \ref{fig1} (right panel). 
Figure \ref{fig7} (right panel)  further indicates that 
in the under-resolved case the success rate may not
be  indefinitely improved by increasing the number of
transceivers in the presence of noise.

\begin{figure}[t]
\begin{center}
\includegraphics[width=0.45\textwidth]{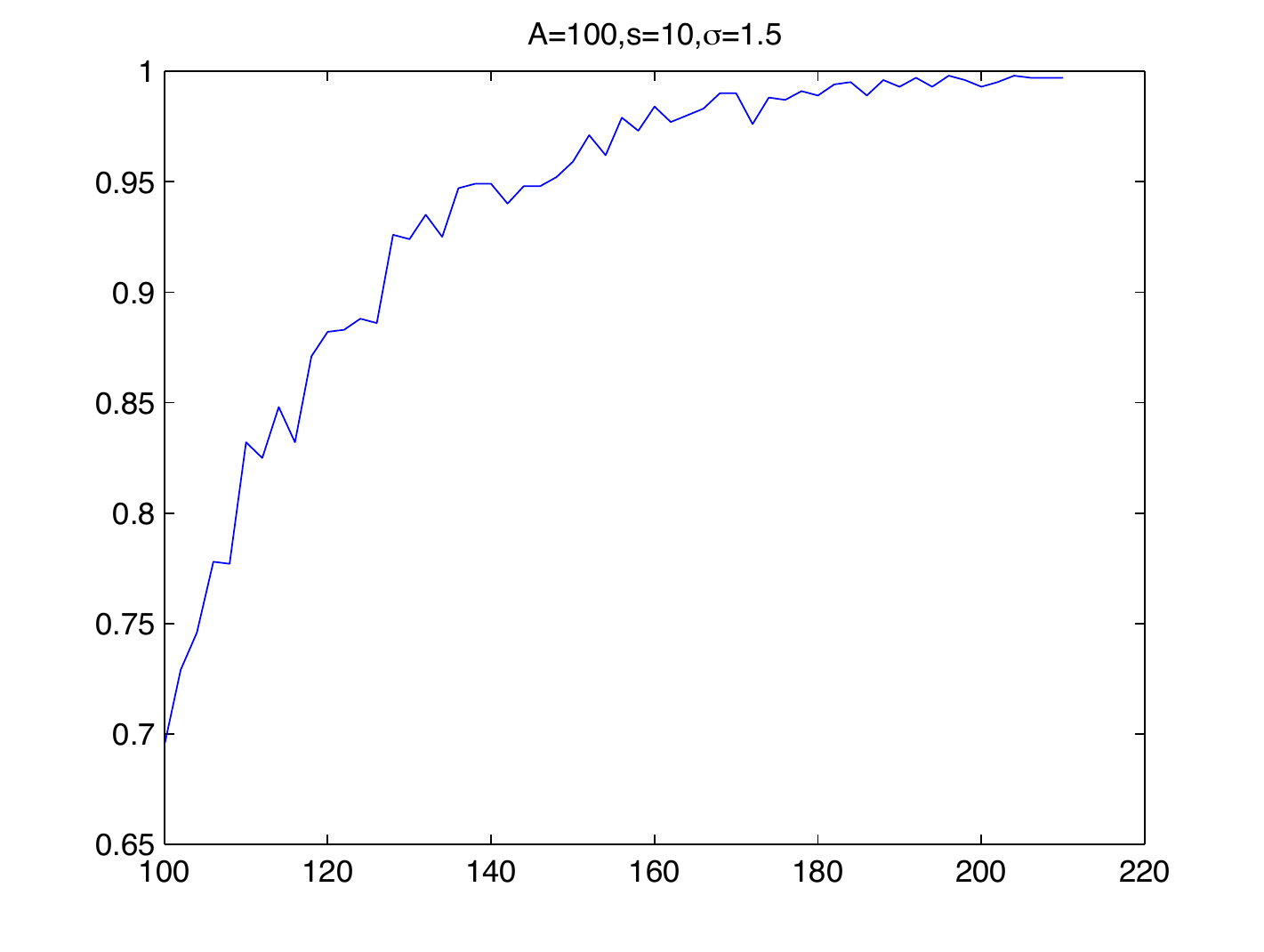}
\includegraphics[width=0.45\textwidth]{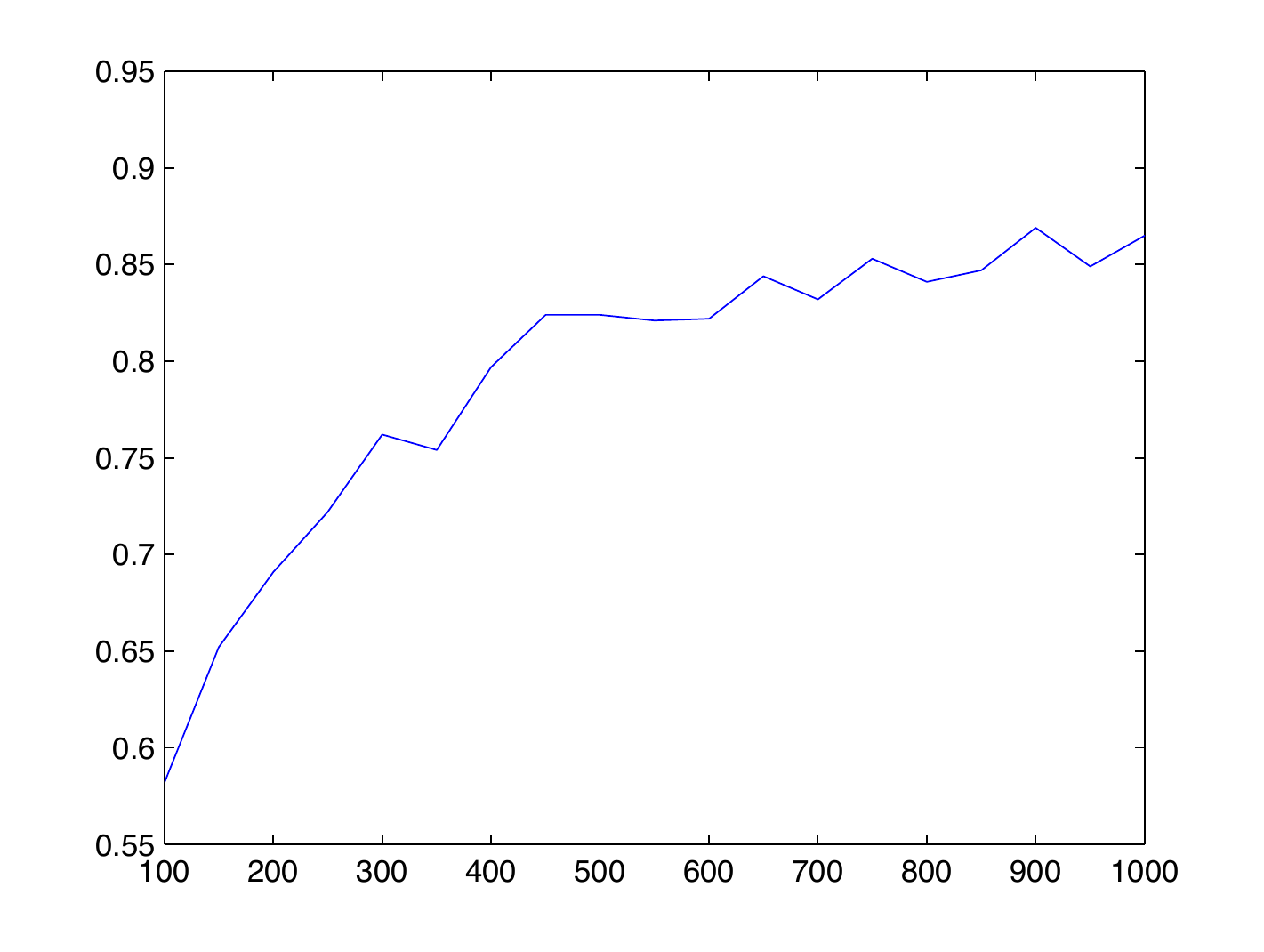}
\end{center}
\caption{Success probability of MUSIC reconstruction of $s=10$ scatterers as a function of  $n$ with $\sigma=150\%$ in the well-resolved case $A=100$ (left) and  $\sigma=5\% $ in the
under-resolved case $A=10$ (right). The success rate reaches 
the plateau of $85\%$ near  $n=1000$ in the under-resolved case.  The success rate is calculated
from 1000 trials. }
\label{fig7}
\end{figure}

\commentout{
Finally, we demonstrate the effect of
dynamic range of scatterers on the
success rate of exact recovery, cf. (\ref{300}).
For the purpose of comparison, 
we set
the noise to be
\beq
\label{301'}
0.5 e^{-5}\sigma (e_1+ie_2)
\eeq
for the dynamic ranges  $[1,2]$ and
$[1,20]$ instead of noise of the form (\ref{301}). 
  Figure \ref{fig8} shows clearly
  that as the dynamic ranges increases the success rate
  decreases accordingly. 

\begin{figure}[t]
\begin{center}
\includegraphics[width=0.45\textwidth]{dynamic-range.pdf}
\includegraphics[width=0.45\textwidth]{dyn-ran.pdf}
\end{center}
\caption{Success probability of MUSIC reconstruction of $s=10$ scatterers with  $n=100$ transceivers  versus the noise level $\sigma$ in the well-resolved case for 
the  range $[1,2]$ (solid curve) and the range
$[1,20]$ (dashed curve) of scatterer amplitudes. 
The left panel shows the result for the noise model (\ref{301})
and the right panel shows the result for the noise model
(\ref{301'}). 
The success rate decreases  
 when the dynamic range of scatterers increases.  
 }
\label{fig8}
\end{figure}
}

\commentout{
\begin{figure}[t]
\begin{center}
\includegraphics[width=0.45\textwidth]{n100s99.pdf}
\end{center}
\caption{Success probability of the MUSIC reconstruction versus aperture: In comparison to Figure \ref{fig4} (left), the recovery of $s=n-1$ requires a larger aperture for larger $s$. The case of $s=99=n-1$ requires
more than 20 times the aperture for the case of $s=9=n-1$.  }
\label{fig5}
\end{figure}
}

\commentout{
\section{The case of rank deficiency}
In addition to the flexibility of grid spacing, yet another
advantage of MUSIC is its low computational complexity. The
MUSIC algorithm amounts to essentially the singular value decomposition
of an $n\times s$ (data) matrix, yet MUSIC requires that
$\bPsi$ has full rank. 
}

\section{Conclusion}\label{sec:con}
We have developed a framework for discrete,  quantitative analysis
of the MUSIC algorithm  in the well-resolved case. 
Our approach  is based on the RIP 
(\ref{rip}) 
and its variant (\ref{rip'}) which takes into account of
the object configuration as well as sparsity. 

Our first main result is a support recovery condition (Theorem \ref{cor2.1}) that for  the {\rm NOR} obeying  (\ref{87})
MUSIC can exactly localize the objects with  noisy data.
Our result indicates the superresolution capability 
of the MUSIC algorithm when the noise level is sufficiently
low (Remark \ref{rmk72}).

We have provided a coherence approach to estimating RIC (Propositions \ref{thm1} and \ref{prop2}) 
for  general  object configuration in three dimensions
with the grid spacing 
$\ell\sim \lambda s$ and the sensor number $ n\sim {s^2}$. When the scatterers
are distributed in a transverse plane, then $\ell\sim \lambda, n\sim s$ (modulo logarithmic factors), suffices.
We have extended these results to the gridless setting
for which $\ell$ is interpreted as the minimum distance  between objects and only approximate localization up to the error $\ell$
is sought (Theorem \ref{thm4} and Corollary \ref{cor10}). 

Our comparative analysis shows that when the whole data matrix is employed in both  BP and
MUSIC, BP outperforms MUSIC in the well-resolved case
in the sense  that
the number of objects  recoverable by BP grows quadratically 
with the number of transceivers while
that by MUSIC grows linearly. The MUSIC reconstruction 
can tolerate a significant amount of noise in the data matrix
(Figures \ref{fig5},  \ref{fig7}, left panels). On the other hand, 
our numerical results show that in the under-resolved case
MUSIC outperforms BP by a wide margin (Figure \ref{fig2}, right panel, Figures \ref{fig4} and \ref{fig6}). However,
 MUSIC's superresolution effect is still unstable with respect to noise in    the data matrix (Figures \ref{fig5}, \ref{fig7},  right panels). 
 
 Finally,  even in the well-resolved case where the employment of just
 {\em one } column of the data matrix by BP guarantees 
 a probabilistic  recovery of objects numbered in  linear
 proportion to the number of sensors, analogous to
 the performance guarantee of MUSIC, 
 the latter outperforms the former in numerical simulations by a wide margin
 (Figure \ref{fig3}).

  \bigskip
 {\bf Acknowledgement.} 
 I am grateful to Mike Yan for preparing the figures
 and  the National Science Foundation  for supporting 
 the research through grant DMS 0908535. 
 I thank Wenjing, Liao for pointing out the result,
 Proposition \ref{prop2'}, which helps improve
 the results of the previous version of the manuscript.

 \appendix
  \section{Sparse extended objects}\label{sec:ext-loc}
 
  \begin{figure}[t]
\begin{center}
\includegraphics[width=0.45\textwidth]{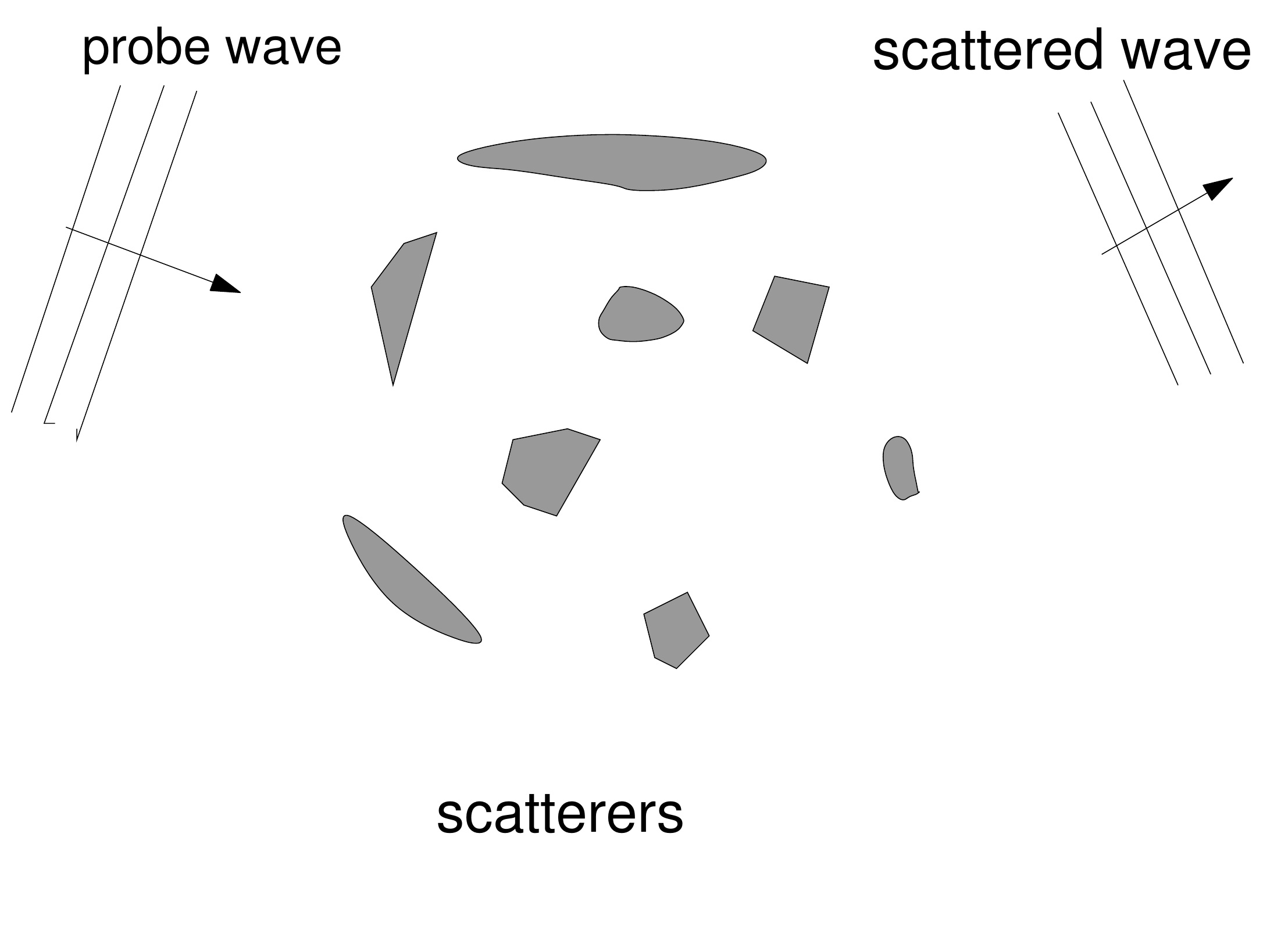}
\end{center}
\caption{Scattering by extended objects}
\label{fig1}
\end{figure}

In this appendix  we extend the MUSIC algorithm to
image sparse extended scatterers by
 interpolating from grid points.

Suppose that the object function $\xi(\br)$ 
 has  a compact support. 
Consider the discrete approximation by interpolating
from the grid points
\beqn
\label{533}
\xi_{\ell}(\br)=\ell^2\sum_{\bq\in \cI}
g(\br/\ell-\bq ) \xi(\ell\bq),\quad \cI\subset \IZ^d
\eeqn
where $g$ is some spline function and $\ell$ is
the grid spacing. Since $\xi$ has a  compact support, $\cI$ is a finite set. 
For simplicity assume $d=2$ and let  $\cI$ be the finite lattice
\[
\cI=\{\bq=(q_1, q_2): q_1, q_2= 1,...,\sqrt{N}\}
\]
of total cardinality $N$ and $\cK=\ell \cI$. 
 In the case of a characteristic function $g$,
$\xi_l$ is a piece-wise constant object function. We will neglect the discretization error and assume $\xi_\ell(\br)=\xi(\br)$ in the subsequent analysis.

The data matrix $\bY\in \IC^{n\times m}$ is given by
\beq
\label{12} Y_{k,l}&\sim &{\ell^d} \sum_{\bq\in I} \xi(\ell\bq)
\int_{\IR^d} g(\br'/\ell-\bq) e^{i\om(\bd_l-\hat\bs_k)\cdot \br'} d\br',\\
&=&{\ell^d(2\pi)^{d/2}} \hat g(\ell\om(\bd_l-\hat\bs_k)) \sum_{\bq\in I}
\xi(\ell\bq)
 e^{i\om\ell(\bd_l-\hat\bs_k)\cdot \bq}.\nn
\eeq
As before  we maintain the option of normalizing $\bY$. 
Suppose $\{\xi_j=\xi(\ell\bq_j): j=1,...,s\} $ is the set
of nonvanishing $\xi (\ell \bq)$ and let $\bX=\hbox{\rm diag}(\xi_j)\in \IC^{s\times s}$. 
Dividing (\ref{12})  by ${\ell^d(2\pi)^{d/2}} \hat g(\ell\om(\bd_l-\hat\bs_k))/2 $ we can write the data matrix 
in the form (\ref{31}) with  the sensing  matrices
\beqn
\label{13'}
\Phi_{kj}
&=& {1\over \sqrt{n}}e^{-i\om \ell\hat\bs_k\cdot\bq} \in \IC^{n\times s}\\
\Psi_{lj}&=& {1\over \sqrt{n}}e^{-i\om\ell \bd_l\cdot \bq}\in \IC^{n\times s} \label{14'}
\eeqn
where $j=(q_1-1)\sqrt{N}+q_2$.

\commentout{
\section{Extension by  frames}
For certain types of objects, the  representation  in terms of
certain  frames  may result in sparse coefficients.

Let 
\[
g_{\bp,\bq}(\br)=e^{if\bp\cdot \br} g(\br-\bq\ell)
\]
and assume that $g_{\bp,\bq}$ constitute a frame. Then
any function $\xi\in L^2$ can be written as
\[
\xi(\br)=\sum_{\bp, \bq\in \IZ^2}
\xi_{\bp,\bq} g_{\bp,\bq}.
\]
The most well-known example is the Gabor frame
generated by
\[
g(\br)=\pi^{-1/2} e^{-|\br|^2/2}
\]
with $f\ell\leq 2\pi$ \cite{JMR}. 
The data matrix $\bY$ has the $(k,l)$-entry
\beq
\label{15} Y_{k,l}&=&2\pi \sum_{\bp,\bq\in \IZ^2}\xi_{\bp,\bq} \hat g(\om(\bd_l-\hat\bs_k) -f\bp)  e^{-i\om \ell(\bd_l-\hat\bs_k)\cdot \bq}. \nn
\eeq
}

In other words, the scattering  analysis for both
point and extended scatterers 
leads to  the same  type of Fourier-like matrices. 

\section{Proof of Theorem \ref{prop4}}\label{app:B}
The following lemma differs  from 
the original version in \cite{Can}. 
\begin{lemma} We have
\[
\lt|\Re\lan \tilde\bA Z,\tilde\bA Z'\ran\rt|
\leq {1\over 2}\lt(\delta^+_{s+s'} +\delta_{s+s'}^-\rt)\|Z\|_2\|Z'\|_2
\]
for all $Z, Z'$ supported on disjoint subsets $T,T'\subset \{1,...,m\} $ with $|S|\leq s, |S'|\leq s'.$
\end{lemma}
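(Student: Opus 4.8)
The plan is to reduce the claim to the two-sided RIP bound (\ref{rip}) applied to the $(s+s')$-sparse vectors $Z\pm Z'$, combined with the polarization identity for the real part of an inner product. First I would dispose of the trivial cases $Z=0$ or $Z'=0$, and then normalize: set $u=Z/\|Z\|_2$ and $v=Z'/\|Z'\|_2$. Since the left-hand side $|\Re\lan\tilde\bA Z,\tilde\bA Z'\ran|$ is bilinear in $(Z,Z')$ and the right-hand side is a constant times $\|Z\|_2\|Z'\|_2$, both sides are homogeneous of degree one in $\|Z\|_2$ and in $\|Z'\|_2$ separately, so it suffices to prove
\[
\lt|\Re\lan\tilde\bA u,\tilde\bA v\ran\rt|\leq \tfrac12\lt(\delta^+_{s+s'}+\delta^-_{s+s'}\rt)
\]
when $\|u\|_2=\|v\|_2=1$ and $u,v$ are supported on the disjoint sets $T,T'$ with $|T|\leq s$, $|T'|\leq s'$.

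Next, because $u$ and $v$ have disjoint supports, the vectors $u\pm v$ are supported on $T\cup T'$, a set of cardinality at most $s+s'$, and by the Pythagorean identity $\|u\pm v\|_2^2=\|u\|_2^2+\|v\|_2^2=2$. Applying the RIP (\ref{rip}) at sparsity level $s+s'$ to $u+v$ and to $u-v$ gives
\[
\|\tilde\bA(u+v)\|_2^2\leq 2(1+\delta^+_{s+s'}),\qquad \|\tilde\bA(u-v)\|_2^2\geq 2(1-\delta^-_{s+s'}).
\]
Now I invoke the polarization identity $\Re\lan\tilde\bA u,\tilde\bA v\ran=\tfrac14\lt(\|\tilde\bA(u+v)\|_2^2-\|\tilde\bA(u-v)\|_2^2\rt)$, which holds verbatim in the complex case since $\|a+b\|_2^2=\|a\|_2^2+\|b\|_2^2+2\Re\lan a,b\ran$. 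This yields $\Re\lan\tilde\bA u,\tilde\bA v\ran\leq \tfrac14\lt(2(1+\delta^+_{s+s'})-2(1-\delta^-_{s+s'})\rt)=\tfrac12(\delta^+_{s+s'}+\delta^-_{s+s'})$. Replacing $v$ by $-v$ (still unit norm, still disjointly supported from $u$) gives the matching lower bound $\Re\lan\tilde\bA u,\tilde\bA v\ran\geq -\tfrac12(\delta^+_{s+s'}+\delta^-_{s+s'})$, and combining the two proves the normalized inequality; undoing the normalization then gives the lemma.

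I do not expect any genuine obstacle here — the argument is the complex analogue of the corresponding near-orthogonality lemma in \cite{Can}, and the only points needing care are bookkeeping ones. First, one must place the asymmetric constants correctly: the \emph{upper} RIP bound contributes $\delta^+$ through $\|\tilde\bA(u+v)\|_2^2$ while the \emph{lower} RIP bound contributes $\delta^-$ through $\|\tilde\bA(u-v)\|_2^2$, so the familiar symmetric estimate $\delta_{s+s'}$ is replaced by the average $\tfrac12(\delta^++\delta^-)$, which is exactly why this lemma ``differs from the original version in \cite{Can}''. Second, one should note that in the complex setting only $\Re\lan\cdot,\cdot\ran$ enters both the polarization identity and the norm expansion, so nothing about the argument is lost; this single adaptation is all that is needed before the rest of the Candès argument for Theorem \ref{prop4} carries over unchanged.
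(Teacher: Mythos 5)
Your proposal is correct and follows essentially the same route as the paper: normalize to unit vectors, use disjoint supports to get $\|Z\pm Z'\|_2^2=2$, apply the two-sided RIP at sparsity $s+s'$, and conclude via the polarization identity, with the upper constant $\delta^+$ entering through $\|\tilde\bA(Z+Z')\|_2^2$ and the lower constant $\delta^-$ through $\|\tilde\bA(Z-Z')\|_2^2$. Your extra step of replacing $v$ by $-v$ to get the two-sided bound is just a slightly more explicit version of the paper's use of the absolute value in the parallelogram identity.
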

\begin{proof} Without loss of generality, suppose $Z,Z'$ are unit vectors.  Since $Z\perp Z'$,  $\|Z\pm Z'\|^2_2=2.$ 
Hence we have from the RIP  (\ref{rip})
\beq
2(1-\delta^-_{s+s'})\leq
\|\tilde\bA(Z\pm Z')\|_2^2\leq 2(1+\delta^+_{s+s'})\label{a.1}
\eeq
By the parallelogram identity and (\ref{a.1}) 
\[
\lt|\Re\lan \tilde\bA Z, \tilde\bA Z'\ran\rt|=
{1\over 4}\lt|\|\tilde\bA Z+\tilde\bA Z'\|_2^2-\|\tilde\bA Z-\tilde\bA Z'\|_2^2\rt|
\leq {1\over 2}\lt(\delta^+_{s+s'} +\delta_{s+s'}^-\rt) \]
which proves the lemma. 
\label{lemma1}
\end{proof}

By the triangle inequality and the fact that $Z$ is in the  feasible
set we have
\beq
\label{a7}
\|\tilde\bA(\hat Z-Z)\|_2\leq \|\tilde\bA \hat Z-Y\|_2+\|Y-\tilde\bA Z\|_2\leq
2\ep.
\eeq
Set $\hat Z=Z+\Delta$ and decompose $\Delta$ into a sum of vectors
$\Delta_{S_0}, \Delta_{S_1}, \Delta_{S_2},...,$ each of sparsity at most $s$.
Here $S_0$ corresponds to the locations of the $s$ largest coefficients of $Z$; $S_1$ to the locations of the $s$ largest
coefficients of $\Delta_{S_0^c}$;
$S_2$ to the locations of the next $s$ largest coefficients of
$\Delta_{S_0^c}$, and so on. 

{\bf Step (i)}. For $j\geq 2$,
\beqn
\|\Delta_{S_j}\|_2\leq s^{1/2} \|\Delta_{S_j}\|_\infty\leq s^{-1/2}
\|\Delta_{S_{j-1}}\|_2
\eeqn
and hence
\beq
\label{a6}
\sum_{j\geq 2} \|\Delta_{S_j}\|_2\leq s^{-1/2} \sum_{j\geq 1}
\|\Delta_{S_j}\|_1\leq s^{-1/2} \|\Delta_{S_0^c}\|_1.
\eeq
This yields
\beq
\label{a.2}
\|\Delta_{(S_0\cup S_1)^c}\|_2=\|\sum_{j\geq 2} \Delta_{S_j}\|_2
\leq \sum_{j\geq 2} \|\Delta_{S_j}\|_2\leq s^{-1/2} \|\Delta_{S_0^c}\|_1.
\eeq
Also we have
\beqn
\|Z\|_1\geq \|\hat Z\|_1=\|Z_{S_0}+\Delta_{S_0}\|_1
+\|Z_{S_0^c}+\Delta_{S_0^c}\|_1\geq
\|Z_{S_0}\|_1-\|\Delta_{S_0}\|_1-\|Z_{S_0^c}\|_1
+\|\Delta_{S_0^c}\|_1
\eeqn
which  implies
\beq
\label{a.3}
\|\Delta_{S_0^c}\|_1\leq 2\|Z_{S_0^c}\|_1+\|\Delta_{S_0}\|_1.
\eeq
Note that $\|Z_{S_0^c}\|_1=\|Z-Z^{(s)}\|_1$ by definition.
Applying (\ref{a.2}), (\ref{a.3}) and the Cauchy-Schwarz 
inequality gives
\beq
\label{a4}
\|\Delta_{(S_0\cup S_1)^c}\|_2\leq \|\Delta_{S_0}\|_2+2e_0
\eeq
where $e_0\equiv s^{-1/2} \|Z-Z^{(s)}\|_1$.

{\bf Step (ii).} Observe
\beq\label{38}
\|\tilde\bA \Delta_{S_0\cup S_1}\|_2^2
&=&\lan \tilde\bA \Delta_{S_0\cup S_1},\tilde \bA \Delta\ran
-\lan \tilde\bA \Delta_{S_0\cup S_1},\sum_{j\geq 2}
\tilde\bA \Delta_{S_j}\ran\\
&=&\Re \lan\tilde \bA \Delta_{S_0\cup S_1}, \tilde\bA \Delta\ran
-\sum_{j\geq 2}\Re \lan \tilde\bA \Delta_{S_0\cup S_1},
\tilde\bA \Delta_{S_j}\ran\nn\\
&=&\Re \lan \tilde\bA \Delta_{S_0\cup S_1},\tilde \bA \Delta\ran
-\sum_{j\geq 2}\lt[\Re \lan\tilde \bA \Delta_{S_0},
\tilde\bA \Delta_{S_j}\ran+ \Re \lan \tilde\bA \Delta_{S_1},
\tilde \bA \Delta_{S_j}\ran\rt].\nn
\eeq
This calculation  differ slightly  from  the corresponding calculation in \cite{Can}. 

From (\ref{a7}) and the RIP (\ref{rip})  it follows that
\[
\|\lan \tilde\bA \Delta_{S_0\cup S_1}\ran\|\leq \|\tilde\bA \Delta_{S_0\cup S_1}\|_2
\|\tilde \bA \Delta\|_2\leq 2\ep \sqrt{1+\delta^+_{2s}}\|\Delta_{S_0\cup S_1}\|_2.
\]
Moreover, it follows from Lemma \ref{lemma1}
that
\beqn
\lt|\Re\lan\tilde \bA \Delta_{S_0}, \tilde\bA \Delta_{S_j}\ran \rt|
&\leq &{1\over 2}\lt(\delta^+_{2s} +\delta^-_{2s}\rt)\|\Delta_{S_0}\|_2 \|\Delta_{S_j}\|_2\\
\lt|\Re\lan \tilde \bA \Delta_{S_1}, \tilde \bA \Delta_{S_j}\ran \rt|
&\leq &{1\over 2}\lt(\delta^+_{2s} +\delta^-_{2s}\rt) \|\Delta_{S_0}\|_2 \|\Delta_{S_j}\|_2
\eeqn
for $j\geq 2$. Since $S_0$ and $S_1$ are disjoint
\[
\|\Delta_{S_0}\|_2+\|\Delta_{S_1}\|_2\leq \sqrt{2}
\sqrt{\|\Delta_{S_0}\|^2_2+\|\Delta_{S_1}\|^2_2}
=\sqrt{2}  \|\Delta_{S_0\cup S_1}\|_2.
\]
Also 
\[
(1-\delta^-_{2s}) \|\Delta_{S_0\cup S_1}\|_2^2
\leq \|\tilde \bA \Delta_{S_0\cup S_1}\|_2^2\leq
\|\Delta_{S_0\cup S_1}\|_2\lt(2\ep \sqrt{1+\delta^+_{2s}}
+{1\over \sqrt{2}}\lt(\delta^+_{2s} +\delta^-_{2s}\rt) \sum_{j\geq 2}
\|\Delta_{S_j}\|_2\rt). 
\]
Therefore from (\ref{a6}) we obtain
\beqn
\label{a8}
\|\Delta_{S_0\cup S_1}\|_2\leq \alpha \ep +\rho s^{-1/2}
\|\Delta_{S_0^c}\|_1,\quad \alpha={2\sqrt{1+\delta^+_{2s}}\over
1-\delta^-_{2s}},\quad \rho={{1\over \sqrt{2}}\lt(\delta^+_{2s} +\delta^-_{2s}\rt)\over
1-\delta^-_{2s}}
\eeqn
and moreover by (\ref{a.3}) 
\[
\|\Delta_{S_0\cup S_1}\|_2\leq \alpha \ep +\rho 
\|\Delta_{S_0^c}\|_2+2\rho e_0.
\]
Namely,
\[
\|\Delta_{S_0\cup S_1}\|_2\leq (1-\rho)^{-1}(\alpha \ep+2\rho e_0)
\]
if (\ref{ric}) holds. 

Finally,
\beqn
\|\Delta\|_2\leq \|\Delta_{S_0\cup S_1}\|_2+\|\Delta_{(S_0\cup S_1)^c}\|_2
\leq 2\|\Delta_{S_0\cup S_1}\|_2+2 e_0\leq
2(1-\rho)^{-1} (\alpha\ep+ (1+\rho)e_0)
\eeqn
which is what we set out to show.

\commentout{
A useful by-product of the proof is the following. 
\begin{lemma} 
Let $\Delta$ be any null vector of $\bA$ and let $S_0$ be any set of cardinality $s$. Then
\[
\|\Delta_{S_0}\|_1\leq \rho \|\Delta_{S_0^c}\|_1.
\]
\end{lemma}
\begin{proof} Consider $\ep=0$ and $\hat Z=Z+\Delta$. We have
\[
\|\Delta_{S_0}\|_1\leq s^{1/2}\|\Delta_T{S_0}\|_2\leq s^{1/2}
\|\Delta_{S_0\cup S_1}\|_1.
\]
Now the desired result follows from  (\ref{a8}).
\end{proof}
}

\end{document}